\renewcommand{\E}{\mathbb{E}}
\renewcommand{\N}{\mathbb{N}}
\renewcommand{\P}{\mathbb{P}}
\newcommand{\R}{\mathbb{R}}
\newcommand{\Var}{\operatorname{Var}}
\newcommand{\1}{\mathbbm{1}}
\newcommand{\half}{\mbox{$\frac 1 2 $}}
\shorttitle{Limit Theorems for the Zig-Zag Process} 
\begin{document}

\title{Limit Theorems for the Zig-Zag Process} 

\authorone[University of Warwick]{Joris Bierkens} 
\authortwo[Imperial College]{Andrew Duncan}

\addressone{Delft Institute of Applied Mathematics, Mekelweg 4, 2628 CD, Delft, Netherlands} 
\addresstwo{Department of Mathematics, University of Sussex, Brighton BN1 9QH, United Kingdom}



\begin{abstract}
Markov chain Monte Carlo methods provide an essential tool in statistics for sampling from complex probability distributions.  While the standard approach to MCMC involves constructing discrete-time reversible Markov chains whose transition kernel is obtained via the Metropolis-Hastings algorithm, there has been recent interest in alternative schemes based on piecewise deterministic Markov processes (PDMPs).  One such approach is based on the Zig-Zag process, introduced in \cite{bierkensroberts2015}, which proved to provide a highly scalable sampling scheme for sampling in the big data regime \cite{BierkensFearnheadRoberts2016}. In this paper we study the performance of the Zig-Zag sampler, focusing on the one-dimensional case.  In particular, we identify conditions under which a Central limit theorem (CLT) holds and characterize the asymptotic variance.  Moreover, we study the influence of the switching rate on the diffusivity of the Zig-Zag process by identifying a diffusion limit as the switching rate tends to infinity. Based on our results we compare the performance of the Zig-Zag sampler to existing Monte Carlo methods, both analytically and through simulations.
\end{abstract}

\keywords{
MCMC;
Non-Reversible Markov Process;
Piecewise deterministic Markov process;
Continuous time Markov process;
Central limit theorem;
Functional central limit theorem
} 
\ams{65C05}{60J25;60F05;60F17} 

\section{Introduction}

 Markov Chain Monte Carlo methods remain an essential computational tool in statistics and among other things have made it possible for Bayesian inference techniques to be applied to increasingly complex models.   Due to its simplicity and wide applicability, the Metropolis-Hastings (MH) algorithm \cite{metropolis1953equation,hastings1970monte} and its numerous variants remain the most widely used MCMC method for sampling from a general target probability distribution, despite having been introduced over 60 years ago. Given a target distribution $\pi$, the Metropolis-Hastings scheme defines a discrete time Markov chain which will be both ergodic and reversible with respect to $\pi$.  The fact that the Markov chain is reversible is a serious limitation.  Indeed, it is now well known that non-reversible chains can significantly outperform reversible chains, in terms of rate of convergence to equilibrium \cite{hwang1993accelerating,lelievre2013optimal}, asymptotic variance \cite{chen2013accelerating,sun2010improving,duncan2016variance} as well as large deviation functionals \cite{rey2015variance,rey2015irreversible,rey2016improving}.    One particular approach to improving performance is to introduce a velocity/momentum variable and construct Markovian dynamics which are able to mixing more rapidly in the augmented state space.  Such methods include Hybrid Monte Carlo (HMC) methods, inspired by Hamiltonian dynamics, and numerous generalisations.  While the standard construction of HMC \cite{duane1987hybrid,neal2011mcmc} is reversible,  it is straightforward to alter the scheme such that the resulting process is non-reversible \cite{ottobre2016function}.

 In \cite{bierkensroberts2015}, the Zig-Zag process was introduced, a continuous time piecewise deterministic process (PDMP) which provides a practical sampling scheme applicable for a wide class of probability distributions.  Given a target density $\pi$, known up to a multiplicative constant,  the one dimensional Zig-Zag process is a continuous time Markov process $(X(t), \Theta(t))_{t\geq 0}$ on $E = \mathbb{R}\times \lbrace -1, +1 \rbrace$, such that $X(t)$ moves with constant velocity $\Theta(t)$. The velocity process $\Theta(t)$ switches its values between $-1$ and $+1$ at random times obtained from a inhomogeneous Poisson process with switching rate $\lambda(X(t),\Theta(t))$. If the switching rate is chosen to agree with the target distribution $\pi$ in a certain way, this guarantees that the Zig-Zag process has stationary distribution $\mu$ on $\mathbb R \times \{-1,+1\}$, whose marginal distribution on $\R$ is proportional to $\pi$. As a consequence, the law of large numbers,
 \begin{equation} \label{eq:ergodic_average_intro} \E_{\pi} [f] = \int_{\R} f(x) \pi(x) \, d x = \lim_{T \rightarrow \infty} \frac 1 T \int_0^T f(X(s)) \, d s, \end{equation}
 is satisfied, so that the Zig-Zag process can be used to approximate expectations with respect to $\pi$. Two one-dimensional examples of the Zig-Zag process are displayed in Figure~\ref{fig:examples-introduction}.
 
\begin{figure}
\begin{subfigure}[b]{0.45 \textwidth}
 \includegraphics[width=\textwidth]{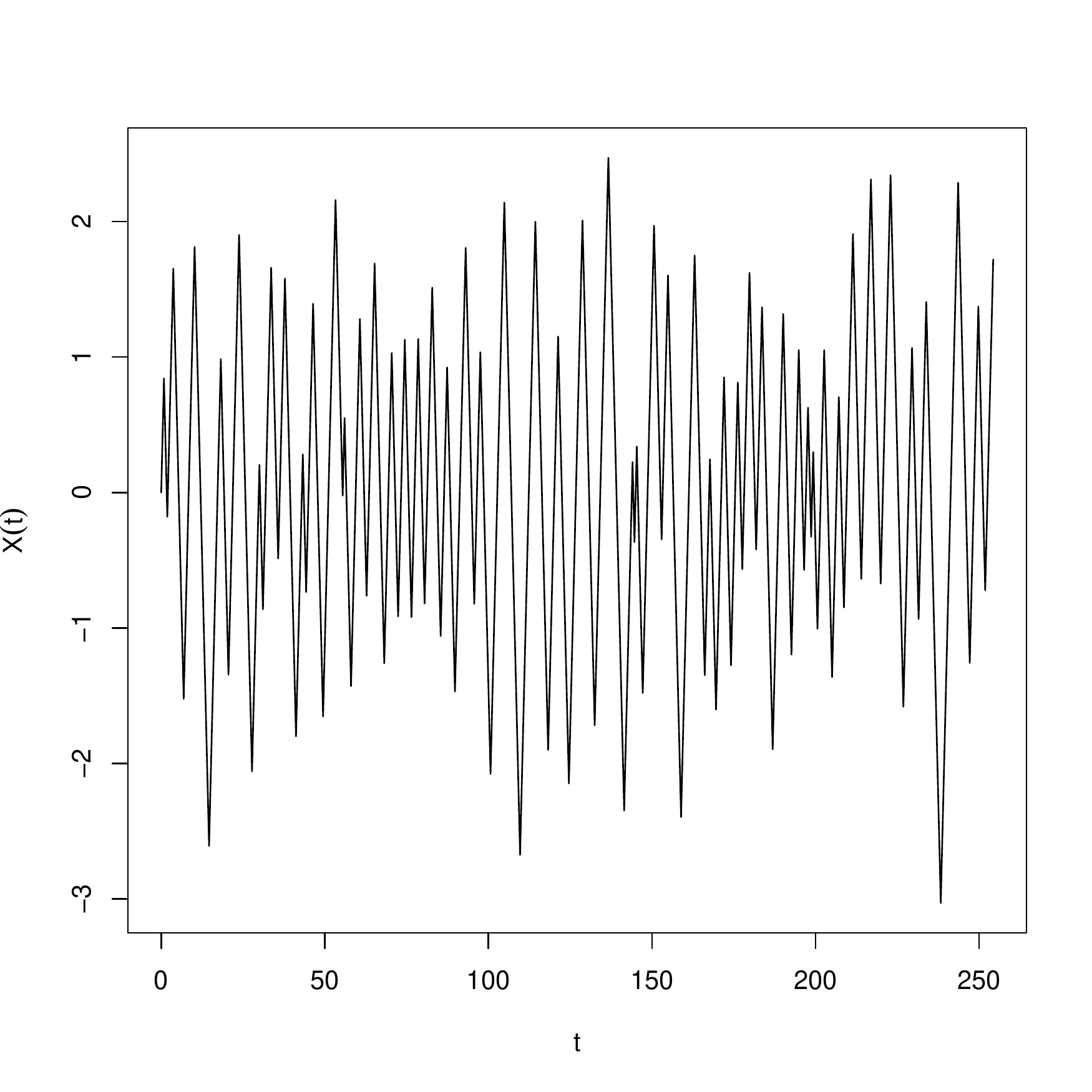}
 \caption{1D Gaussian}
\end{subfigure}
\begin{subfigure}[b]{0.45 \textwidth}
 \includegraphics[width=\textwidth]{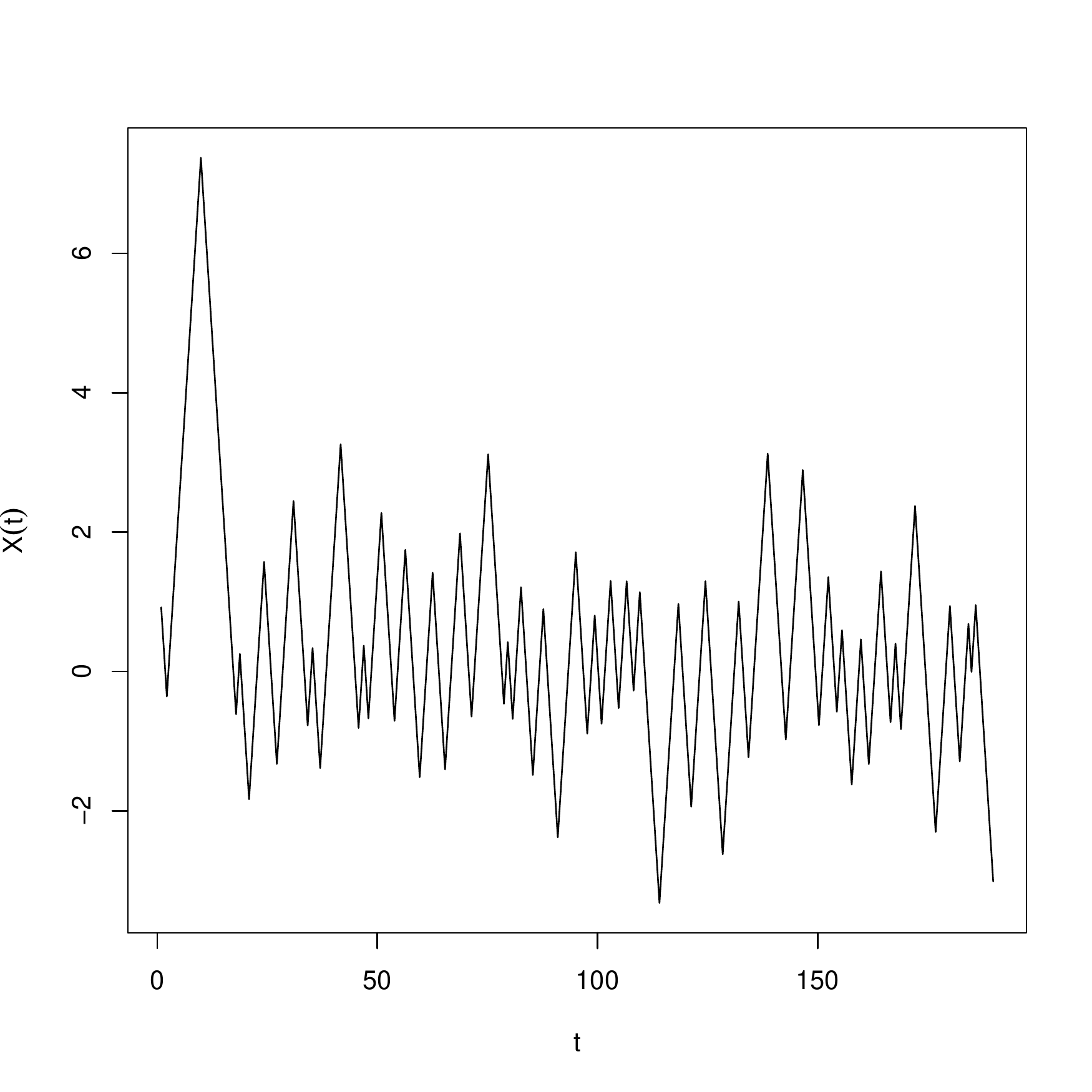}
 \caption{1D Cauchy}
\end{subfigure}

\caption{Example trajectories of the Zig-Zag process with the specified invariant distributions.} 
\label{fig:examples-introduction}
\end{figure}

While the construction and finite-time behaviour of PDMPs is well understood \cite{Davis1984}, their use within the context of sampling has only recently been  considered and is mostly unexplored.  The first such occurrence of a MCMC scheme based on PDMP appeared in the computational physics literature \cite{peters2012rejection} and in one dimension coincides with the Zig-Zag sampler.  This scheme was extended and analysed carefully in \cite{bouchard2015bouncy}, where it was rechristened the Bouncy Particle Sampler.  In one dimension, the quantitative long-time behaviour of related PDMP schemes has been analysed in detail, see for example \cite{azais2014piecewise,fontbona2012quantitative,fontbona2016long,monmarche2015mathcal,monmarche2014piecewise}.
More recently in \cite{BierkensFearnheadRoberts2016}, the application of the Zig-Zag sampler to big data settings was investigated. It was found that the Zig Zag sampler lends itself very well to such problems since sub-sampling can be introduced without affecting the stationary distribution, as opposed to standard sub-sampling techniques, such as SGLD \cite{welling2011bayesian} which are inherently biased.   By introducing appropriate control variates a ``super-efficient'' sampling scheme for big data problems was produced,  in the sense that it is able to generate independent samples from the target distribution at a higher efficiency than directly generating IID samples using the entire data set for each sample.
 
In this paper we seek to better understand the qualitative performance of the Zig Zag sampler.  Focusing on the one-dimensional case, we study the important practical question of whether a central limit theorem (CLT) holds for the Zig-Zag process, i.e. whether for a given observable $f$,
 \begin{equation}
 \label{eq:CLT}
   \sqrt{t}\left(\frac{1}{t}\int_0^t f(X(s))\,ds - \mathbb{E}_{\pi}[f] \right)\Rightarrow \mathcal{N}(0,\sigma_f^2),\quad \mbox{ as } t \rightarrow \infty,
 \end{equation}
where $\sigma^2_f$ is the asymptotic variance and where $\Rightarrow$ denotes convergence in distribution. Heuristically, once a CLT is known to hold, we know that the ergodic average in~\eqref{eq:ergodic_average_intro} converges at rate $\sigma_f/\sqrt{t}$, which is the best convergence to be expected in a Monte Carlo simulation. It is also clear that a smaller value of $\sigma_f > 0$ implies a faster convergence of the ergodic averages. Without a CLT, convergence may be arbitrarily slow.  Starting from the case of a unimodal target distribution and extending to more general cases, we obtain sufficient conditions for~\eqref{eq:CLT} to hold.   Moreover, we identify conditions under with the CLT can be strengthened to an invariance principle or functional central limit theorem (FCLT) \cite{Komorowski2012}.  For the one-dimensional Zig-Zag process we obtain explicit expressions for the asymptotic variance, which we illustrate for various examples.

Given a target distribution $\pi$, there is some freedom in choosing the switching rate $\lambda$ in such a way that $\pi$ is invariant for the Zig-Zag process. This freedom is crucial for the ability of the sub-sampling Zig-Zag scheme of \cite{BierkensFearnheadRoberts2016} to sample without bias.  In Section \ref{sec:diffusive} we study the influence of the particular choice of switching rate on the behaviour of the process.  We show that as the switching rate is increased the Zig-Zag sampler will exhibit random walk behaviour. In particular, over an appropriate timescale the Zig-Zag sampler will behave asymptotically, as the excess switching rate tends to infinity, as an overdamped Langevin diffusion which is ergodic with respect to $\pi$.   

As the Zig-Zag sampler is based upon a continuous time process, it is not immediately clear how its performance can be compared to existing discrete time sampling schemes.  With this aim in mind, we derive approximations for the average switching rate of the process per unit time, and apply this to construct an \emph{effective sample size (ESS)} for the Zig-Zag sampler which quantifies the number of independent samples generated  in terms of the number of evaluations of the gradient of the log density. A suitable definition of effective sample size depends in an essential way on the asymptotic variance of the corresponding CLT, which further illustrates the importance of establishing a CLT from an applied viewpoint. Comparing to IID samples in some cases we observe a remarkable feature: the effective sample size of the Zig-Zag sampler will be larger than that of IID samples, behaviour which is strongly tied to the nonreversibility of the scheme.

We structure the paper as follows.  In Section~\ref{sec:zig-zag} we review the construction of the Zig-Zag sampler in the one dimensional case and explore its basic properties.  Section~\ref{sec:clt} describes conditions for a CLT to hold for the one dimensional Zig-Zag sampler and characterises the asymptotic variance.  These results are demonstrated numerically for some standard probability distributions.  In Section~\ref{sec:diffusive}  the diffusive regime is investigated where the switching rate $\lambda$ goes to infinity.  Finally, in Section~\ref{sec:ess} an appropriate measure of effective sample size is introduced for the Zig-Zag sampler, and is used to compare the performance of the Zig-Zag sampler with other sampling techniques for some standard probability distributions. The proofs of most of results may be found in Appendix A. In Appendix B we discuss the simulation of the Zig-Zag process, which provides the necessary background for Section~\ref{sec:ess}.



\subsection{Notation}


For $E$ a topological space, the space of continuous functions $f : E \rightarrow \mathbb R$ is denoted by $C(E)$, and $\mathcal M(E)$ denotes the set of Borel measurable functions on $E$. The Borel sets in $E$ are denoted by $\mathcal B(E)$. On a measurable space $E$, the measure $\delta_x$, for $x \in E$, is defined as the probability measure assigning mass $1$ to $x$.  Lebesgue measure on $\R^d$ is denoted by $\mathrm{Leb}$. 
The Skorohod space of cadlag paths from an interval $I \subset \R$ into $E$ is denoted by $D(I;E)$; see \cite{EthierKurtz2005} for details. The Skorohod space of cadlag paths from $I$ into $\R$ is also denoted by $D(I)$. We use the symbol $\Rightarrow$ to indicate weak convergence of probability distributions, where the relevant topology (either the natural topology on $\R$ or the Skorohod topology on the space of cadlag paths) can be deduced from the context. We write $\mathcal L(X)$ for the law of a random variable $X$. The pushforward $\mu_{\star} f$ of a measure $\mu$ on $E$ by a measurable function $f : E \rightarrow F$, with $E$ and $F$ measurable spaces, is defined as $\mu_{\star} f(A) := \mu(f^{-1}(A))$ for measurable sets $A$ in $F$.
We write $\Phi$ for the cumulative distribution function of the standard normal distribution. We will use the notation $\pi$ for a probability density function $\pi : \R \rightarrow [0,\infty)$, as well as for the associated probability measure, so e.g. $\pi(f) = \int_{\R} f(x) \pi(x) \ d x$. For $a \in \R$ we will write $(a)^+$ and $(a)^-$ for the positive and negative parts of $a$ respectively, i.e.  $(a)^+ = \max(0,a)$ and $(a)^- = \max(0, -a)$. 


\section{The Zig-Zag process}
\label{sec:zig-zag}
In this section we review some earlier established results on the Zig-Zag process.
Let $E = \R \times \{-1,+1\}$ and equip $E$ with the product topology of open sets in $\R$ and the discrete topology on $\{-1,+1\}$.
The following assumption will be sufficient to define the Zig-Zag process, and ensure it has a unique invariant distribution.

\begin{assumption}
 \label{ass:well-posedness-stationary-zigzag}
$\lambda : E \rightarrow \R_+$ is continuous and the function
\begin{equation} \label{eq:lambda-implicit} U(x) := \int_0^x \{ \lambda(\xi,+1) - \lambda(\xi,-1) \} \ d \xi \end{equation}
satisfies
\[ \int_{-\infty}^{\infty} \exp(-U(x)) \ d x < \infty.\]
Furthermore for some $x_0 > 0$, we have $\lambda(x,\theta) > 0$ if $\theta x \geq x_0$.
\end{assumption}

An alternative and convenient way of writing~\eqref{eq:lambda-implicit} is $\lambda(x,\theta) - \lambda(x,-\theta) = \theta U'(x)$ for all $(x,\theta) \in E$.
It is easy to check that~\eqref{eq:lambda-implicit} holds if and only if there exists a continuously differentiable function $U$ and a continuous non-negative function $\gamma$ such that
\begin{equation} \label{eq:lambda-explicit} \lambda(x,\theta) = \max(0, \theta U'(x)) + \gamma(x).\end{equation}
The switching rates $\lambda$ for which $\gamma \equiv 0$ are called \emph{canonical switching rates} and the corresponding Zig-Zag process is called the \emph{canonical Zig-Zag process}.

Let $\nu$ denote a reference measure on $E$ given by $\nu := \mathrm{Leb} \otimes (\delta_{-1} + \delta_{+1})$. We use $\nu$ to define the probability measure $\mu$ by
\[ \frac{d \mu}{d \nu}(x,\theta) = \frac{\exp(-U(x))}{2k}, \quad (x, \theta) \in E,\]
where $k := \int_{\R} \exp(-U(x)) \ d x$.
The marginal distribution of $\mu$ with respect to $x$ has Lebesgue density proportional to $\exp(-U(x))$, denoted by $\pi$, i.e. $\pi(x) = \exp(-U(x))/k$. 

Define an operator $L$ with domain
\[ \mathcal D(L) = \{ f \in C(E) : \mbox{$f(\cdot, \theta)$ is absolutely continuous for $\theta = \pm 1$} \}\]
by
\begin{equation} \label{eq:generator-zigzag} L f(x,\theta) = \theta \partial_x f(x,\theta) + \lambda(x,\theta) (f(x, -\theta) - f(x,\theta)), \quad (x, \theta) \in E,\end{equation} which will service as the generator of the Markov semigroup of the Zig-Zag process, with dynamics as discussed in the introduction.
In the following proposition, the notion of `petite sets' can be found in \cite{MeynTweedie1993-II}.

\begin{proposition} 
\label{prop:well-posedness-stationary-zigzag} Suppose Assumption~\ref{ass:well-posedness-stationary-zigzag} holds. Then $(L, \mathcal{D}(L))$ is the extended generator of a piecewise deterministic Markov-Feller process $(Z(t))_{t \geq 0} := (X(t), \Theta(t))_{t \geq 0}$ in $E$. All compact sets are petite for $(X(t),\Theta(t))$. Finally $\mu$ is the unique invariant probability distribution for $(Z(t))_{t \geq 0}$.
\end{proposition}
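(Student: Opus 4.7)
The proposition bundles four claims: existence of the PDMP, identification of its extended generator, petite-ness of compact sets, and uniqueness of $\mu$. For the existence and generator, I would apply Davis's PDMP framework with the data: deterministic flow $\phi_t(x,\theta) = (x + \theta t, \theta)$, jump rate $\lambda$, and post-jump kernel $Q((x,\theta), \cdot) = \delta_{(x,-\theta)}$. The only condition requiring care is non-explosion: since $\lambda$ is continuous, it is bounded on any compact set, and the unit-speed flow traverses only bounded position intervals in finite time, so the cumulative rate on $[0, T]$ starting from any $z \in E$ is finite. Hence the number of jumps in $[0, T]$ is stochastically dominated by a Poisson variable with finite mean. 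Davis's theorem then yields a strong Markov--Feller process whose extended generator on absolutely-continuous $f$ coincides with~\eqref{eq:generator-zigzag}.

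Invariance of $\mu$ is a direct computation for $f \in C_c^1(E)$, which serves as a core for $L$. Split $Lf$ into drift and jump parts. Integration by parts turns the drift integral into $\tfrac{1}{2}\sum_\theta \theta \int f(x,\theta) U'(x) \pi(x)\, dx$, using $\pi' = -U' \pi$. In the jump integral, reindex $\theta \mapsto -\theta$ in the gain term to combine both halves into $\tfrac{1}{2}\sum_\theta \int [\lambda(x,-\theta) - \lambda(x,\theta)] f(x,\theta) \pi(x)\, dx$, which by the defining relation $\lambda(x,\theta) - \lambda(x,-\theta) = \theta U'(x)$ equals $-\tfrac{1}{2}\sum_\theta \theta \int U'(x) f(x,\theta) \pi(x)\, dx$. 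The two parts cancel, so $\int Lf\, d\mu = 0$.

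For petite-ness of compact sets, I would construct a uniform minorization. Given a compact $K \subset E$, fix $t^* > 0$ and consider the contribution to $P_{t^*}(z, \cdot)$ from trajectories with exactly two switches on $(0, t^*)$. The joint density of the two switch times on this event is bounded below by a positive constant on a compact box, uniformly over $z \in K$, because $\lambda$ is bounded on the compact set of positions the flow can reach from $K$ within time $t^*$. Under two switches the terminal position is a smooth function of the switch times and sweeps out a non-degenerate interval, so pushing forward yields a common positive lower bound on $\mathcal{L}(X(t^*))$ along an interval, independent of $z \in K$. The positivity hypothesis $\lambda(x,\theta) > 0$ for $\theta x \geq x_0$ is what guarantees that the two-switch event has positive probability even when the starting point lies outside the strip $\{|x| \leq x_0\}$. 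This yields the Doeblin minorization making $K$ small, hence petite. Uniqueness of $\mu$ is then immediate from Meyn--Tweedie: existence of an invariant probability, combined with petite compact sets and the irreducibility implicit in the minorization, forces uniqueness.

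The step I expect to be most delicate is the minorization. While controllability of the Zig-Zag is intuitively clear, converting it into a uniform lower bound over $K$ requires explicit handling of the switch-time density and of the Jacobian of the map from switch times to terminal position, especially near boundary points of $K$ and for starting points where $\lambda$ may be arbitrarily small. The rest is either standard PDMP theory or a short computation.
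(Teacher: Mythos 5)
Your overall strategy matches the paper's: Davis's PDMP framework for existence and the extended generator, a petite-set/minorization argument, and Meyn--Tweedie for uniqueness. The difference is that the paper delegates almost everything to citations --- it invokes Davis (Assumption 3.1 and Theorem 5.5), and then points to its companion reference (Bierkens--Roberts) for the Feller property (Proposition 4 there), petite sets (Lemma 15), and stationarity of $\mu$ (Proposition 5) --- whereas you sketch direct arguments for the invariance computation and the two-switch minorization. Your invariance calculation is correct (the integration by parts in the drift term against the reindexing $\theta\mapsto -\theta$ in the jump term, closed out by $\lambda(x,\theta)-\lambda(x,-\theta)=\theta U'(x)$, is exactly the standard verification), and your minorization construction is plausible and in the same spirit as the cited Lemma 15, with the key observation --- which the paper also makes explicit --- that continuity plus $\lambda(x,\theta)>0$ for $\theta x\geq x_0$ gives a uniform positive lower bound on any compact annulus $x_0\leq\theta x\leq x_1$, so that trajectories from $K$ can be driven into a region where the switch rate is uniformly bounded below before time $t^*$.

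The one genuine gap is the Feller property. You write that Davis's theorem ``yields a strong Markov--Feller process,'' but Davis's general construction only gives a strong Markov process; the Feller property is not automatic and must be verified separately. The paper is careful about this: it explicitly states that the Feller property is established by tracing a separate argument (from Proposition 4 of Bierkens--Roberts), for which continuity of $\lambda$ is what is used. Since the proposition you are proving explicitly asserts the process is Markov--Feller, and since the Feller property is implicitly relied on elsewhere (e.g.\ to justify that a nice set of test functions is a core when checking invariance), this needs an argument rather than an assertion. Relatedly, your claim that $C_c^1(E)$ ``serves as a core for $L$'' is stated without proof; once the Feller/semigroup structure is in hand this is believable, but as written it is exactly the kind of step that the missing Feller argument would underwrite.
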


The proof of this result is located in Appendix~\ref{app:well-posedness}.

The above setting can be used for Monte Carlo sampling as follows. Starting from a normalizable (but possibly unnormalized), strictly positive and continuously differentiable density $\widetilde \pi(x)$ on $\R$, we can define $U(x) := - \log \widetilde \pi(x)$, and define $\lambda(x,\theta)$ by~\eqref{eq:lambda-explicit} for some non-negative function $\gamma$ of our choice. Assuming that, for some $x_0 > 0$, either $\gamma(x) > 0$ for $|x| \geq x_0$, or that $\theta U'(x) > 0$ for $\theta x \geq x_0$, Assumption~\ref{ass:well-posedness-stationary-zigzag} is satisfied, and the process constructed in Proposition~\ref{prop:well-posedness-stationary-zigzag} has marginal stationary distribution $\pi$ on $\R$, where $\pi$ is the normalization of $\widetilde \pi$.

We call $(Z(t))_{t \geq 0} = (X(t), \Theta(t))_{t \geq 0}$ the \emph{Zig-Zag process} with \emph{switching intensity} $\lambda(x,\theta)$. Although the paths of the Zig-Zag process are continuous in $E$, in view of our goal of obtaining limit theorems for the Zig-Zag process we will consider its sample paths as elements in $D([0,\infty);E)$. For any probability distribution $\eta$ on $E$ let $\P_{\eta}$ denote the probability measure on $D([0,\infty);E)$ for the Zig-Zag process with initial distribution $\eta$. In particular under $\P_{\mu}$ the law of $(Z(t))_{t \geq 0}$ is stationary.

\section{Central Limit Theorems for the Zig-Zag process}
\label{sec:clt}

First, in Section~\ref{sec:unimodal}, we obtain a CLT for the Zig-Zag process in the simple and intuitive case in which the target distribution is unimodal and the excess switching rate $\gamma = 0$.
Then we describe a general approach to the CLT in Section~\ref{sec:abstract-CLT}. We then illustrate the theory with several examples in Section~\ref{sec:clt_examples}.


\subsection{The CLT for the special case of a unimodal invariant distribution}
\label{sec:unimodal}

If the potential $U(x)$ is continuously differentiable and is monotonically non-decreasing (non-increasing) for $x \geq 0$ ($x \leq 0$) then the canonical switching rates associated with $U$ satisfy $\lambda(x,+1) = 0$ for $x \leq 0$, and $\lambda(x,-1) = 0$ for $x \geq 0$. In this situation trajectories of the canonical Zig-Zag process will always pass through the origin $x = 0$ between switches. This regular behaviour makes it possible to obtain a Central Limit Theorem in a very straightforward way: by inspecting the contributions towards the total variance of trajectory segments between crossings of the origin.

\begin{assumption}
\begin{itemize}
 \item[(i)] $U : \R \rightarrow [0,\infty)$ is continuously differentiable and is monotonically non-decreasing (non-increasing) for $x \geq 0$ ($x \leq 0$). Furthermore $k:= \int_{\R} \exp(-U(x)) \ d x <\infty$;
 \item[(ii)] $g : \R \rightarrow \R$ is integrable with respect to $\pi$ and satisfies $\int_{\R} g(x) \pi(x) \ d x= 0$, where $\pi(x) :=\exp(-U(x))/k$;
 \item[(iii)] We have 
 \[ \int_{\R} |U'(t)| \exp(-U(t)) \left(\int_0^t g(s) \ d s \right)^2 \ d t < \infty.\]
 \item[(iv)] $\lambda(x,\theta)$ are the canonical switching rates defined by $\lambda(x,\theta) = (\theta U'(x))^+$.
%
%
\end{itemize}
\label{ass:unimodal}
\end{assumption}

Note that the definition of $\pi$ agrees with the definition of $\pi$ below Assumption~\ref{ass:well-posedness-stationary-zigzag}. Furthermore, the fact that $\exp(-U(x))$ is integrable, combined with the monotonicity assumption, implies that the switching rates $\lambda(x,\theta)$ are positive for $\theta x \geq x_0$, for some fixed $x_0 > 0$, so that Assumption~\ref{ass:unimodal} implies Assumption~\ref{ass:well-posedness-stationary-zigzag}.

\begin{theorem}
\label{thm:unimodal-CLT}
Suppose Assumption~\ref{ass:unimodal} holds. Let $(X(t),\Theta(t))$ denote the Zig-Zag process with switching rates $\lambda(x,\theta)$. Then 
\[ \frac 1 {\sqrt t} \int_0^t g(X(s)) \ d s \Rightarrow \mathcal N(0, \sigma_g^2),\]
where
\begin{align}
 \label{eq:asvar-alternative} \sigma_g^2 &  := \frac{ 2 \int_{\R} |U'(t)| \exp(-U(t)) \left( \int_0^t g(s) \ ds \right)^2 \ d t - 4 \left(\int_0^{\infty}  \exp(-U(t)) g(t)  \ d t \right)^2}{\int_{-\infty}^{\infty} \exp(-U(t)) \ d t }
\end{align}
\end{theorem}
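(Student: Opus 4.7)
The plan is to exploit a regenerative decomposition of the canonical Zig-Zag process in the unimodal regime. Under Assumption~\ref{ass:unimodal}(iv) the switching rate $\lambda(x,\theta)=(\theta U'(x))^+$ vanishes whenever the velocity points toward the mode ($\lambda(x,-1)=0$ on $[0,\infty)$ and $\lambda(x,+1)=0$ on $(-\infty,0]$), so after every switch the trajectory returns deterministically to $0$ before any further switch can occur. The successive visits to the state $(0,+1)$ therefore form a sequence of regeneration times $0=\tau_0<\tau_1<\tau_2<\cdots$, and a single cycle $(X(s))_{s\in[\tau_{n-1},\tau_n]}$ breaks into four deterministic straight-line pieces: the particle moves right from the origin to a random switching location $x_1^{(n)}>0$, returns to $0$, continues left to $-x_2^{(n)}<0$, and returns to $(0,+1)$. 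Since both the process and the right-hand side of \eqref{eq:asvar-alternative} are invariant under $U\mapsto U-U(0)$, I assume $U(0)=0$ from here on.

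The second step is to compute the distributions of $x_1^{(n)},x_2^{(n)}$ and the cycle statistics explicitly. The Poisson survival formula gives independent $x_1^{(n)},x_2^{(n)}$ with $\P(x_1^{(n)}>t)=e^{-U(t)}$ and $\P(x_2^{(n)}>t)=e^{-U(-t)}$ for $t\geq 0$. Parametrising the four segments by time yields
\[
T_n=2(x_1^{(n)}+x_2^{(n)}),\qquad S_n:=\int_{\tau_{n-1}}^{\tau_n}g(X(s))\,ds=2\bigl(G(x_1^{(n)})-G(-x_2^{(n)})\bigr),
\]
where $G(t):=\int_0^t g(s)\,ds$. A one-line integration by parts based on $(d/dt)e^{-U(t)}=-U'(t)e^{-U(t)}$, together with the decay of $G(t)e^{-U(t)}$ at $\pm\infty$ (guaranteed by $\pi$-integrability of $g$ and the fact that $e^{-U}\in L^1$ forces $U\to\infty$ along the monotonic branches), yields $\E[G(x_1^{(n)})]=\int_0^\infty g\,e^{-U}$ and $\E[G(-x_2^{(n)})]=-\int_{-\infty}^0 g\,e^{-U}$. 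The centering condition $\pi(g)=0$ from Assumption~\ref{ass:unimodal}(ii) makes these two quantities equal, so $\E[S_n]=0$. The same manipulation gives $\E[G(x_1^{(n)})^2]+\E[G(-x_2^{(n)})^2]=\int_\R|U'(t)|G(t)^2 e^{-U(t)}\,dt$, finite by Assumption~\ref{ass:unimodal}(iii), while $\E[T_n]=2\int_\R e^{-U}=2k$.

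Finally I invoke the classical regenerative CLT: with $(S_n,T_n)_{n\geq 1}$ i.i.d., $\E[S_n]=0$, $\E[S_n^2]<\infty$ and $\E[T_n]\in(0,\infty)$, the random-time-change/Anscombe argument applied to the random walk $\sum_{n\leq N}S_n$ together with $N(t)/t\to 1/\E[T_1]$ almost surely gives
\[
\frac{1}{\sqrt t}\int_0^t g(X(s))\,ds \;\Rightarrow\; \mathcal N\bigl(0,\E[S_1^2]/\E[T_1]\bigr).
\]
Expanding $\E[S_1^2]=4(\E[G(x_1)^2]+\E[G(-x_2)^2])-8\,\E[G(x_1)]^2$ via independence of $x_1,x_2$ and substituting the moment expressions above yields exactly \eqref{eq:asvar-alternative}. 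The main technical hurdle I expect is that the process need not be started at $(0,+1)$, so the initial segment $[0,\tau_1]$ and the terminal residual $\int_{\tau_{N(t)}}^{t}g(X(s))\,ds$ must be shown to be $o(\sqrt t)$; this is routine once one knows $\tau_1<\infty$ almost surely (guaranteed by Proposition~\ref{prop:well-posedness-stationary-zigzag}, whose hypotheses follow from Assumption~\ref{ass:unimodal}) and that $S_1$ has a second moment.
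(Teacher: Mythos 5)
Your proposal is correct and follows essentially the same strategy as the paper's proof: you regenerate at successive visits to $(0,+1)$, decompose each cycle into the right excursion to $x_1$ and the left excursion to $-x_2$ (the paper's $\tau^+,\tau^-$), compute $\E[S_1]=0$, $\E[S_1^2]$ and $\E[T_1]$ by the same integration-by-parts manipulations, and conclude via a renewal/Anscombe CLT — exactly the route the paper takes through its Lemma~\ref{lem:CLT-randomtime}. The only cosmetic difference is that you normalize $U(0)=0$ upfront, whereas the paper leaves $\exp(U(0))$ factors in place (and in fact the stated formula \eqref{eq:asvar-alternative} implicitly requires this normalization), so making it explicit is a small improvement in exposition rather than a divergence in method.
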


\begin{proof}
Iteratively define random times $(T_i^{\pm})_{i \in \N}$ and $(S_i^{\pm})_{i \in \N}$ as follows:
\begin{align*} T_0^+ & = \inf\{ t \geq 0: X(t) = 0 \ \mbox{and} \ \Theta(t) = +1\}, \\
T_i^- & = \inf\{ t > T_{i-1}^+  : X(t) = 0 \ \mbox{and} \ \Theta(t) = -1\},  \quad & i = 1, 2, 3, \dots, \\
T_i^+ & = \inf\{ t > T_i^- : X(t) = 0 \ \mbox{and} \ \Theta(t) = +1\}, \quad & i = 1, 2, 3, \dots,  \\
S_i^+ & = \inf\{ t > T_{i-1}^+ : \Theta(t) = - 1\}, \quad & i = 1, 2, 3, \dots, \\
S_i^- & = \inf\{ t > T_i^- : \Theta(t) = + 1\}, \quad & i = 1, 2, 3, \dots.
\end{align*}
See Figure~\ref{fig:unimodal-clt-illustration} for a graphical illustration of these times.

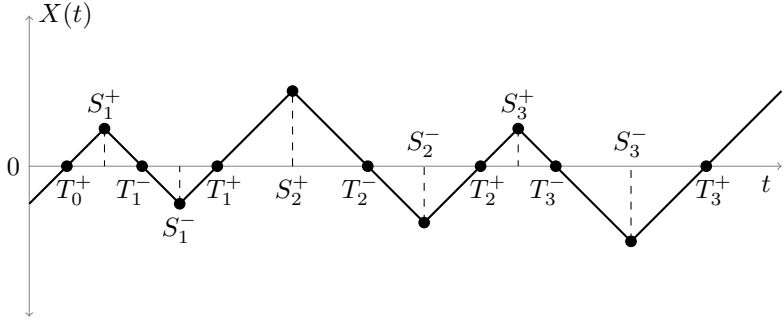
\begin{figure}[!ht]
{\begin{center}
\begin{tikzpicture} 

\draw[help lines, <->] (0,2) -- (0,0) -- (10,0);
\draw[help lines, ->] (0,0) -- (0,-2);

\draw [thick] (0,-0.5) -- (1,0.5);
\draw [thick] (1,0.5) -- (2,-0.5);
\draw [thick] (2,-0.5) -- (3.5, 1);
\draw [thick] (3.5,1) -- (5.25,-0.75);
\draw [thick] (5.25,-0.75) -- (6.5, 0.5);
\draw [thick] (6.5, 0.5) -- (8, -1);
\draw [thick] (8,-1) -- (10,1);

\node [left] at (0,0) {0};

\draw[fill] (0.5,0) circle [radius =2pt];
\node [below] at (0.6, 0) {$T_0^+$};

\draw[fill] (1.5,0) circle [radius =2pt];
\node [below] at (1.4,0) {$T_1^-$};

\draw[fill] (2.5,0) circle [radius =2pt];
\node [below] at (2.6,0) {$T_1^+$};

\draw[fill] (4.5,0) circle [radius =2pt];
\node [below] at (4.4,0) {$T_2^-$};

\draw[fill] (6.0,0) circle [radius =2pt];
\node [below] at (6.1,0) {$T_2^+$};

\draw[fill] (7.0,0) circle [radius =2pt];
\node [below] at (6.9,0) {$T_3^-$};

\draw[fill] (9.0,0) circle [radius =2pt];
\node [below] at (9.1,0) {$T_3^+$};

\draw[fill] (1,0.5) circle [radius = 2pt];
\draw [dashed] (1,0.5) -- (1,0);
\node [above] at (1,0.5) {$S_1^+$};

\draw[fill] (2,-0.5) circle [radius = 2pt];
\draw [dashed] (2,-0.5) -- (2,0);
\node [below] at (2,-0.5) {$S_1^-$};

\draw[fill] (3.5,1) circle [radius = 2pt];
\draw [dashed] (3.5,0) -- (3.5,1);
\node [below] at (3.5,0) {$S_2^+$};

\draw[fill] (5.25,-0.75) circle [radius = 2pt];
\draw [dashed] (5.25,-0.75) -- (5.25,0);
\node [above] at (5.25,0) {$S_2^-$};

\draw[fill] (6.5,0.5) circle [radius = 2pt];
\draw [dashed] (6.5,0.5) -- (6.5, 0);
\node [above] at (6.5,0.5) {$S_3^+$};

\draw[fill] (8,-1) circle [radius = 2pt];
\draw [dashed] (8,-1) -- (8,0);
\node [above] at (8,0) {$S_3^-$};

\node [below] at (9.8,0) {$t$};
\node [right] at (0, 2) {$X(t)$};

\end{tikzpicture}
\caption{Graphical illustration of the random times $S_i^{\pm}, T_i^{\pm}$ introduced in the proof of Theorem~\ref{thm:unimodal-CLT}.}
\label{fig:unimodal-clt-illustration}  
 \end{center}
 }
\end{figure}

Now for $i=1,2,\ldots$, define the random variables
\begin{align*} Y_i^+ & := \int_{T_{i-1}^+}^{T_{i}^-} g(s) \ d s = 2 \int_{T_{i-1}^+}^{S_{i}^+} g(s) \ d s, \\
 Y_i^- & := 2 \int_{T_i^-}^{T_i^+} g(s) \ d s = 2 \int_{T_{i}^-}^{S_i^-} g(s) \ d s, \quad \mbox{and} \\
 Y_i & := Y_i^+ + Y_i^-.
\end{align*}
Let $N(t) := \sup\{ i : T_i^+ \leq t\}$.
Then
\[ \frac 1 {\sqrt{t}} \int_0^t g(X(s)) \ d s = \frac 1 {\sqrt{t}} \left( \int_0^{T_0^+} g(X(s)) \ d s + \sum_{i=1}^{N(t)} Y_i + \int_{T_{N(t)}^+}^t g(s) \ d s \right).\]
Note that $(Y_i)$ are i.i.d., with distribution identical to that of the random variable $Y := Y^+ + Y^-$, where $Y^+$ and $Y^-$ are independent random variables defined by
\[ Y^+ := 2 \int_0^{\tau^+} g(s) \ d s, \quad Y^- := 2 \int_0^{\tau^-} g(-s) \ d s,\]
where $\P(\tau^{\pm} \geq t) = \exp \left( -\int_0^t \lambda( \pm s,\pm 1) \ d s \right)$.
We compute
\begin{align*}
 \E[Y^+] & = \int_0^{\infty} \lambda(t,+1) \exp \left( -\int_0^t \lambda(s,+1) \, d s \right) \left(2  \int_0^t g(s) \, d s \right) \, d t  \\
 & = 2 \int_0^{\infty} U'(t) \exp(-U(t)) \left( \int_0^t g(s) \ d s \right) \ d t,
\end{align*}
and, using Assumption~\ref{ass:unimodal} (ii),
\begin{align*}
 \E[Y^-] & = \int_0^{\infty} \lambda(-t,-1) \exp \left( -\int_0^t \lambda(-s,-1) \, d s \right) \left( 2 \int_0^t g(-s) \, ds  \right) \, d t \\
& = - 2 \int_0^{\infty} \frac{d}{dt} \exp \left( -\int_0^t \lambda(-s,-1) \, d s \right) \left( \int_0^t g(-s) \, d s\right) \, d t  \\
& = 2 \int_0^{\infty} \exp \left( -\int_0^t \lambda(-s,-1) \, d s \right)g(-t)  \, d t  \\
& = 2 \exp(U(0)) \int_{-\infty}^{\infty} \exp \left( -U(t) \right) g(t) \, d t - 2 \exp(U(0)) \int_0^{\infty} \exp \left( -U(t) \right) g(t) \, d t \\
& = - \E[Y^+].
\end{align*}

Next,
\begin{align*}
 \E[(Y^+)^2] & = 4 \int_0^{\infty} \lambda(t,+1) \exp \left( -\int_0^t \lambda(s,+1) \ d s \right) \left( \int_0^t g(s) \ d s \right)^2  \ d t \\
 & = 4 \int_0^{\infty} U'(t)  \exp \left( -U(t) \right) \left( \int_0^t g(s) \ d s \right)^2  \ d t
\end{align*}
and similarly
\[ \E[(Y^-)^2] =  4  \int_{-\infty}^0 (-U'(t)) \exp(-U(t)) \left(\int_t^0 g(s) \ d s\right)^2  \ d t. \]
By Assumption~\ref{ass:unimodal} (iii), 
\[ \E[Y^2] = \E[ (Y^+ + Y^-)^2] \leq 2 \E[(Y^+)^2 ] + 2 \E[(Y^-)^2] < \infty.\]
Also by this assumption, 
$\int_0^{T_0^+} g(X(s)) \ ds$ and $\int_{T_{N(t)}^+}^t g(X(s)) \ d s$ are bounded in probability.
Furthermore
\[ \E[\tau^+ + \tau^-] = \int_{-\infty}^{\infty} \exp\left( - U(t) \right) \ d t <\infty \]
since $\pi(t) \propto \exp(-U(t))$ is a probability measure.
By the strong law for renewal processes, \cite[Theorem 1.7.3]{Durrett1996}, $\frac{N(t)}{t} \rightarrow \frac 1 {\E[2 \tau^+ + 2 \tau^-]}$ almost surely.
It follows from Lemma~\ref{lem:CLT-randomtime} (located in the appendix) that
\[ \frac 1 {\sqrt{t}} \sum_{i=1}^{N(t)} Y_i \Rightarrow \mathcal N(0, \E[Y^2]/\E[2 \tau^+ + 2 \tau^-]) \quad \mbox{as $t \rightarrow \infty$}\]
where
\begin{align*} \E[Y^2] & = \E[(Y^+)^2] + \E[(Y^-)^2] - 2 \E[Y^+]^2.
\end{align*}
Combining all terms gives the stated expression for the asymptotic variance.
\end{proof}

\subsection{General approach to the Central Limit Theorem}
\label{sec:abstract-CLT}

The approach of Section~\ref{sec:unimodal} is intuitively appealing. However the required assumptions are very restrictive. In this section we will employ a far more general approach to obtaining a CLT. In particular, this approach allows us to include non-unimodal cases, as well as situations in which the excess switching rate $\gamma$ in~\eqref{eq:lambda-explicit} is non-zero.


First we recall two key results from the literature which will be helpful for our purposes. Recall the definition of a \emph{petite set} from e.g. \cite{MeynTweedie1993-II}.

\begin{assumption}
\label{ass:GlynnMeyn}
$(Z(t))_{t \geq 0}$ is a $\varphi$-irreducible continuous time Markov process in a Borel space $E$ with extended generator $L$. For a function $f : E \rightarrow [1, \infty)$, a petite set $C \in \mathcal B(E)$, a constant $b < \infty$ and a function $V : E \rightarrow [0,\infty)$, $V \in \mathcal D(L)$,
\begin{equation} \label{eq:lyapunov-condition} L V(z) \leq - f(z) + b \1_{C}(z), \quad z \in E.\end{equation}
\end{assumption}

\begin{proposition}[{\cite[Theorem 3.2]{GlynnMeyn1996}}]
\label{prop:solution-Poisson-equation}
Suppose that Assumption~\ref{ass:GlynnMeyn} is satisfied. Then $(Z(t))_{t \geq 0}$ is positive Harris recurrent with invariant probability distribution $\mu$ and $\mu(f) < \infty$. For some $c_0 < \infty$ and any $|g| \leq f$, the Poisson equation
\begin{equation}
 \label{eq:Poisson-equation}
 L \phi = \mu(g) - g
\end{equation}
admits a solution $\phi$ satisfying the bound $|\phi| \leq c_0(V+1)$.
\end{proposition}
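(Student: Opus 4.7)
The plan is to separate the three claims: positive Harris recurrence, the integrability $\mu(f)<\infty$, and the existence of a Poisson equation solution with the stated growth bound. The first two are standard consequences of the drift condition, while the third requires an explicit representation of $\phi$ together with a Dynkin-type estimate using $V$.

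For positive Harris recurrence and existence of $\mu$, the inequality $LV \le -f + b\1_C$ with $f\ge 1$ and $C$ petite is precisely the Foster--Lyapunov drift condition for continuous-time Markov processes (Down--Meyn--Tweedie), so it yields positive Harris recurrence and a unique invariant probability measure $\mu$. To bound $\mu(f)$, I would apply $\mu$ to both sides of the drift inequality: invariance plus a truncation of $V$ to handle unboundedness gives $\mu(LV)\le 0$, hence $\mu(f)\le b\,\mu(C)\le b<\infty$.

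For the Poisson equation, my preferred construction is regenerative. Petiteness of $C$ allows a continuous-time Nummelin-type splitting producing a regeneration time $\tau$, after which the law of the process equals a fixed minorization measure. Set
\[ \phi(z) := \E_z\!\left[\int_0^{\tau}\bigl(g(Z(t)) - \mu(g)\bigr)\,dt\right]. \]
Standard regenerative arguments (strong Markov at $\tau$, plus the ratio formula expressing $\mu(g)$ as $\E_{z_0}[\int_0^\tau g\,dt]/\E_{z_0}[\tau]$ for the regeneration point $z_0$) yield $L\phi = \mu(g)-g$ once $\phi\in\mathcal D(L)$ is checked. For the growth bound, Dynkin applied to $V$ up to $\tau$ gives
\[ \E_z[V(Z(\tau))] - V(z) \le -\E_z\!\left[\int_0^{\tau} f(Z(t))\,dt\right] + b\,\E_z\!\left[\int_0^{\tau}\1_C(Z(t))\,dt\right], \]
so after absorbing the occupation-time term in $C$ by the standard petite-set return estimates, $\E_z[\int_0^\tau f(Z(t))\,dt]\le c(V(z)+1)$. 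Combined with $|g|\le f$ and $\E_z[\tau]\le \E_z[\int_0^\tau f\,dt]$, this delivers $|\phi(z)|\le c_0(V(z)+1)$.

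The main obstacle is the continuous-time regenerative construction: building a split process on which $\tau$ has finite expectation controlled by $V$ requires care, particularly in extracting the minorization measure from petiteness of $C$ and in verifying that the resulting $\phi$ lies in the domain of the extended generator $L$. A cleaner alternative is the resolvent route: define $\phi_\alpha := R_\alpha(g - \mu(g))$ where $R_\alpha$ is the resolvent of $L$, use the identity $(\alpha - L)\phi_\alpha = g - \mu(g)$, and pass $\alpha\downarrow 0$, with a uniform bound $|\phi_\alpha|\le c_0(V+1)$ again obtained via Dynkin's formula applied to $V$; the difficulty then migrates to justifying pointwise convergence of $\phi_\alpha$ and commuting $L$ with the limit, which is where the subgeometric nature of the drift condition becomes delicate.
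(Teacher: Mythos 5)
The paper does not prove this proposition; it is imported verbatim as \cite[Theorem 3.2]{GlynnMeyn1996}, so there is no in-paper argument to compare against. Your sketch is a reasonable reconstruction of the ideas behind the cited result, and it correctly isolates the two live routes. However, you present the direct continuous-time regenerative construction as your primary plan and the resolvent route as the alternative, whereas the actual Glynn--Meyn argument is the resolvent route: they pass to the discrete-time kernel $R = \int_0^\infty e^{-t} P^t\,dt$, show that the drift condition $LV \le -f + b\1_C$ transfers to a discrete-time drift $RV \le V - R f + b' \1_C$ for the $R$-chain, apply the discrete-time Poisson-equation theory (where the split-chain regenerative representation lives naturally), and then translate the solution and the bound $|\phi|\le c_0(V+1)$ back to continuous time. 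Trying to run the regeneration directly in continuous time, as you do, is possible but considerably more delicate, which is exactly why they avoid it.

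Two concrete points where your sketch needs repair. First, the ratio formula you quote, $\mu(g) = \E_{z_0}[\int_0^\tau g\,dt]/\E_{z_0}[\tau]$ for ``the regeneration point $z_0$,'' presumes an atom, but petiteness of $C$ only supplies a minorization $P^a(x,\cdot)\ge \varepsilon\,\psi(\cdot)$ on $C$ for some sampling distribution $a$; the atom is artificial and lives on the split chain, and the ratio formula must be written relative to the minorization measure $\psi$, with the return time $\tau$ defined on the split chain (or the $R$-chain). Second, your Dynkin step bounds $\E_z[\int_0^\tau f\,dt]$ by $V(z) + b\,\E_z[\int_0^\tau \1_C\,dt]$ and then waves at ``standard petite-set return estimates'' to absorb the second term; this is where $\mu(f)<\infty$ and the finiteness of $\E_\psi[\tau]$ are actually used, and making it airtight is most of the content of the Glynn--Meyn proof. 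Your final paragraph correctly identifies these as the obstacles; just be aware that the resolvent reduction you list as the ``cleaner alternative'' is not merely an alternative but the path the cited theorem actually takes, precisely because it converts the continuous-time regeneration problem into the well-understood discrete-time one.
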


Define a sequence of stochastic processes $(Y_n(t))_{t \geq 0}$, $n \in \N$, by
\[ Y_n(t) = \frac 1 {\sqrt{n}} \left( \int_0^{nt} \{ \pi(g) - g(Z(s))\}  \ d s \right), \quad t \geq 0.\]
The following general result establishes sufficient conditions for a functional Central Limit Theorem to hold. Part of the results in this section can be obtained simply by verifying the conditions of the following theorem, although in particular work needs to be done to find suitable functions $f$ and $V$ satisfying Assumption~\ref{ass:GlynnMeyn}.

\begin{proposition}[{\cite[Theorem 4.3]{GlynnMeyn1996}}]
\label{prop:GlynnMeyn-FCLT}
Suppose Assumption~\ref{ass:GlynnMeyn} is satisfied. If $\mu(V^2) < \infty$, then for any $|g| \leq f$ there exists a constant $0 \leq \gamma_g < \infty$ such that under $\P_{\eta}$, $Y_n \Rightarrow \gamma_g B$, with $B$ a standard Brownian motion, as $n \rightarrow \infty$ in $D[0,1]$ for any initial distribution $\eta$.
Furthermore, the constant $\gamma_g^2$ can be defined as $\gamma_g^2 = 2 \int_E \phi(x) \{ \pi(g) - g(x)\} \pi(dx)$, where $\phi$ is the solution to the Poisson equation given in Proposition~\ref{prop:solution-Poisson-equation}.
\end{proposition}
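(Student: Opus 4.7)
The plan is to proceed via the Poisson-equation/martingale method. Let $\phi$ denote the solution to the Poisson equation~\eqref{eq:Poisson-equation} guaranteed by Proposition~\ref{prop:solution-Poisson-equation}, so that $L\phi = \mu(g) - g$ and $|\phi| \leq c_0(V+1)$. First I would form the Dynkin martingale
\[ M(t) := \phi(Z(t)) - \phi(Z(0)) - \int_0^t L\phi(Z(s))\, ds. \]
Substituting the Poisson equation and rearranging gives the exact decomposition
\[ Y_n(t) = \frac{1}{\sqrt n}\int_0^{nt} \{\mu(g) - g(Z(s))\}\, ds = \frac{\phi(Z(nt)) - \phi(Z(0))}{\sqrt n} - \frac{M(nt)}{\sqrt n}. \]
The moment bound $\mu(\phi^2) \leq 2c_0^2\,\mu((V+1)^2) < \infty$ coming from $\mu(V^2) < \infty$ ensures that $M$ is a true square-integrable martingale under $\P_\mu$; Harris recurrence together with the drift inequality~\eqref{eq:lyapunov-condition} then extends the relevant estimates to any initial distribution $\eta$ after a burn-in that is negligible on the $\sqrt n$ scale.

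Next I would show that the boundary contribution $n^{-1/2}(\phi(Z(nt)) - \phi(Z(0)))$ is asymptotically negligible in $D[0,1]$. Pointwise convergence to zero in probability is immediate from stationarity and $\mu(\phi^2)<\infty$, since $\phi(Z(nt))$ has fixed finite variance. Promoting this to uniform-in-$t$ negligibility requires a maximal moment bound of the form $\E_\eta[\sup_{s \leq n} \phi(Z(s))^2] = o(n)$, which can be obtained by splitting $[0,n]$ at successive returns to the petite set $C$ and using~\eqref{eq:lyapunov-condition} to control the excursion moments of $V$. This reduces the functional CLT for $Y_n$ to one for the rescaled martingale $\widetilde M_n(t) := -n^{-1/2} M(nt)$.

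I would then invoke the martingale functional central limit theorem for continuous-time semimartingales, for example Theorem 7.1.4 of \cite{EthierKurtz2005}. Two conditions must be checked. First, convergence of the predictable quadratic variation: writing $\Gamma(\phi,\phi) := L(\phi^2) - 2\phi L\phi$ for the carré-du-champ, one has $\langle M\rangle_t = \int_0^t \Gamma(\phi,\phi)(Z(s))\, ds$, and Birkhoff's ergodic theorem gives
\[ \frac{1}{n}\langle M\rangle_{nt} \longrightarrow t\,\mu(\Gamma(\phi,\phi)) \]
almost surely, since $\Gamma(\phi,\phi) \in L^1(\mu)$ by the $V^2$-domination of $\phi^2$ and the corresponding bound on $L(\phi^2)$. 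Using invariance ($\int L(\phi^2)\, d\mu = 0$, after truncation if $\phi^2 \notin \mathcal D(L)$) identifies
\[ \gamma_g^2 = \mu(\Gamma(\phi,\phi)) = -2\int_E \phi\, L\phi\, d\mu = 2\int_E \phi(x)\{\pi(g) - g(x)\}\,\pi(dx). \]
Second, a Lindeberg-type jump condition $n^{-1}\sum_{s \leq n} (\Delta M(s))^2 \1_{|\Delta M(s)|>\varepsilon\sqrt n} \to 0$ in probability must be verified, where the jumps of $M$ coincide with $\phi(X,-\Theta) - \phi(X,\Theta)$ at the switching times of the Zig-Zag process.

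The main obstacle I expect is the last pair of tail-type ingredients --- the uniform negligibility of the boundary term and the Lindeberg jump condition --- neither of which follows from $\mu(V^2)<\infty$ alone. Both are standardly handled via a regenerative decomposition at successive visits to the petite set $C$, combined with the drift inequality~\eqref{eq:lyapunov-condition} to give moment bounds on $V$ over single excursions. Once these trajectorial estimates are secured, the three inputs (boundary-term negligibility, almost-sure convergence of $\langle M\rangle$, and the Lindeberg condition) combine via the martingale FCLT to yield $\widetilde M_n \Rightarrow \gamma_g B$ in $D[0,1]$, and the decomposition above then delivers $Y_n \Rightarrow \gamma_g B$ under $\P_\eta$ for arbitrary initial distribution.
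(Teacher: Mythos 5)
This proposition is stated in the paper as a direct citation of \cite[Theorem 4.3]{GlynnMeyn1996}; the paper supplies no proof of its own, so there is no in-paper argument to compare against. That said, your sketch follows the standard Poisson-equation/martingale architecture, which is essentially the route Glynn and Meyn take, so the overall plan is sound and you correctly identify where the genuine work lies.

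Two remarks. First, your chain of equalities for the limiting variance contains a sign slip: from $L\phi = \mu(g) - g$ one has
\[ -2\int_E \phi\, L\phi\, d\mu \;=\; -2\int_E \phi\,\{\mu(g) - g\}\, d\mu \;=\; 2\int_E \phi\,\{g - \mu(g)\}\, d\mu, \]
not $2\int_E \phi\{\mu(g)-g\}\,d\mu$ as you wrote. The version you wrote is what the paper's statement reads, so you have reproduced what appears to be a typographical sign error in the paper itself; it is inconsistent with the nonnegativity of $\mu(\Gamma(\phi,\phi))$ and with the analogous Langevin formula $\widetilde\sigma_g^2 = 2\langle \varphi, g\rangle_{L^2(\pi)}$ in Proposition~\ref{prop:langevin-asvar}. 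Second, you rightly flag that $\mu(V^2) < \infty$ by itself does not give uniform-in-$t$ negligibility of the boundary term $n^{-1/2}\phi(Z(nt))$, nor the Lindeberg condition on the jump process $n^{-1}\sum_{s\le n}(\Delta M(s))^2\1_{|\Delta M(s)|>\varepsilon\sqrt n}$. Filling these in requires the full drift/minorization machinery (regeneration at returns to the petite set $C$, excursion moment bounds on $V$ via~\eqref{eq:lyapunov-condition}), and that is precisely the nontrivial content of Glynn--Meyn's Theorem 4.3 that your outline names but does not resolve. As a blind attempt, this is an accurate roadmap of the argument rather than a complete proof, which is appropriate given that the paper treats the result as imported from the literature.
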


In situations where $\mu(V^2) <\infty$ can not be established, we will have to establish a weaker (non-functional) form of the central limit theorem, which will depend on a CLT for martingales such as~\cite[Theorem 2.1]{Komorowski2012}. We require the following lemmas, the proofs of which may be found in Appendix~\ref{app:technical}.

\begin{lemma}
\label{lem:martingale-expressions}
Suppose Assumption~\ref{ass:GlynnMeyn} is satisfied. Let $g \in \mathcal M(E)$ be measurable, satisfy $|g| \leq f$ and $\mu(g) = 0$. Suppose $\phi$ is a solution to the Poisson equation~\eqref{eq:Poisson-equation} for the generator $L$ given by~\eqref{eq:generator-zigzag} and suppose $\mu(|\phi|) < \infty$.  Define the process
\begin{equation} \label{eq:martingale-definition} M(t) := \phi(Z(t)) - \phi(Z(0)) + \int_0^t  g(Z(s)) \ d s, \quad t \geq 0,\end{equation}
where $(Z(t))_{t \geq 0}$ denote trajectories of the Zig-Zag process.
Then $M$ is a martingale with respect to the stationary measure $\P_{\mu}$. Define $\psi(x) := \half(\phi(x,+1) - \phi(x,-1))$ and for a given trajectory $Z(t) = (X(t), \Theta(t))$ of the Zig-Zag process, let $N(t)$ denote the process counting the switches in $\Theta$, and let $(T_i)_{i=1}^{\infty}$ denote the random times at which these switches occur. 
The quadratic variation process $[M]$ and predictable quadratic variation process $\langle M \rangle$ admit the following expressions:
\begin{align*}
 [M](t) & = 4 \sum_{i=1}^{N(t)} \psi^2(X(T_i)), \quad \mbox{and} \\
 \langle M \rangle(t) & = 4 \int_0^t \lambda(X(s), \Theta(s)) \psi^2(X(s)) \ d s.
\end{align*}
\end{lemma}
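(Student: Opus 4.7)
The plan is to apply Dynkin's formula for the extended generator, then read off the two quadratic variations from the pure-jump structure of the Zig-Zag process.

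First, since $\phi \in \mathcal D(L)$ solves $L\phi = -g$ (recall $\mu(g)=0$), Dynkin's formula for extended generators of a PDMP (see Davis \cite{Davis1984}) gives that
\[
M(t) = \phi(Z(t)) - \phi(Z(0)) - \int_0^t L\phi(Z(s))\,ds = \phi(Z(t)) - \phi(Z(0)) + \int_0^t g(Z(s))\,ds
\]
is a local martingale under $\P_{z}$ for every starting point $z$, hence under $\P_\mu$. To upgrade this to a genuine martingale under $\P_\mu$, I use stationarity: $\E_\mu[|\phi(Z(t))|]=\mu(|\phi|)<\infty$ by hypothesis, and $\E_\mu\big[\int_0^t |g(Z(s))|\,ds\big]\le t\,\mu(f)<\infty$ by Proposition~\ref{prop:solution-Poisson-equation}. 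Hence $M$ is uniformly integrable on each compact interval, so the local martingale is a true $\P_\mu$-martingale.

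Next I compute $[M]$. Between the switching times $T_i$, $X(t)$ moves linearly, $\Theta(t)$ is constant, and $\phi(Z(t))$ and $\int_0^t g(Z(s))\,ds$ evolve absolutely continuously; in particular $M$ has no continuous martingale part and accumulates quadratic variation only through jumps. At a switching time $T_i$, only $\Theta$ jumps (not $X$), and
\[
\Delta M(T_i) = \phi(X(T_i), -\Theta(T_i-)) - \phi(X(T_i), \Theta(T_i-)) = -\Theta(T_i-)\bigl(\phi(X(T_i),+1) - \phi(X(T_i),-1)\bigr),
\]
so $\Delta M(T_i) = \mp 2\psi(X(T_i))$ and $(\Delta M(T_i))^2 = 4\psi^2(X(T_i))$ regardless of the sign of $\Theta(T_i-)$. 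Summing over jumps in $[0,t]$ gives $[M](t) = 4\sum_{i=1}^{N(t)} \psi^2(X(T_i))$.

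Finally, $\langle M\rangle$ is by definition the (unique, predictable, continuous) compensator of $[M]$, i.e.\ of the pure-jump integer-valued random measure $\sum_i \delta_{T_i}(dt)\otimes\delta_{(X(T_i),\Theta(T_i-))}$. For a PDMP with switching intensity $\lambda(x,\theta)$, this compensator is absolutely continuous with intensity $\lambda(X(t),\Theta(t))$; integrating the deterministic jump-squared $4\psi^2(X(t))$ against it yields
\[
\langle M\rangle(t) = 4\int_0^t \lambda(X(s),\Theta(s))\,\psi^2(X(s))\,ds,
\]
as claimed. The main subtlety is the first step: carefully justifying the passage from local to true martingale under $\P_\mu$ using only $\mu(|\phi|)<\infty$ and $|g|\le f\in L^1(\mu)$; the jump-based computations themselves are direct once the piecewise-deterministic structure is exploited.
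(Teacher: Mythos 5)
Your proof is correct and follows essentially the same route as the paper: establish the local-martingale property via $\phi\in\mathcal D(L)$, upgrade to a true martingale by the same stationarity and integrability estimates, and then read off $[M]$ and $\langle M\rangle$ from the jump structure of the PDMP. The only cosmetic difference is that the paper first rewrites $M$ explicitly as a compensated jump sum, $M(t)=-2\sum_{i=1}^{N(t)}\psi(Z(T_i))+2\int_0^t\lambda(Z(s))\psi(Z(s))\,ds$, and then invokes Kallenberg's Theorem 26.6, whereas you extract the jump sizes $\Delta M(T_i)=-2\Theta(T_i-)\psi(X(T_i))$ and the compensating intensity $\lambda(X(s),\Theta(s))$ directly.
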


\begin{lemma}
\label{lem:expression-psi}
Suppose Assumption~\ref{ass:GlynnMeyn} holds and $\pi(x) V(x,\pm 1) \rightarrow 0$ as $|x| \rightarrow \infty$. Let $g \in \mathcal M(E)$ such that $|g| \leq f$ and $\mu(g) = 0$. Let  $\phi : E \rightarrow \R$ be as in Proposition~\ref{prop:solution-Poisson-equation}. Define $\psi(x) := \half(\phi(x,+1) - \phi(x,-1))$. Then $\psi$ admits the representation~\eqref{eq:expression-psi}.
Furthermore if, for some $\delta \in \R$, we have $\lim_{x \rightarrow \infty} |x|^{\delta} \pi(x) = 0$ and
\begin{equation} \label{eq:condition-psi-asymptotics} \lim_{|x| \rightarrow \infty} \frac{(g(x,+1) + g(x,-1))\pi(x)}{|x|^{\delta} \pi'(x)} = 0,
\end{equation}
then
\[ \lim_{|x| \rightarrow \infty}\frac{ \psi(x)}{|x|^{\delta}} = 0.\]
\end{lemma}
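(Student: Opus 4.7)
The plan is to derive an explicit representation for $\psi$ from the Poisson equation $L\phi=-g$ and then analyse the asymptotics of $\psi(x)/|x|^\delta$ via L'H\^opital's rule. Writing \eqref{eq:Poisson-equation} separately for $\theta=+1$ and $\theta=-1$ using the form \eqref{eq:generator-zigzag}, adding the two identities, and using $\lambda(x,+1)-\lambda(x,-1)=U'(x)=-\pi'(x)/\pi(x)$, one obtains a first-order linear ODE for $2\psi(x):=\phi(x,+1)-\phi(x,-1)$ whose integrating factor is $\pi$. This rearranges to the compact form
\[
  (2\pi\psi)'(x) = -\pi(x)\,h(x), \qquad h(x) := g(x,+1)+g(x,-1).
\]

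Integrating this identity requires boundary conditions at infinity. The bound $|\phi|\leq c_0(V+1)$ from Proposition~\ref{prop:solution-Poisson-equation} combined with the hypothesis $\pi(x)V(x,\pm 1)\to 0$ (together with the Lyapunov structure of $V$, which makes $V$ diverge and hence forces $\pi(x)\to 0$) yields $\pi(x)\psi(x)\to 0$ as $|x|\to\infty$. Moreover $\mu(g)=0$ is precisely $\int_\R \pi(y)h(y)\,dy=0$, so integrating the ODE from $x$ to $+\infty$ produces
\[
  \psi(x) = \frac{1}{2\pi(x)}\int_x^{+\infty}\pi(y)\,h(y)\,dy,
\]
which is the representation \eqref{eq:expression-psi}; integration from $-\infty$ to $x$ gives the same function by the vanishing of the total integral.

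For the asymptotic statement I would apply L'H\^opital's rule to the quotient $\psi(x)/|x|^\delta$. Under the hypothesis $|x|^\delta\pi(x)\to 0$, both $\int_x^\infty\pi h\,dy$ and $2|x|^\delta\pi(x)$ vanish as $x\to+\infty$, and the ratio of their derivatives equals
\[
  \frac{-\pi(x)h(x)}{\delta|x|^{\delta-1}\pi(x)+|x|^\delta\pi'(x)} = \frac{-h(x)\pi(x)/(|x|^\delta\pi'(x))}{1+\delta\pi(x)/(x\pi'(x))},
\]
which tends to zero by hypothesis \eqref{eq:condition-psi-asymptotics} provided the denominator factor stays bounded away from zero; the case $x\to-\infty$ is symmetric. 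The main obstacle, in my view, is precisely this last point: for Gaussian-type tails $x\pi'/\pi\to-\infty$ and the polynomial correction $\delta|x|^{\delta-1}\pi(x)$ is manifestly negligible relative to $|x|^\delta\pi'(x)$, while for polynomial tails $x\pi'/\pi$ has finite negative limit and $\delta$ must avoid a single exceptional value — a restriction that is tacitly encoded in \eqref{eq:condition-psi-asymptotics} itself, which presumes $\pi'(x)\neq 0$ for $|x|$ large.
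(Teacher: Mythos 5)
Your argument is correct and follows essentially the same route as the paper's proof: add the $\theta=+1$ and $\theta=-1$ components of the Poisson equation, recognize a first-order ODE for $\psi$ with integrating factor $\pi$, invoke the bound $|\phi|\leq c_0(V+1)$ together with $\pi V\to 0$ to fix the constant of integration (the paper writes the general solution with a constant $c$ and then shows $c=0$; you integrate directly from $x$ to $+\infty$ using the boundary condition — same thing), and finish with L'H\^opital's rule. The caveat you raise at the end is a fair one, and in fact applies equally to the paper's own proof: the paper underbraces $\delta x^{\delta-1}\pi(x)$ as $\to 0$ and treats it as negligible in the denominator without verifying that it cannot nearly cancel $x^\delta\pi'(x)$. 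Your rewriting in the form $\bigl(1+\delta\pi(x)/(x\pi'(x))\bigr)^{-1}$ times the hypothesised ratio makes the needed non-degeneracy explicit. It does hold in every situation where the lemma is invoked (for exponential-type tails $x\pi'(x)/\pi(x)\to-\infty$, and for polynomial tails the hypothesis $|x|^\delta\pi(x)\to 0$ forces $\delta$ strictly below the decay exponent, so the limit is bounded away from zero), but it is not a literal consequence of~\eqref{eq:condition-psi-asymptotics} alone — your attribution of the restriction to that condition is the one small slip in an otherwise accurate account.
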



%
%



%



\begin{theorem}[Central Limit Theorem for the Zig-Zag process]
\label{thm:zigzag-clt-general}
Suppose Assumption~\ref{ass:GlynnMeyn} is satisfied for the Zig-Zag process with generator~\eqref{eq:generator-zigzag} and let $g \in \mathcal M(E)$ satisfy $|g| \leq f$ and $\mu(g) = 0$. Furthermore suppose $V$ satisfies $\mu(V) < \infty$, or alternatively $\mu(|\phi|) < \infty$ where $\phi$ is the solution of the Poisson equation given by Proposition~\ref{prop:solution-Poisson-equation}.
Let $\psi$ be given by
\begin{equation}\label{eq:expression-psi}  \psi(x) =\frac 1 { 2 \pi(x)} \int_x^{\infty} \left\{  g(\xi, +1) +  g(\xi,-1) \right\}  \pi(\xi) \ d \xi\end{equation}
and define 
\begin{equation} \label{eq:zigzag-asymptotic-variance} \sigma^2_g := 4 \int_E \lambda(x,\theta) \psi^2(x) \ d \mu(x,\theta).\end{equation} If $\sigma_g^2 < \infty$ then under the stationary distribution $\P_{\mu}$ over the trajectories of the Zig-Zag process,
\[ \frac 1 {\sqrt{t}} \int_0^t g(Z(s)) \ d s \Rightarrow \mathcal N(0, \sigma^2_g) \quad \mbox{as $t \rightarrow \infty$}.\]
\end{theorem}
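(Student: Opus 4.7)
The strategy is to reduce to a martingale central limit theorem via a Poisson-equation decomposition. First I would invoke Proposition~\ref{prop:solution-Poisson-equation} to obtain a solution $\phi$ of $L\phi = -g$ (using $\mu(g) = 0$) with $|\phi| \leq c_0(V+1)$; either of the stated integrability hypotheses ensures $\mu(|\phi|) < \infty$. By Lemma~\ref{lem:martingale-expressions},
\[ M(t) := \phi(Z(t)) - \phi(Z(0)) + \int_0^t g(Z(s))\,ds \]
is a $\P_{\mu}$-martingale, and rearranging gives
\[ \frac{1}{\sqrt{t}}\int_0^t g(Z(s))\,ds = \frac{M(t)}{\sqrt{t}} - \frac{\phi(Z(t)) - \phi(Z(0))}{\sqrt{t}}. \]
Under $\P_{\mu}$, stationarity and $\mu(|\phi|) < \infty$ make both $\phi(Z(0))$ and $\phi(Z(t))$ $L^1$-bounded, so the boundary term tends to $0$ in probability by Chebyshev's inequality. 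It therefore suffices to prove $M(t)/\sqrt{t} \Rightarrow \mathcal N(0,\sigma_g^2)$.

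I would next identify $\tfrac{1}{2}(\phi(x,+1)-\phi(x,-1))$ with the $\psi$ of \eqref{eq:expression-psi} via Lemma~\ref{lem:expression-psi}, so that the predictable quadratic variation supplied by Lemma~\ref{lem:martingale-expressions} reads
\[ \langle M\rangle(t) = 4\int_0^t \lambda(X(s),\Theta(s))\,\psi^2(X(s))\,ds. \]
Positive Harris recurrence of the Zig-Zag process under Assumption~\ref{ass:GlynnMeyn} (Proposition~\ref{prop:solution-Poisson-equation}) together with uniqueness of the invariant measure $\mu$ gives ergodicity, so Birkhoff's theorem applies. The hypothesis $\sigma_g^2 < \infty$ is precisely the statement that the integrand is in $L^1(\mu)$, hence
\[ \frac{\langle M\rangle(t)}{t} \longrightarrow 4\int_E \lambda(x,\theta)\,\psi^2(x)\,d\mu(x,\theta) = \sigma_g^2 \quad \mbox{$\P_{\mu}$-a.s.} \]

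To conclude I would invoke a martingale central limit theorem (Rebolledo's theorem, or \cite[Theorem~2.1]{Komorowski2012}). Besides the convergence of $\langle M\rangle(t)/t$, one must verify a conditional Lindeberg condition on the jumps of $M$, which occur precisely at the velocity-switching times $T_i$ with $\Delta M(T_i) = -2\,\Theta(T_i{-})\,\psi(X(T_i))$. The quadratic variation $[M](t) = 4\sum_{T_i\leq t}\psi^2(X(T_i))$ likewise has long-run average $\sigma_g^2$ by ergodicity, and a truncation argument exploiting $\sigma_g^2 < \infty$ should give
\[ \frac{1}{t}\sum_{T_i\leq t} 4\psi^2(X(T_i))\,\1_{\{4\psi^2(X(T_i)) > \varepsilon^2 t\}} \longrightarrow 0 \quad \mbox{in probability} \]
for every $\varepsilon > 0$. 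I expect this Lindeberg verification to be the main obstacle, since $\psi$ need not be bounded and single large jumps must be controlled via the $L^1(\mu)$-finiteness of the squared jump sizes rather than a pointwise bound; splitting $[M]$ into a large-jump remainder (which becomes $o(t)$ by dominated convergence on ergodic averages) and a bounded-jump contribution whose average is close to $\sigma_g^2$ is the natural route. Once this is in hand, the martingale CLT yields $M(t)/\sqrt{t} \Rightarrow \mathcal N(0,\sigma_g^2)$, and combining with the vanishing boundary term delivers the stated convergence.
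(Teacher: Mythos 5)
Your proposal follows the same backbone as the paper's argument: solve the Poisson equation to get $\phi$, form the Dynkin martingale $M(t) = \phi(Z(t)) - \phi(Z(0)) + \int_0^t g(Z(s))\,ds$, note that the boundary term $(\phi(Z(t)) - \phi(Z(0)))/\sqrt{t}$ vanishes in probability under $\P_\mu$ because $\phi(Z(t))$ has the fixed law $\mu_\star\phi$, and identify the jump-size function $\psi$ with \eqref{eq:expression-psi} via Lemma~\ref{lem:expression-psi}. Where you diverge is the final martingale CLT. The paper simply observes that under $\P_\mu$ the increments of $M$ are \emph{stationary and ergodic} (ergodicity being inherited from uniqueness of $\mu$) and square-integrable with $\E|M(1)|^2 = \E\langle M\rangle(1) = 4\,\mu(\lambda\psi^2) = \sigma_g^2 < \infty$, so \cite[Theorem~2.1]{Komorowski2012} -- the Billingsley--Ibragimov CLT for stationary ergodic martingale increments -- applies directly and delivers $M(t)/\sqrt{t} \Rightarrow \mathcal N(0,\sigma_g^2)$ with \emph{no separate Lindeberg verification}: that condition is automatic in the stationary ergodic setting. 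You instead head down the Rebolledo route, which requires establishing both $\langle M\rangle(t)/t \to \sigma_g^2$ a.s.\ (your Birkhoff argument for this is fine) and a conditional Lindeberg condition on the jumps, which you correctly flag as the delicate part and sketch a truncation argument for. That sketch is plausible and could be completed, but it is extra work the paper avoids; moreover you slightly conflate the two routes by writing ``(Rebolledo's theorem, or \cite[Theorem~2.1]{Komorowski2012})'' as if both need the Lindeberg check -- the latter does not. In short: your decomposition is the same, but you chose the heavier of the two available martingale CLTs; switching to the stationary-ergodic-increments version closes the argument immediately, exactly as the paper does.
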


\begin{proof}
Let $(Z(t))_{t \geq 0} = (X(t),\Theta(t))_{t \geq 0}$ denote the stationary Zig-Zag process defined on an underlying probability space $(\Omega, \mathcal F, (\mathcal F_t), \P_{\mu})$. 
Let $\phi$ denote the solution of the Poisson equation~\eqref{eq:Poisson-equation}, and define the martingale $M$ as in Lemma~\ref{lem:martingale-expressions}, using that $\mu(|\phi|) < \infty$. Indeed, $|\phi| \leq c_0 (V + 1)$ by Proposition~\ref{prop:solution-Poisson-equation}, and it is assumed that either $\mu(V) < \infty$ or else $\mu(|\phi|) < \infty$. By Lemma~\ref{lem:expression-psi}, $\psi(x):= \half (\phi(x,-1) - \phi(x,+1))$ admits the stated expression.
Due to the stationarity of the Zig-Zag process, $M$ is stationary, and $\sigma^2_g := \E |M(1)|^2$.
By~\cite[Theorem 2.1]{Komorowski2012}, it follows that $M(t)/\sqrt{t}$ converges in distribution to $\mathcal N(0,\sigma_g^2)$.
Because $(Z(t))_{t \geq 0}$ is stationary under $\P_{\mu}$, it follows that $\mathcal L(\phi(Z(t)) ) = \mathcal L(\phi(Z(0))) = \mu_{\star}\phi$ (the pushforward of $\mu$ by $\phi$). As a consequence, 
\[ \frac 1 {\sqrt{t}}\left( \phi(Z(t)) - \phi(Z(0))\right) \Rightarrow 0.\] The stated result now follows by combining the obtained limits in~\eqref{eq:martingale-definition}.
\end{proof}

We have now obtained two different expressions for the asymptotic variance, namely \eqref{eq:asvar-alternative} and \eqref{eq:zigzag-asymptotic-variance}. In cases where both Theorem~\ref{thm:unimodal-CLT} and Theorem~\ref{thm:zigzag-clt-general} apply these expression of course have the same value. In Appendix~\ref{sec:equiv_var} we show the equality of both expressions directly.

We will now introduce some specific assumptions on the switching rates which will suffice to establish a CLT for the Zig-Zag process.

\subsubsection{The exponentially ergodic case}

\begin{assumption}
\label{ass:exp-ergodic}
The switching rate $\lambda : E \rightarrow \R$ is continuous and there exists a constant $x_0 > 0$ such that
\begin{itemize}
 \item[(i)] $\inf_{x \geq x_0} \lambda(x,+1) > \sup_{x \geq x_0} \lambda(x,-1)$, and 
 \item[(ii)] $\inf_{x \leq -x_0} \lambda(x,-1) > \sup_{x \leq -x_0} \lambda(x,+1)$.
\end{itemize}
\end{assumption}
In other words, there are constants $M^- > m^- \geq 0$, $M^+ > m^+ \geq 0$, such that
\begin{align*} \lambda(x,+1) & \geq M^+ > m^+ \geq  \lambda(x,-1) \quad & \mbox{for all $x \geq x_0$},  \quad  & \mbox{and}  \\
 \lambda(x,-1) & \geq M^- > m^- \geq \lambda(x,+1) \quad & \mbox{for all $x \leq -x_0$}.
\end{align*}

It is established in \cite[Theorem 5]{bierkensroberts2015} that under these conditions the Zig-Zag process is exponentially ergodic.

\begin{theorem}[CLT and FCLT for the Zig-Zag process in the exponentially ergodic case]
\label{thm:exp-ergodic-clt}
Suppose Assumption~\ref{ass:exp-ergodic} is satisfied. Let $(Z(t))_{t \geq 0}$ denote the Zig-Zag process with generator~\eqref{eq:generator-zigzag}. Then there exists a unique invariant probability distribution $\mu$ on $E$ for $(Z(t))_{t \geq 0}$. Furthermore there are constants $0 < \alpha^+ \leq M^+ - m^+$ and $0 < \alpha^- \leq M^- - m^-$, with $M^{\pm}, m^{\pm}$ as above, such that for any function $g \in \mathcal M(E)$ satisfying $\mu(g) = 0$ and, for $\theta = \pm 1$,
\begin{equation} \label{eq:exp-ergodic-condition-g} \limsup_{x \rightarrow + \infty} \frac 1 {|x|} \log |g(x,\theta)| < \alpha^+ \quad \mbox{and} \quad \limsup_{x \rightarrow -\infty} \frac 1 {|x|} \log |g(x,\theta)| < \alpha^-,\end{equation}
and if $\sigma_g^2$ as given by~\eqref{eq:zigzag-asymptotic-variance} satisfies $\sigma_g^2 < \infty$,
then 
\[ \frac 1 {\sqrt t} \int_0^t g(Z(s)) \ d s  \Rightarrow \mathcal N(0, \sigma_g^2) \quad \mbox{as $t \rightarrow \infty$}.\]

If in addition $\mu( \1_{[0,\infty)}(x) \exp(2 \alpha^+ x)) < \infty$ and $\mu( \1_{(-\infty, 0]}(x) \exp(-2 \alpha^- x)) < \infty$, then $\sigma_g^2 < \infty$ and for any initial distribution $\eta$ on $E$, under $\P_{\eta}$ the process $(Z(t))_{t \geq 0}$ satisfies a Functional Central Limit Theorem, in the sense that 
\[  \left( \frac 1 {\sqrt n} \int_0^{n t} g(Z(s)) \ d s \right)_{t \in [0,1]} \Rightarrow \sigma_g B \quad \mbox{as $n \rightarrow \infty$},\]
where $B$ denotes a standard Brownian motion  and the weak convergence is with respect to the Skorohod topology on $D([0,1])$. 
\end{theorem}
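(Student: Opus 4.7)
My plan is to produce a Lyapunov function $V$ tailored to the exponential drift encoded by Assumption~\ref{ass:exp-ergodic}, verify Assumption~\ref{ass:GlynnMeyn} with a dominating function $f$ matching the admissible growth rate of $g$, and then apply Theorem~\ref{thm:zigzag-clt-general} for the CLT and Proposition~\ref{prop:GlynnMeyn-FCLT} for the FCLT. Existence and uniqueness of the invariant $\mu$ are inherited from Proposition~\ref{prop:well-posedness-stationary-zigzag} once Assumption~\ref{ass:exp-ergodic} is seen to imply Assumption~\ref{ass:well-posedness-stationary-zigzag}: the identity $\lambda(x,+1)-\lambda(x,-1)=U'(x)$ gives $U'(x)\geq M^+-m^+>0$ for $x\geq x_0$ and $U'(x)\leq -(M^--m^-)<0$ for $x\leq -x_0$, so $\pi\propto\exp(-U)$ is integrable and has exponential tails of rate at least $M^\pm-m^\pm$ on the respective sides.

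Following the construction in \cite[Theorem~5]{bierkensroberts2015}, I would set $V(x,+1)=\exp(\alpha_+ x)$ and $V(x,-1)=b_+\exp(\alpha_+ x)$ on $\{x\geq x_0\}$, symmetrically $V(x,-1)=\exp(-\alpha_- x)$ and $V(x,+1)=b_-\exp(-\alpha_- x)$ on $\{x\leq -x_0\}$, and interpolate smoothly over the compact middle region while keeping $V\geq 1$. A direct computation of $LV$ using \eqref{eq:generator-zigzag}, together with the inequalities $\lambda(x,+1)\geq M^+$ and $\lambda(x,-1)\leq m^+$ for $x\geq x_0$, yields $LV\leq -cV+K\1_C$ outside a compact set $C$, provided $b_+\in (m^+/M^+,1)$ and $\alpha_+\in \bigl(m^+(1-b_+)/b_+,\,M^+(1-b_+)\bigr)$; this interval is non-empty in the stated range of $b_+$, and letting $b_+\downarrow m^+/M^+$ shows that any $\alpha^+$ strictly less than $M^+-m^+$ is attainable. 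Symmetric considerations deliver $\alpha^-$. Since compact sets are petite by Proposition~\ref{prop:well-posedness-stationary-zigzag}, Assumption~\ref{ass:GlynnMeyn} holds with $f:=c(V+1)$.

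For the CLT, given $g$ whose growth exponents are strictly less than $\alpha^\pm$, I would shrink the Lyapunov rates $\alpha_\pm$ until they strictly exceed those of $g$ but remain below $\alpha^\pm$; after rescaling $V$ by a constant, one has $|g|\leq f$. The exponential tail bound on $\pi$ from the first paragraph, combined with $\alpha_\pm<M^\pm-m^\pm$, yields $\mu(V)<\infty$, hence $\mu(|\phi|)<\infty$ via the bound $|\phi|\leq c_0(V+1)$ from Proposition~\ref{prop:solution-Poisson-equation}. Theorem~\ref{thm:zigzag-clt-general} then delivers the CLT, the hypothesis $\sigma_g^2<\infty$ being supplied by assumption.

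For the FCLT, by taking $\alpha_\pm$ arbitrarily close to $\alpha^\pm$, the extra hypothesis $\mu(\1_{[0,\infty)}\exp(2\alpha^+ x))+\mu(\1_{(-\infty,0]}\exp(-2\alpha^- x))<\infty$ becomes exactly $\mu(V^2)<\infty$, so Proposition~\ref{prop:GlynnMeyn-FCLT} yields an invariance principle $Y_n\Rightarrow \gamma_g B$ under every initial law $\eta$, with $\gamma_g^2<\infty$. The remaining and main technical point is the identification $\gamma_g^2=\sigma_g^2$ with $\sigma_g^2$ given by \eqref{eq:zigzag-asymptotic-variance}: Lemma~\ref{lem:expression-psi} (applicable since $\pi V\to 0$) forces $\psi$ to be subexponential of rate strictly below $\alpha^\pm$, so that $\lambda\psi^2\in L^1(\mu)$ and hence $\sigma_g^2<\infty$; both $\gamma_g^2$ and $\sigma_g^2$ then coincide with the stationary second moment $\E_\mu[M(1)^2]$ of the Dynkin martingale from Lemma~\ref{lem:martingale-expressions}, closing the circle.
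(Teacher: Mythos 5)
Your proposal follows essentially the same path as the paper's proof: establish Assumption~\ref{ass:well-posedness-stationary-zigzag} from Assumption~\ref{ass:exp-ergodic} via Proposition~\ref{prop:well-posedness-stationary-zigzag} for existence and uniqueness of $\mu$, import the exponentially growing Lyapunov function from \cite[Theorem~5]{bierkensroberts2015} so that Assumption~\ref{ass:GlynnMeyn} holds with $f=V$, dominate $g$ by $V$ via the growth condition~\eqref{eq:exp-ergodic-condition-g}, and then read off the CLT from Theorem~\ref{thm:zigzag-clt-general} (using $\mu(|\phi|)\leq c_0(\mu(V)+1)<\infty$) and the FCLT from Proposition~\ref{prop:GlynnMeyn-FCLT} under the stronger integrability assumption, which is precisely $\mu(V^2)<\infty$. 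You are in fact somewhat more careful than the paper on two points: you make the admissible range of $\alpha^\pm$ explicit via the parametrised construction with $b_\pm$ and check that any rate strictly below $M^\pm-m^\pm$ is attainable, and you explicitly close the loop that the Glynn--Meyn constant $\gamma_g^2$ coincides with the expression $\sigma_g^2$ of~\eqref{eq:zigzag-asymptotic-variance} by identifying both with $\E_\mu[M(1)^2]$ through Lemma~\ref{lem:martingale-expressions} and Lemma~\ref{lem:expression-psi}, a step the paper leaves implicit.
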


Although the constants $\alpha^{\pm}$ are not explicitly specified in the formulation of Theorem~\ref{thm:exp-ergodic-clt}, their construction can be traced in the proof of \cite[Theorem 5]{bierkensroberts2015}. Note that, irrespective of the value of $\alpha^{\pm}$, ~\eqref{eq:exp-ergodic-condition-g} is satisfied for any sub-exponential function $g$.

\begin{proof}
Assumption~\ref{ass:exp-ergodic} implies Assumption~\ref{ass:well-posedness-stationary-zigzag}. By Proposition~\ref{prop:well-posedness-stationary-zigzag} it follows that $(Z(t))_{t \geq 0}$ admits a unique invariant probability distribution $\mu$. By tracing the proof of \cite[Theorem 5]{bierkensroberts2015}, it follows that there exists a Lyapunov function $V : E \rightarrow [0,\infty)$ such that 
\[ V(x,\theta) = c^+(\theta) \exp(\alpha^+ x), \quad x \geq x_0, \quad \mbox{and} \quad V(x,\theta) =  c^-(\theta) \exp(- \alpha^- x), \quad x \leq - x_0,\]
for some constants $c^{\pm} > 0$ and $\alpha^{\pm}$ as specified in the statement of the theorem, and such that Assumption~\ref{ass:GlynnMeyn} is satisfied with $f := V$. By the stated assumptions on $g$, possibly after a rescaling by a  constant factor, it follows that $|g| \leq f$. By Proposition~\ref{prop:solution-Poisson-equation}, $\mu(f) < \infty$ and there exists a solution $\phi$ for the Poisson equation~\eqref{eq:Poisson-equation} satisfying $\mu(\phi) = 0$ and $|\phi| \leq c_0(V+1)$ for some constant $c_0 > 0$. In particular $\mu(|\phi|) < \infty$. The CLT is therefore a result of Theorem~\ref{thm:zigzag-clt-general}.
Under the stronger assumption, $\mu(V^2) <\infty$ and therefore the FCLT follows by Proposition~\ref{prop:GlynnMeyn-FCLT}.
\end{proof}

\begin{remark}
A sufficient condition for $\sigma_g^2 < \infty$ is that $g \in \mathcal M(E)$ and $\lambda : E \rightarrow [0,\infty)$ are of polynomial growth in $x$. Indeed if $g(x,\theta) = O(|x|^{\beta})$ then by Lemma~\ref{lem:expression-psi}, for any $\delta > \beta$, $\psi(x) = o(|x|^{\delta})$. Then since $\pi(x) = O(\exp(-(M^+- m^+)x))$ for $x \geq x_0$ (and similarly for $x \leq - x_0$), it follows that $\psi^2(x) \lambda(x,\theta) \pi(x)$ has bounded integral.
\end{remark}

\subsubsection{Heavy-tailed distributions}

\begin{assumption}
\label{ass:heavy-tailed}
$\lambda : E \rightarrow [0,\infty)$ is continuous. There exist constants $\alpha > 0$ and $0 \leq \kappa \leq 1$ such that $\lambda(x,+1) \geq \alpha x^{-\kappa}$ for $x > x_0$ and $\lambda(x,-1) \geq \alpha (-x)^{-\kappa}$ for $x <  -x_0$, with $\alpha > 2$ in case $\kappa = 1$. Furthermore $\lambda(x,-1) = 0$ for $x > x_0$ and $\lambda(x,+1) = 0$ for $x < - x_0$.
\end{assumption}



\begin{lemma}
\label{lem:lyapunov-heavy-tailed}
Suppose Assumption~\ref{ass:heavy-tailed} is satisfied. Let $1 \leq \beta < \alpha$ in case $\kappa = 1$, and $1 \leq \beta < \infty$ in case $\kappa < 1$. There exists a norm-like function $V : E \rightarrow [0,\infty)$, and a function $f$ of the form $f(x,\theta) = c |x|^{\beta - 1}$ for some $c > 0$, and $x_1 > 0$ such that
\[ L V(x,\theta) \leq -f(x,\theta), \quad |x| > x_1, \quad \theta \in \{-1,+1\}.\] 
\end{lemma}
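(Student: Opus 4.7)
My plan is to build $V$ out of a polynomial ansatz that penalises the ``outward'' velocity, with a correction whose order is tuned to be cancelled out by the Poisson switching term. Concretely, for $x > x_1$ (a cut-off $x_1 > x_0$ to be chosen) I take
\begin{equation*}
V(x,+1) = x^\beta + c\, x^{\beta+\kappa-1}, \qquad V(x,-1) = x^\beta,
\end{equation*}
mirrored on $x<-x_1$ by swapping the roles of $+1$ and $-1$, and on $|x| \leq x_1$ I extend $V$ to any non-negative function that is absolutely continuous in $x$ (hence in $\mathcal D(L)$). Since $V(x,\theta) \geq |x|^\beta \to \infty$, $V$ is norm-like.

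To check the drift bound I feed this into~\eqref{eq:generator-zigzag}. For $x > x_1$ with $\theta = -1$ the switching rate vanishes by Assumption~\ref{ass:heavy-tailed}, so $LV(x,-1) = -\beta x^{\beta-1}$, already of the required form. For $\theta = +1$, the sign $V(x,-1) - V(x,+1) = -c x^{\beta+\kappa-1} < 0$ lets me use $\lambda(x,+1) \geq \alpha x^{-\kappa}$ to obtain
\begin{equation*}
LV(x,+1) \leq \beta x^{\beta-1} + c(\beta+\kappa-1)\, x^{\beta+\kappa-2} - \alpha c\, x^{\beta-1}.
\end{equation*}
When $\kappa < 1$ the middle term is genuinely of lower order than $x^{\beta-1}$, so any $c > \beta/\alpha$ forces the right-hand side to behave like $-c_\star x^{\beta-1}$ for $x_1$ large enough. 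The computation on $x < -x_1$ is identical by reflection, using $\lambda(x,-1) \geq \alpha |x|^{-\kappa}$ and $\lambda(x,+1) = 0$ there.

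The delicate point---which I expect to be the main obstacle---is $\kappa = 1$: then the middle term equals $c\beta\, x^{\beta-1}$, of the same order as the other two, and the three coefficients collapse to $\beta + c(\beta - \alpha)$. Negativity requires $c > \beta/(\alpha-\beta)$, which uses precisely the hypothesis $\beta < \alpha$ (non-empty range because $\alpha > 2$ and $1 \leq \beta < \alpha$). Having fixed such a $c$, enlarging $x_1$ if necessary to absorb the lower-order slack and setting $f(x,\theta) := c_\star |x|^{\beta-1}$ yields $LV(x,\theta) \leq -f(x,\theta)$ for all $|x| > x_1$, $\theta \in \{-1,+1\}$, as required.
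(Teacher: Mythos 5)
Your proof is correct and follows essentially the same strategy as the paper: construct a velocity-dependent polynomial Lyapunov function, use the fact that the switching rate vanishes in the inward direction so the drift term alone delivers $-\beta x^{\beta-1}$ there, and for the outward direction exploit the negativity of $V(x,-\theta)-V(x,\theta)$ together with the lower bound $\lambda \geq \alpha|x|^{-\kappa}$ to dominate the positive transport term. The only cosmetic difference is that you write $V(x,+1) = x^\beta + c\,x^{\beta+\kappa-1}$ whereas the paper takes $V(x,+1) = k\,x^\beta$, $V(x,-1) = \beta^{-1}x^\beta$; for $\kappa=1$ your correction term has exponent $\beta$ so the two ansätze coincide up to scaling, and for $\kappa<1$ both reduce to the same ``leading term plus lower-order slack'' computation. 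Your identification of the threshold $c > \beta/(\alpha-\beta)$ in the critical case $\kappa=1$, and the need for $\beta<\alpha$ at that point, matches the paper's choice $k = 2\alpha/(\beta(\alpha-\beta))$ exactly in spirit.
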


\begin{proof}
Let $V$ be given for $x > x_0$ by $V(x,+1) = k x^{\beta}$ and $V(x,-1) = \frac 1 {\beta} x^{\beta}$, with 
\[ 
 k = \begin{cases} \frac{2 \alpha}{\beta(\alpha-\beta)} \quad & \mbox{if $\kappa = 1$,} \\
      \frac {2}{\beta} \quad & \mbox{if $0 \leq \kappa < 1$.}
     \end{cases}
\] 
Then for $x > x_0$,
$LV(x,-1) = - x^{\beta - 1}$ 
and 
\begin{align*} L V(x,+1) & = k \beta x^{\beta -1} + \lambda(x,+1) \left( \frac 1 {\beta} - k \right) x^{\beta} \leq k \beta x^{\beta - 1} + \alpha \left( \frac 1 \beta - k \right)  x^{\beta - \kappa}  \\
 & = \begin{cases} -\frac{\alpha}{\beta} x^{\beta - 1} \quad & \mbox{if $\kappa = 1$,} \\
          2 x^{\beta - 1} - \frac{\alpha}{\beta} x^{\beta - \kappa} \quad & \mbox{if $0 \leq \kappa < 1$.}
        \end{cases}
\end{align*}
In the case $\kappa < 1$, the negative term will dominate for $x$ sufficiently large. It follows in either case that for a suitable constant $c > 0$ and $x_1 \geq x_0$, $LV(x,\pm 1) \leq - c x^{\beta - 1} \leq - 1$ for all $x \geq x_1$. The situation for $x \leq -x_0$ is completely analogous, and within $[-x_0, x_0]$, the function $V$ can be continuously and differentiably extended.
\end{proof}

\begin{remark}
In fact for Lemma~\ref{lem:lyapunov-heavy-tailed} we only require $\alpha > 1$ in case $\kappa = 1$, because this allows us to choose $\beta \in [1, \alpha)$. However in order to obtain $\mu(V) <\infty$ as required for the proof of the following theorem we need the stronger assumption $\alpha > 2$ in case $\kappa = 1$.
\end{remark}



\begin{theorem}[CLT and FCLT for the Zig-Zag process with a heavy-tailed stationary distribution]
\label{thm:heavy-tailed-clt}
Suppose Assumption~\ref{ass:heavy-tailed} is satisfied. 
Let $(Z(t))_{t \geq 0}$ denote the Zig-Zag process with generator~\eqref{eq:generator-zigzag}. Then there exists a unique invariant probability distribution $\mu$ on $E$ for $(Z(t))_{t \geq 0}$. Suppose $g \in \mathcal M(E)$ with $\mu(g) = 0$ and $g(x,\theta) = O(|x|^{\beta - 1})$ where $1 \leq \beta < \alpha - 1$ in case $\kappa = 1$ and $1 \leq \beta < \infty$ in case $\kappa < 1$. Furthermore suppose $\sigma_g^2 := 4 \int_E \lambda(x,\theta) \psi^2(x) \ d \mu(x,\theta) < \infty$, where $\psi$ is given by~\eqref{eq:expression-psi}. 

Then the stationary Zig-Zag process $(Z(t))_{t \geq 0}$ with switching rates $\lambda$ satisfies a CLT with asymptotic variance $\sigma_g^2$, i.e. under the stationary measure $\P_{\mu}$ on the trajectories of the Zig-Zag process,
\[ \frac 1 {\sqrt{t}} \int_0^t g(Z(s)) \ d s \Rightarrow \mathcal N(0,\sigma_g^2) \quad \mbox{as $t \rightarrow \infty$}.\]

If furthermore either 
\begin{itemize}
 \item[(i)] $\kappa < 1$, or
 \item[(ii)] $\kappa = 1$, $\alpha > 3$ and $1 \leq \beta < (\alpha -1)/2$, 
\end{itemize}
then $\sigma_g^2 < \infty$ and for any initial distribution $\eta$ on $E$, under $\P_{\eta}$ the process $(Z(t))_{t \geq 0}$ satisfies a Functional Central Limit Theorem, in the sense that 
\[  \left( \frac 1 {\sqrt n} \int_0^{n t} g(Z(s)) \ d s \right)_{t \in [0,1]} \Rightarrow \sigma_g B \quad \mbox{as $n \rightarrow \infty$},\]
where $B$ denotes a standard Brownian motion  and the weak convergence is with respect to the Skorohod topology on $D([0,1])$. 
\end{theorem}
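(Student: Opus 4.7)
The plan is to verify the hypotheses of Theorem~\ref{thm:zigzag-clt-general} for the CLT, and those of Proposition~\ref{prop:GlynnMeyn-FCLT} for the FCLT, using the Lyapunov function supplied by Lemma~\ref{lem:lyapunov-heavy-tailed}. Assumption~\ref{ass:heavy-tailed} implies Assumption~\ref{ass:well-posedness-stationary-zigzag}, so Proposition~\ref{prop:well-posedness-stationary-zigzag} gives the existence of a unique invariant distribution $\mu$, irreducibility of the process, and the fact that compact sets are petite.

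For the CLT, first I would invoke Lemma~\ref{lem:lyapunov-heavy-tailed} with the chosen $\beta$ to obtain $V$ and $f(x,\theta) = c|x|^{\beta-1}$ satisfying $LV \leq -f$ outside a compact set $C$, upgraded on $C$ to the form of~\eqref{eq:lyapunov-condition} by absorbing the supremum of $LV+f$ on $C$ into the constant $b$ and, if necessary, replacing $f$ by $\max(1,f)$ so it is bounded below by $1$. This verifies Assumption~\ref{ass:GlynnMeyn}. The growth condition $g(x,\theta)=O(|x|^{\beta-1})$ then gives $|g|\leq f$ after a harmless rescaling. Next I would check $\mu(V)<\infty$ by using the lower bound on $\lambda$ to estimate $\pi$: for $x>x_0$, $U'(x) = \lambda(x,+1)\geq \alpha x^{-\kappa}$, so in the case $\kappa=1$ one has $\pi(x) = O(x^{-\alpha})$, and in the case $\kappa<1$ the density $\pi$ decays like $\exp(-\frac{\alpha}{1-\kappa} x^{1-\kappa})$. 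Since $V(x,\pm 1)\sim |x|^\beta$, the condition $\beta<\alpha-1$ in the case $\kappa=1$, and $\beta$ arbitrary in the case $\kappa<1$, both ensure $\mu(V)<\infty$. Given that $\sigma_g^2<\infty$ is assumed, Theorem~\ref{thm:zigzag-clt-general} applies directly and yields the CLT, with $\psi$ of the form~\eqref{eq:expression-psi} supplied by Lemma~\ref{lem:expression-psi} (since $\mu(V)<\infty$ implies $\pi(x) V(x,\pm 1)\to 0$ after trimming).

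For the FCLT I would repeat this verification but with $V^2$ playing the role of $V$. The key quantity is $\int |x|^{2\beta}\pi(x)\,dx$: in case (ii) with $\kappa=1$ we need $2\beta<\alpha-1$, which is exactly the hypothesis $1\leq\beta<(\alpha-1)/2$, and requiring such a $\beta$ to exist forces $\alpha>3$; in case (i) with $\kappa<1$, the super-polynomial decay of $\pi$ makes every polynomial moment finite. In either case $\mu(V^2)<\infty$, so Proposition~\ref{prop:GlynnMeyn-FCLT} yields simultaneously the finiteness of the asymptotic variance $\gamma_g^2$ and the FCLT $n^{-1/2}\int_0^{n\cdot} g(Z(s))\,ds \Rightarrow \gamma_g B$ in $D[0,1]$ under any starting distribution $\eta$. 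Finally, the identification $\gamma_g^2=\sigma_g^2$ (with $\sigma_g^2$ given by~\eqref{eq:zigzag-asymptotic-variance}) follows from the fact that both expressions are the stationary variance $\E_\mu[M(1)^2]$ of the martingale $M$ of Lemma~\ref{lem:martingale-expressions}, computed via its predictable quadratic variation.

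The main obstacle is bookkeeping in the polynomial case $\kappa=1$: one has to line up the three exponent constraints — (a) $\beta\geq 1$ so that the Lyapunov calculation of Lemma~\ref{lem:lyapunov-heavy-tailed} goes through, (b) $\beta<\alpha-1$ so $\mu(V)<\infty$ (equivalently $g=O(|x|^{\beta-1})$ is $\pi$-integrable and the CLT hypotheses are consistent), and (c) $2\beta<\alpha-1$, i.e.\ $\beta<(\alpha-1)/2$ and $\alpha>3$, for $\mu(V^2)<\infty$ — and then justify that the freedom to choose $\beta$ anywhere in the allowed range is enough to cover every admissible $g$. Everything else is a direct application of results already stated.
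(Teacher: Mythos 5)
Your proposal is correct and follows essentially the same route as the paper: verify Assumption~\ref{ass:GlynnMeyn} via the Lyapunov function of Lemma~\ref{lem:lyapunov-heavy-tailed}, check $\mu(V)<\infty$ (respectively $\mu(V^2)<\infty$) using the tail decay of $\pi$ implied by the lower bound $\lambda(x,+1)\geq\alpha x^{-\kappa}$, and then cite Theorem~\ref{thm:zigzag-clt-general} for the CLT and Proposition~\ref{prop:GlynnMeyn-FCLT} for the FCLT. Your write-up is in fact slightly more careful than the paper's on several minor points the paper leaves implicit (upgrading the drift bound on the compact set, replacing $f$ by $\max(1,f)$, the rescaling so that $|g|\leq f$, and the identification $\gamma_g^2=\sigma_g^2$ via $\E_\mu[M(1)^2]$).
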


\begin{proof}
Assumption~\ref{ass:heavy-tailed} implies Assumption~\ref{ass:well-posedness-stationary-zigzag} so that by Proposition~\ref{prop:solution-Poisson-equation} there is a unique invariant probability distribution $\mu$.
If $\kappa = 1$ in Assumption~\ref{ass:heavy-tailed} then $\frac{d \mu}{dx}(x,\theta) =O( |x_0/x|^{\alpha})$. Because $\alpha > 2$ we can choose $1 \leq \beta < \alpha - 1$ in Lemma~\ref{lem:lyapunov-heavy-tailed}, and it follows that the Lyapunov function $V(x,\theta) = O(|x|^{\beta})$ satisfies $\mu(V) < \infty$.
If $0 \leq \kappa < 1$ then $\frac{d \mu}{dx}(x,\theta) = O(\exp( -\alpha/(1-\kappa) |x|^{1-\kappa}))$ and again $\mu(V) < \infty$. The CLT now follows from Theorem~\ref{thm:zigzag-clt-general}.
Under the stronger assumptions, $\mu(V^2) < \infty$ using the above asymptotic analysis,  so that the FCLT follows from Proposition~\ref{prop:GlynnMeyn-FCLT}.
\end{proof}

\begin{remark}
A sufficient condition for $\sigma_g^2 < \infty$ in case $\kappa = 1$ is that $\alpha > 2$, $1 \leq \beta < \min(\alpha - 1,\half \alpha)$ and $\lambda(x,+1) = O(x^{-1})$. Indeed, in this case there exists a $\delta \in (\beta, \alpha/2)$. Since $\pi(x) = O(|x|^{-\alpha})$ and $\delta < \alpha$ we have that $\pi(x) |x|^{\delta} \rightarrow 0$. Furthermore~\eqref{eq:condition-psi-asymptotics} is satisfied as $g(x) = O(|x|^{\beta - 1})$ and $\pi(x)/\pi'(x) = O(|x|^{-1})$, so we may deduce from Lemma~\ref{lem:expression-psi} that $\psi(x) = o(|x|^{\delta})$. Hence $\lambda(x) \psi^2(x) \pi(x) = o(|x|^{2 \delta -1 - \alpha}) = o(|x|^{-1})$ using that $\delta < \alpha/2$.
\end{remark}

\subsubsection{Comparison with Langevin diffusion}

Let $A$ denote the generator of the Langevin diffusion with invariant density $\pi(x) = \exp(-U(x))/k$, i.e.
\[ A f(x) = f''(x) - U'(x) f'(x),\]
with domain including at least all twice continuously differentiable functions $f$ for which $A f$ is a bounded continuous function.

\begin{proposition}
\label{prop:langevin-asvar}
Suppose $g \in L^2(\pi)$ with $\pi(g) = 0$ and let $\psi$ as in~\eqref{eq:expression-psi}. If $\psi \in L^2(\pi)$ then under the stationary measure $\P_{\pi}$ the Langevin diffusion $(X(t))_{t \geq 0}$ with generator $A$ satisfies the CLT with asymptotic variance is given by $\widetilde \sigma_g^2 := 2 \int_{\R} |\psi(x)|^2 \pi(x) \ d x < \infty$, i.e.
\[ \lim_{t \rightarrow \infty} \frac 1 {\sqrt t} \int_0^t g(X_s) \ d s \Rightarrow \mathcal N(0, \widetilde \sigma_g^2).\]
Conversely, if $\int_{\R} |\psi(x)|^2 \pi(x) \ d x = \infty$, then $\limsup_{t \rightarrow \infty} \frac 1 t \operatorname{Var}_{\pi} \left( \int_0^t g(X_s)  \ ds \right) = \infty$.
\end{proposition}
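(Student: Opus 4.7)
The plan is to reduce this to the classical CLT for reversible diffusions through the Poisson equation, and to identify the resulting Dirichlet form with $\int \psi^2\,d\pi$.

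\textbf{Solving the Poisson equation.} Because $\pi(x) = \exp(-U(x))/k$ satisfies $\pi' = -U'\pi$, the generator has the divergence-form representation $A\phi = \pi^{-1}(\pi\phi')'$. Hence the Poisson equation $A\phi = -g$ is equivalent to $(\pi\phi')' = -g\pi$. Since $\pi(g)=0$, the function $G(x) := \int_x^{\infty} g(\xi)\pi(\xi)\,d\xi$ is well defined and vanishes at $\pm\infty$, and integrating from $x$ to $\infty$ under the boundary condition $\pi\phi'\to 0$ yields $\phi'(x) = G(x)/\pi(x)$. Extending $g$ to $E$ as $\theta$-independent so that $g(x,+1)+g(x,-1) = 2g(x)$, this $\phi'$ matches $\psi$ from~\eqref{eq:expression-psi}; hence $\phi' = \psi$ and in particular the Dirichlet form $\mathcal{E}(\phi,\phi) := \int(\phi')^2\,d\pi$ equals $\int \psi^2\,d\pi$.

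\textbf{Forward direction.} Assuming $\psi \in L^2(\pi)$, apply It\^o's formula to $\phi(X_t)$ along the Langevin SDE $dX_t = -U'(X_t)\,dt + \sqrt{2}\,dB_t$ to obtain
\[ \phi(X_t) - \phi(X_0) = -\int_0^t g(X_s)\,ds + M_t, \qquad M_t := \sqrt{2}\int_0^t \phi'(X_s)\,dB_s, \]
where $M_t$ is a square-integrable martingale with quadratic variation $\langle M\rangle_t = 2\int_0^t \psi^2(X_s)\,ds$. Under $\P_\pi$, stationarity forces $(\phi(X_t)-\phi(X_0))/\sqrt{t}\Rightarrow 0$, while the ergodic theorem gives $\langle M\rangle_t/t \to 2\int \psi^2\,d\pi$. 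The martingale CLT then yields $M_t/\sqrt{t}\Rightarrow \mathcal{N}(0, 2\int \psi^2\,d\pi)$, which is the desired CLT with $\widetilde{\sigma}_g^2 = 2\int \psi^2\,d\pi$.

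\textbf{Converse direction.} Suppose $\int \psi^2\,d\pi = \infty$ and use reversibility of the Langevin diffusion. The covariance $C(s) := \langle g, P_sg\rangle_\pi$ is non-negative, and
\[ \Var_\pi\left(\int_0^t g(X_s)\,ds\right) = 2\int_0^t (t-s) C(s)\,ds. \]
Monotone convergence gives $t^{-1}\Var_\pi(\,\cdot\,) \to 2\int_0^\infty C(s)\,ds$. By the spectral theorem for the self-adjoint operator $-A$, $C(s) = \int_0^\infty e^{-s\lambda}\,d\rho_g(\lambda)$ with $\rho_g$ the spectral measure of $g$, so $\int_0^\infty C(s)\,ds = \int_0^\infty \lambda^{-1}\,d\rho_g(\lambda) = \|(-A)^{-1/2}g\|_\pi^2$. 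Approximating by resolvents $\phi_\varepsilon := (\varepsilon I - A)^{-1}g$ and letting $\varepsilon\downarrow 0$, one verifies $\int (\phi_\varepsilon')^2\,d\pi \uparrow \int \psi^2\,d\pi$, so the spectral integral equals $\int \psi^2\,d\pi$ also in the divergent case. Hence the right-hand side is $+\infty$, yielding $\limsup t^{-1}\Var_\pi(\cdot) = \infty$.

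\textbf{Main obstacle.} The delicate points are the decay $\pi(x)\phi'(x)\to 0$ at $\pm\infty$ (which follows from $g\pi$ being integrable once $\pi(g)=0$), the applicability of It\^o's formula for a possibly unbounded $\phi$, and the spectral-theoretic identification $\int \lambda^{-1}\,d\rho_g = \int \psi^2\,d\pi$ in the converse. All three are classical; the forward direction fits directly into the framework of Kipnis--Varadhan and can be imported from \cite{Komorowski2012}, while the converse relies on the well-known spectral formula for the asymptotic variance of reversible Markov processes.
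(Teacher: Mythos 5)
Your argument is structurally different from the paper's, and it is the right kind of different to be worth noting, but it leaves two technical steps open that the paper's route handles cleanly.

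For the forward direction, you apply It\^o's formula directly to the Poisson solution $\phi$ and then use the ergodic theorem on $\langle M\rangle_t/t$ plus the martingale CLT. The paper instead checks the Kipnis--Varadhan variational inequality $\langle g,f\rangle_{L^2(\pi)}\leq c\,\langle -Af,f\rangle_{L^2(\pi)}^{1/2}$ (verified after integration by parts, with $c=\|\psi\|_{L^2(\pi)}$) and imports the CLT from \cite[Corollary 1.9]{KipnisVaradhan1986}. Your approach is more concrete, but it quietly requires that $\phi$ be $C^2$ (so It\^o applies) and that $\phi(X_t)-\phi(X_0)=o_P(\sqrt{t})$. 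The hypothesis is only $\psi=\phi'\in L^2(\pi)$; without a Poincar\'e inequality (which is unavailable for the heavy-tailed $\pi$ to which this proposition is specifically meant to apply), this does not imply $\phi\in L^2(\pi)$, and the dominated/stationarity argument for the boundary term does not go through as stated. The Kipnis--Varadhan formulation avoids ever needing $\phi$ itself to be square-integrable, working instead with the resolvents $\phi_\varepsilon=(\varepsilon-A)^{-1}g$; that is precisely what it buys over the na\"ive martingale decomposition.

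For the converse, you invoke the spectral representation $C(s)=\int_0^\infty e^{-s\lambda}\,d\rho_g(\lambda)$ and would conclude once you know $\int_0^\infty\lambda^{-1}\,d\rho_g(\lambda)=\int_{\R}\psi^2\,d\pi$ in the extended sense. That identification --- passing from the monotone limit $\int\lambda(\varepsilon+\lambda)^{-2}\,d\rho_g=\mathcal E(\phi_\varepsilon,\phi_\varepsilon)=\int(\phi_\varepsilon')^2\,d\pi$ to $\int\psi^2\,d\pi$ --- is exactly the nontrivial content, and it is left as an assertion. The paper proves it by showing that when the rescaled variance is finite, the family $g_t'$ is Cauchy in $L^2(\pi)$ (following \cite[Theorem 3.3]{Cattiaux2011a}), so its strong limit $\eta$ exists in $L^2(\pi)$, and then uses the weak formulation and a distributional argument to force $\eta=\psi$; contraposition gives the converse. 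Your spectral phrasing is equivalent in spirit, but the pointwise/$L^2$ convergence $\phi_\varepsilon'\to\psi$ and the passage of the monotone limit still need to be carried out; labelling them ``classical'' does not close the step.

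In short: same Poisson-equation backbone and same identification $\phi'=\psi$, but you take a direct martingale route for the CLT and a spectral route for the converse, whereas the paper uses Kipnis--Varadhan and the Cattiaux--Chafa\"i--Guillin resolvent-Cauchy argument precisely to sidestep the two regularity/integrability gaps that remain open in your sketch.
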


The proof of this result may be found in Appendix~\ref{app:langevin-asvar}.

%
%
%

In cases where both a CLT holds for the Langevin diffusion and the Zig-Zag process, and the function of interest $g$ does not depend on $\theta$, we can compare the asymptotic variances, given by \begin{align*}
\widetilde \sigma_g^2 & = 2 \int_{\R} \psi^2(x) \pi(x) \ d x \quad &  \mbox{(Langevin asymptotic variance)}, \\
\sigma_g^2 & = 2 \int_{\R} \left( \lambda(x,+1) + \lambda(x,-1) \right) \psi^2(x) \pi(x) \ d x \\ & = 2 \int_{\R} \left( |U'(x)| + 2 \gamma(x) \right) \psi^2(x) \pi(x) \ d x  \quad & \mbox{(Zig-Zag asymptotic variance)}.
\end{align*}
where we used~\eqref{eq:lambda-explicit} to obtain the last equality.

Trivially, if $\lambda(x,+1) + \lambda(x,-1) \leq 1$ for all $x \in \R$, the asymptotic variance of the Zig-Zag process is less than or equal to the asymptotic variance of the Langevin diffusion, but this is a very restrictive condition. More generally, the asymptotic variance of the Zig-Zag process is smaller than that of the Langevin if the switching rates are small where $\psi^2 \pi$ has most of its mass. It is also clear from the above expression that having a non-zero excess switching rate $\gamma$ increases the asymptotic variance of the Zig-Zag process.

%
%
%

\subsection{Examples}
\label{sec:clt_examples}


To illustrate the effectiveness of the developed theory we consider several examples. We consider (i)  Gaussian distributions, which have light tails and for which the associated Zig-Zag process is exponentially ergodic, and (ii) Student t-distributions, which are heavy tailed so that the associated Zig-Zag process is not exponentially ergodic. For both families of distributions we will consider two types of observables: (a) moments  and (b) tail probabilities.

\subsubsection{Gaussian distribution}

The family of centered one-dimensional Gaussian distributions $\mathcal N(0, \nu^2)$ is described by the potential functions and canonical switching rates
\begin{equation} \label{eq:gaussian-setup} U(x) = \frac {x^2}{2 \nu^2} \quad  \mbox{and} \quad \lambda(x,\theta) = \left( \theta x / \nu^2 \right)^+.\end{equation}

\begin{example}[Moments of a Gaussian distribution]
\label{ex:gaussian-moments}
First we consider the asymptotic variance associated with the $k$-th moment for positive integer values of $k$. This corresponds to the mean-zero functional $g(x) = x^k - m_k$, where 
\[ m_k = \frac 1 {\sqrt{ 2 \pi \nu^2}} \int_{\R} x^k \exp(-x^2/2\nu^2) \ d x  = \begin{cases} 0 \quad & \mbox{if $k$ is odd}, \\
           \nu^k (k-1)!! = \frac{\nu^{k} k!}{2^{k/2} (k/2)!}  \quad &\mbox{if $k$ is even}.
          \end{cases}\]
Assumption~\ref{ass:unimodal} is satisfied for any $k \geq 0$ so that a CLT holds by Theorem~\ref{thm:unimodal-CLT}. The asymptotic variance can be computed using~\eqref{eq:asvar-alternative} to be
\[ \sigma_g^2 = \frac{\nu^{2 k + 1}}{\sqrt{ 2 \pi}}  \times \begin{cases} 2^{k+2} \left( \frac{ 2 (k!)}{k+1} - \frac{\left( (\frac{k-1}{2})!\right)^2}{2} \right)
\quad & \mbox{for $k$ odd,} \\
 \frac{8 (k!)^2}{2^k ((k/2)!)^2} + \frac{8 (k!) (2^k -k- 2)}{k+1} \quad & \mbox{for $k$ even.}  \end{cases}\]
 The variance of $g$ under $\pi$ is given by
\[ \Var_{\pi}(g) = \nu^{2k} \times \begin{cases}  (2k-1)!! \quad &\mbox{for $k$ odd}, \\
                    (2k-1)!! -  ((k-1)!!)^2 \quad &\mbox{for $k$ even}.
                   \end{cases}
 \]
In order to compare the asymptotic variance of the Langevin diffusion, we compute 
\begin{align*} \psi(x) & = \exp(x^2/2\nu^2) \int_{x}^{\infty} (\xi^k - m_k) \exp(-\xi^2/2\nu^2) \ d \xi. 
\end{align*}
Expressions for $\psi(x)$ for different values of $k$ are given, along with the computed asymptotic variance for the Zig-Zag process ($\sigma_g^2$) and Langevin diffusion ($\widetilde \sigma_g^2$), in the following table. 

{\begin{center}
 \begin{tabular}{c|c|c|c|c}
  $k$ & 1 & 2 & 3 & 4   \\
  \hline
  $\Var_{\pi}(g)$ & $\nu^2$ & $2 \nu^4$ & $15 \nu^6$ & $96 \nu^8$  \\
  $\psi(x)$ & $\nu^2$ & $\nu^2 x$ & $\nu ^2 \left(x^2 + 2 \nu ^2 \right)$ & $\nu ^2 x \left(x^2 + 3 \nu ^2\right)$   \\  
  $\sigma_g^2$ & $2 \sqrt{\frac{2}{\pi }} \nu ^3$ & $4 \sqrt{\frac{2}{\pi }} \nu ^5$ & $40 \sqrt{\frac{2}{\pi }} \nu ^7$ & $228 \sqrt{\frac{2}{\pi }} \nu ^9$  \\
  $\widetilde \sigma_g^2$ & $\nu^4$ & $\nu ^6$ & $11 \nu ^8$ & $42 \nu^{10}$
 \end{tabular}
 \end{center}
}

For each of these moments we note that $\sigma^2_g/\widetilde{\sigma}^2_g \propto \nu^{-1}$,  which suggests that for large variance distributions, the variance of an estimator for $\pi(g)$ using the Zig-Zag process will be considerably lower than that of an estimator generated from a Langevin trajectory.

The result of Theorem~\ref{thm:unimodal-CLT} can be strengthened since by Theorem~\ref{thm:exp-ergodic-clt} the Functional Central Limit Theorem holds for this entire family of examples.
\end{example}

\begin{example}[Tail probabilities for a Gaussian distribution]
Next consider the tail probabilities $p_a := \pi(x \geq a)$ for a $\mathcal N(0, \nu^2)$-distribution. The potential and associated switching rates are given by~\eqref{eq:gaussian-setup}. We have $p_a = 1 - \Phi(a/\nu)$ and $g(x) = \1_{[a,\infty)}(x) - p_a$.
Assumption~\ref{ass:unimodal} is satisfied for any value of $\nu > 0$ so that Theorem~\ref{thm:unimodal-CLT} gives a CLT. Again, using Theorem~\ref{thm:exp-ergodic-clt} we obtain a functional CLT in this family of examples.
Computing the necessary integrals in~\eqref{eq:asvar-alternative} gives the asymptotic variance
\begin{equation}
\label{eq:tailprob_gaussian}
\sigma_g^2 = \frac{-4 a(1-p_a) p_a  \nu \sqrt{2 \pi} + 4(1-2p_a) \nu^2 \exp \left(-a^2/(2 \nu^2) \right) + (8 - 2 \pi) p_a^2 \nu^2}{\sqrt{2 \pi \nu^2}},
\end{equation}
while the variance of $g$ is given by $\Var_{\pi}(g) = p_a(1-p_a)$.
%


In Figure \ref{fig:gaussian_tailprob} we compare the expression~\eqref{eq:tailprob_gaussian} with the variance estimated from $10^5$ independent simulations of the Zig-Zag process, for different values of $\nu^2$.  

\begin{figure}[!ht]
\begin{subfigure}[b]{0.45 \textwidth}
 \includegraphics[width=\textwidth]{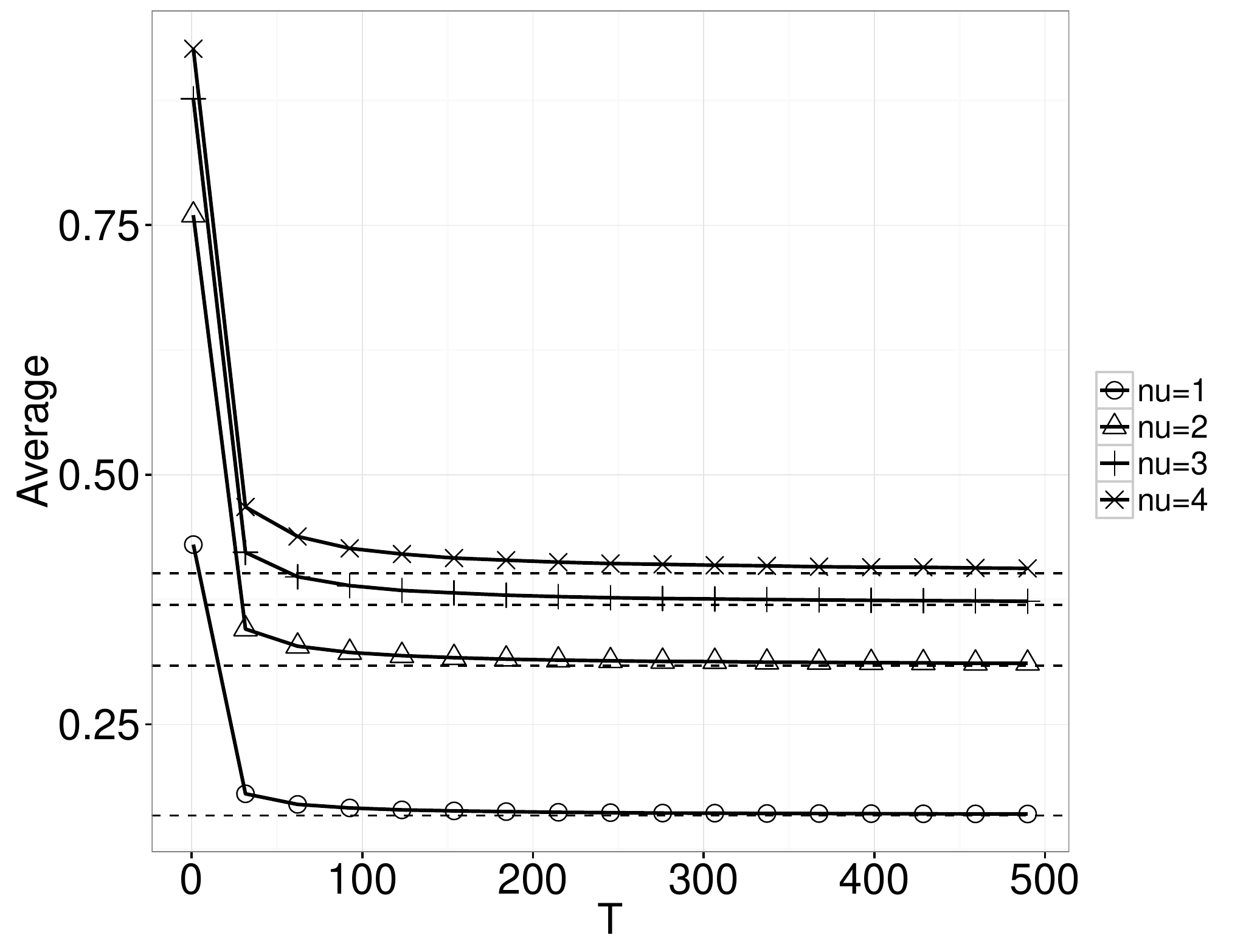}
 \caption{Plot of ergodic average $\pi_T(f)$ of $f(x)=\1_{[1,\infty)}(x)$ as a function of time $T$, for the 1D Zig-Zag process ergodic with respect to $\mathcal{N}(0,\nu^2)$ for different values of  $\nu$. }
 \label{fig:gaussian_tail_prob_means}
\end{subfigure}
    \hfill
\begin{subfigure}[b]{0.5 \textwidth}
 \includegraphics[width=\textwidth]{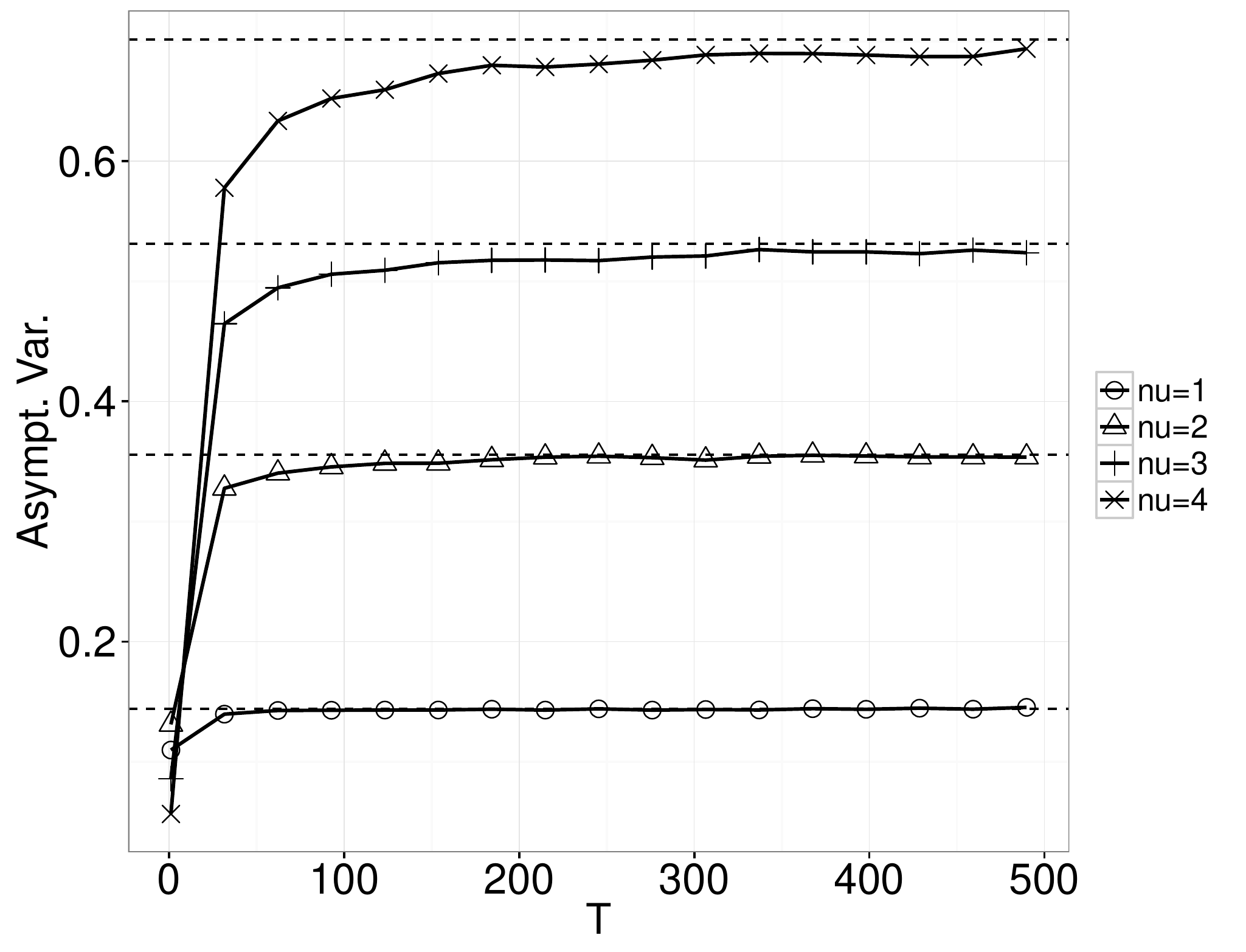}
 \caption{Corresponding values of $T\mbox{Var}[\pi_T(f)]$ as a function of time $T$, for different values of $\nu$.  The dashed lines denote the corresponding asymptotic variance obtained via~\eqref{eq:tailprob_gaussian}.  }
 \label{fig:gaussian_tail_prob_vars}
\end{subfigure}
 \caption{Mean and Variance estimates for the tail probabilities of a Gaussian $\mathcal{N}(0, \nu^2)$ distribution obtained from simluations compared to predicted estimates.\label{fig:gaussian_tailprob}}
\end{figure}

\end{example}




\subsubsection{Student t-distribution}

Consider the family of Student-t distributions with $\nu > 0$ degrees of freedom,
\begin{equation} \label{eq:heavy-tailed-example} \pi(x) \propto \left(1 + \frac{x^2}{\nu} \right)^{-\frac{\nu+1}{2}},\end{equation}  and let $\lambda$ denote the canonical switching rates, given by
\begin{equation}
\label{eq:student_switching}
 \lambda(x,\theta) = \begin{cases} \frac{(\nu + 1)|x|}{\nu + x^2} \quad & \mbox{if $\theta x > 0$}, \\
                        0 \quad & \mbox{if $\theta x \leq 0$}.
                       \end{cases}
\end{equation}

\begin{example}[Moments for the Student t-distribution]
\label{ex:student-t-moments}
For integer values of $k$ with $0 \leq k < \nu$ we can compute the values of the moments to be
\[ m_k := \int_{\R} x^k \pi(x) \ d x = \begin{cases} 0 \quad & \mbox{if $k$ is odd,} \\
                                        \frac{\nu^{(k+1)/2} \Gamma \left( \frac{k+1}{2} \right) \Gamma \left( \frac{\nu-k}{2} \right)}{\sqrt{\pi \nu} \Gamma\left( \frac{\nu}{2} \right)} \quad & \mbox{if $k$ is even.}
                                       \end{cases}\]
The mean-zero function representing the observable of interest is $g(x) = x^k - m_k$. Assumption~\ref{ass:unimodal} is satisfied if $k < (\nu-1)/2$. Moreover we may apply Theorem~\ref{thm:heavy-tailed-clt} with $\alpha < \nu+1$, $\gamma = 1$ and $\beta = k + 1$ to see that in the above cases a functional CLT is satisfied under the stated assumption that $k < (\nu - 1)/2$. 

This may be compared to the Random Walk Metropolis algorithm. In \cite[p. 796]{JarnerRoberts2007} it is established that for a finite variance proposal distribution, the range of parameter values for which a CLT holds is $k < \nu/2 - 1$ which is slightly more restrictive. By tuning the proposal distribution in RWM to have the same decay in the tails, this range can be improved to $k < \nu/2$.

%

Using~\eqref{eq:asvar-alternative} we obtain, for the Zig-Zag process,
\[ \sigma_g^2 = \frac{2  \nu^{k+1} \Gamma(k+1) \Gamma \left( \frac{ \nu-2k - 1}{2} \right)}{(1+k)\sqrt{\pi \nu}}- \frac{\nu^{k+1} \Gamma \left( \frac {1+k}{2} \right)^2 \Gamma \left( \frac{\nu-k}{2} \right)^2}{\sqrt{\pi \nu} \Gamma\left( \frac{ \nu}{2} \right) \Gamma \left(\frac{\nu+1}{2} \right)}  \quad  \mbox{if $k$ is odd.} \]
For $k$ even an also explicit but more cumbersome expression can be obtained from~\eqref{eq:asvar-alternative}.

It may be verified that $\psi(x) = O(|x|^{k+1})$, $\lambda(x,\theta) = O(|x|^{-1})$ and $\pi(x) = O(|x|^{-(\nu+1)}$ as $|x| \rightarrow \infty$. In particular the Langevin asymptotic variance, $\widetilde \sigma_g^2 = 2 \int_{\R} \psi^2(x) \pi(x) \ d x$ is finite if and only if $k < (\nu - 2)/2$, so that the Zig-Zag process has finite asymptotic variance for a wider range of combinations of $k$ and $\nu$.
\end{example}

\begin{example}[Tail probabilities for the Student t-distribution]
\label{ex:student-t-tails}
Suppose now we wish to consider the behaviour of the Zig-Zag process with respect to the observable given by the tail probability $p_a = \int_a^{\infty} \pi(x) \ d x$ for $a \in \R$. The associated functional of interest is $g(x) = \1_{[a, \infty)}(x) - p_a$. 
Assume $a \geq 0$ for simplicity. Assumption~\ref{ass:unimodal} is satisfied if $\nu > 1$, so that for these values of $\nu$ a CLT holds. Using Theorem~\ref{thm:heavy-tailed-clt} a functional CLT holds at least for those cases for which $\nu > 2$.

It may be verified that $\psi(x) = O(|x|)$, $\lambda(x,\theta) = O(|x|^{-1})$, and $\pi(x) = O(|x|^{-(\nu+1)})$ as $|x| \rightarrow \infty$. Using Proposition~\ref{prop:langevin-asvar} the asymptotic variance $\widetilde \sigma_g^2 = 2 \int_{\R} \psi^2(x) \pi(x) \ d x$ of the associated Langevin diffusion is finite if and only if $\nu > 2$. So for heavy tailed distributions the Zig-Zag process allows for a larger range of parameter values $\nu$ with finite asymptotic variance.

After evaluating the necessary integrals in~\eqref{eq:asvar-alternative}, we find the asymptotic variance of the Zig-Zag process to be
\begin{equation}
\label{eq:student_asympt_var}
\begin{aligned}
 \sigma_g^2 
 &  =  \frac {4(1-2 p_a) \nu}{z (\nu-1)} \left( 1 + \frac {a^2}{\nu} \right)^{(1-\nu)/2} - 4 a(1-p_a)p_a  +  \frac{8 p_a^2 \nu}{z(\nu-1)}- z p_a^2 
\end{aligned}
\end{equation}
where
\[ z = \int_{\R} \exp(-U(x)) \ d x =  \frac{\sqrt{\nu \pi} \Gamma(\nu/2) }{\Gamma((\nu+1)/2)} \]
and, writing ${}_2F_1$ for the hypergeometric function,
\[ p_a = \frac 1 {z} \int_a^{\infty} \exp(-U(x)) \ d x = \frac 1 2 - \frac{a \Gamma((\nu+1)/2) {}_2 F_1 \left( \frac 1 2 ,\frac{\nu+1}{2}, \frac{3}{2}, -\frac{a^2}{\nu} \right)}{\sqrt{ \pi \nu} \Gamma(\nu/2)}.\]

For $\nu = 2$, the above expressions simplify to
\[ \sigma_g^2 = \frac{ \sqrt{2}  + 2 a + \sqrt{2} a^2 - a\sqrt{4 + 2 a^2}}{2 + a^2} \quad \mbox{and} \quad \Var_{\pi}(g) = p_a(1-p_a) = \frac 1 { 4+2a^2}, \]
whereas for other values of $\nu$ the expression for the asymptotic variance can typically not be significantly simplified. 
See Figure~\ref{fig:heavy-tailed-experiment} for an experimental verification of these results.  We see good agreement with theoretical predictions. Also from Figure~\ref{fig:tail_prob_vars} the rescaled variance of the estimator for $\nu = 1$ appears to diverge to infinity as $T\rightarrow \infty$, which suggests that no CLT holds in this case, and thus the condition $\nu > 1$ is indeed tight.


\begin{figure}[!ht]
\begin{subfigure}[b]{0.45 \textwidth}
 \includegraphics[width=\textwidth]{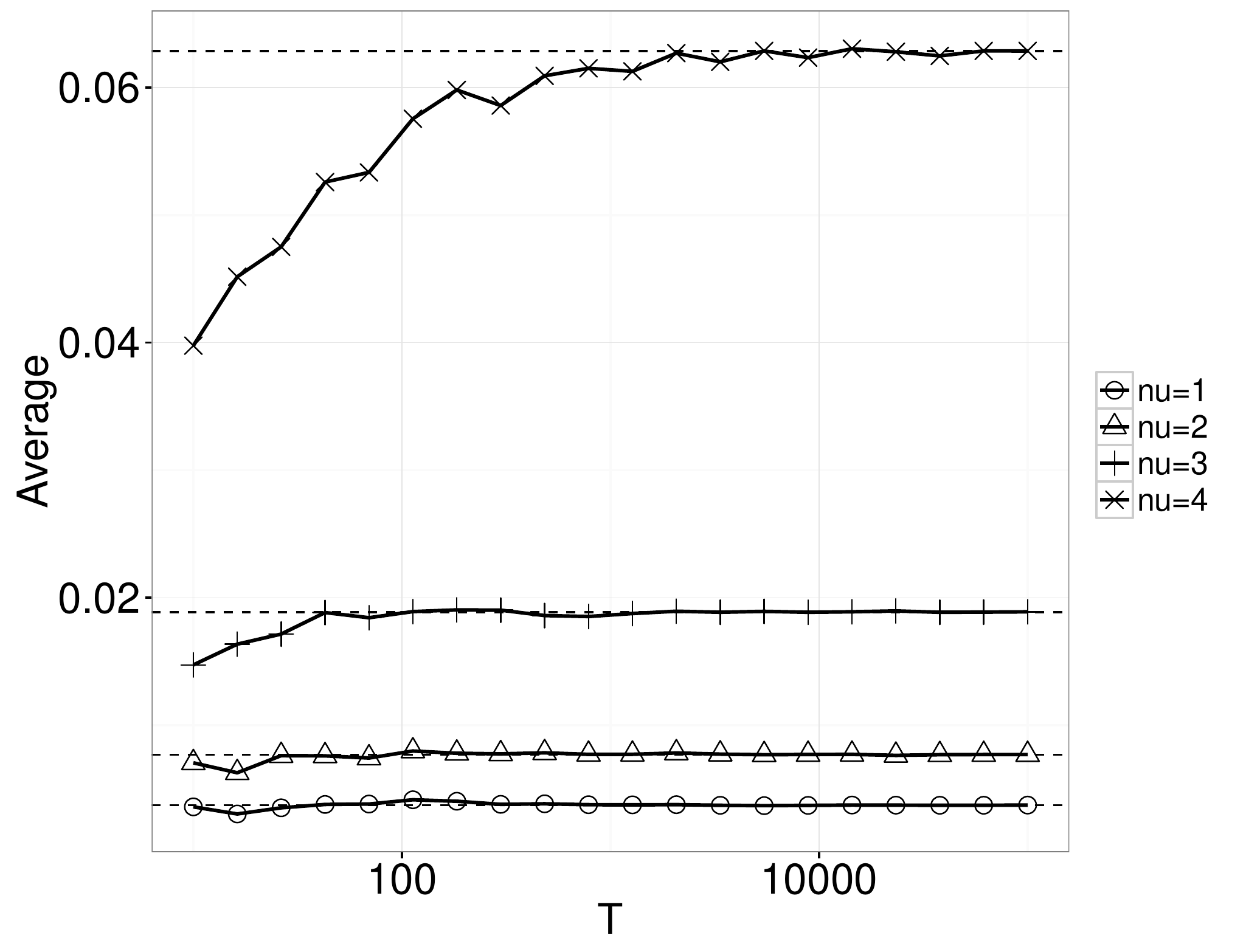}
 \caption{Plot of ergodic average $\pi_T(f) =\frac 1 T \int_0^T f(X(s))\,ds$ of $f(x)= \1_{[5,\infty)}(x)$ as a function of time $T$, for the 1D Zig-Zag process ergodic with respect to~\eqref{eq:heavy-tailed-example} for different values of $\nu$. }
 \label{fig:tail_prob_means}
\end{subfigure}
    \hfill
\begin{subfigure}[b]{0.45 \textwidth}
 \includegraphics[width=\textwidth]{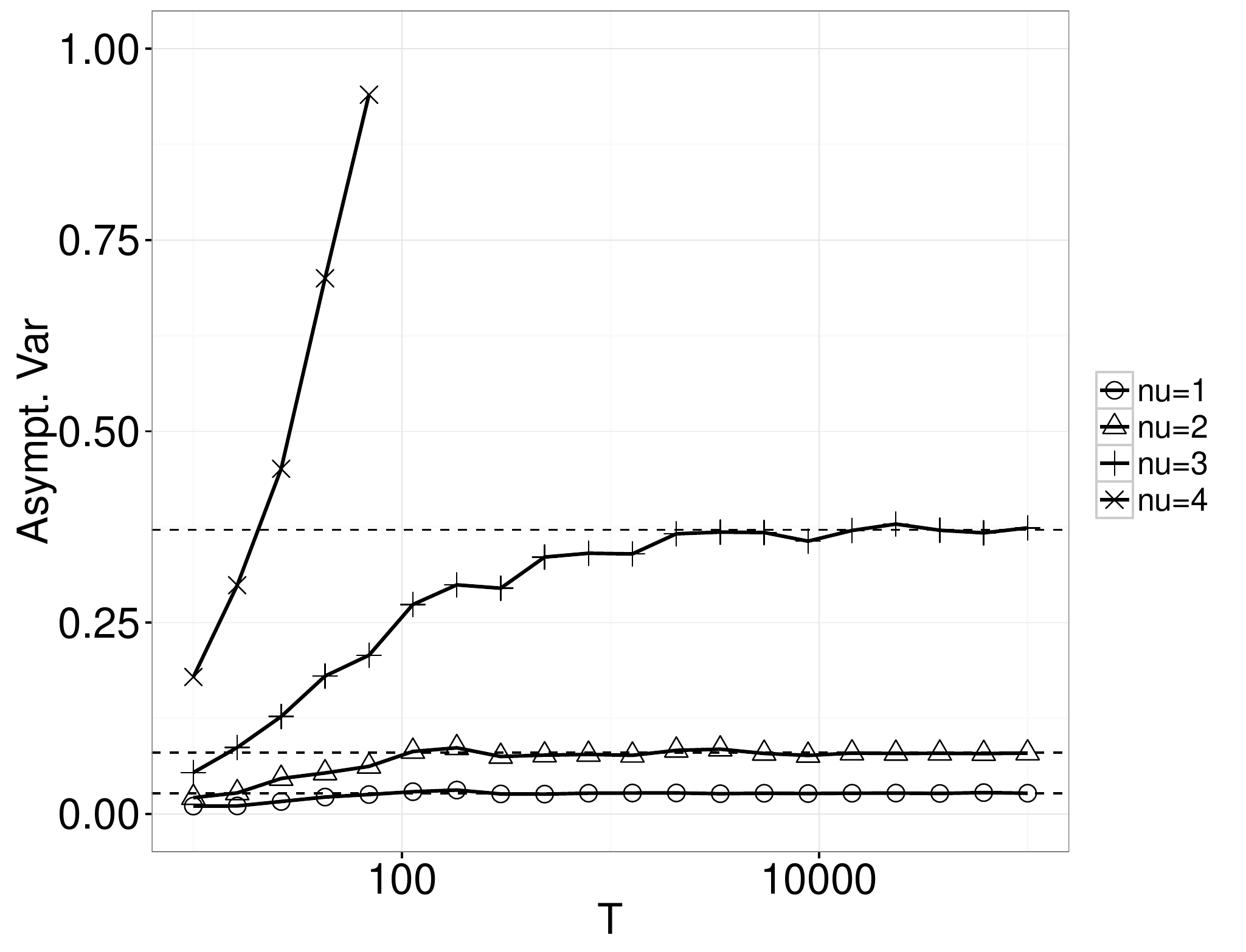}
 \caption{Corresponding values of $T\mbox{Var}[\pi_T(f)]$ as a function of time $T$, for different values of $\nu$, approximated from $10^4$ independent realisations of the Zig-Zag process.}
 \label{fig:tail_prob_vars}
\end{subfigure}

\caption{Convergence towards the ergodic average and asymptotic variance of the Zig-Zag process corresponding to the family of Student-t distributions with $\nu$ degrees of freedom and with the tail probabilities $\1_{[5,\infty)}$ as observable. The dashed lines denote the analytically derived values.}
\label{fig:heavy-tailed-experiment}
  \end{figure}

\end{example}

\section{Diffusion limit of the Zig-Zag process}
\label{sec:diffusive}
In this section we will consider the one dimensional Zig-Zag process with switching rates of the form
\[
\lambda(x,\theta)=\max(0,\theta U'(x))+\gamma(x),
\]
for a general non-vanishing space-dependent switching rate $\gamma$. 
An example arising from applications where $\gamma$ is positive is when Zig-Zag sampling is used in combination with sub-sampling, as discussed in \cite{BierkensFearnheadRoberts2016}. It is observed in simulations that this gives rise to diffusive behaviour. In this section we show that under an appropriate time change the Zig-Zag process converges weakly to an It\^{o} diffusion, ergodic with respect to $\pi$, with space dependent diffusion coefficient inversely proportional to the switching rate $\gamma$.

We shall focus
on behaviour of the Zig-Zag process in the large $\|\gamma\|_{\infty}$
limit. To this end, we shall introduce the rescaling $\gamma^{\epsilon}=\epsilon^{-1}\gamma,$
and denote by $Z^{\epsilon}(t)=(X^\epsilon(t),\Theta^\epsilon(t))$ the corresponding Zig-Zag process,
with generator defined by
\[
L^{\epsilon}f(x,\theta)=\theta \partial_x f(x,\theta)+\left(\lambda^{0}(x,\theta)+\gamma^{\epsilon}(x)\right)\left[f(x,-\theta)-f(x,\theta)\right],
\]
where $\lambda^{0}(x,\theta)=\max(0,\theta U'(x)).$ Our objective
is to prove the following result.
\begin{theorem}
\label{thm:diffusive}
Suppose that $\gamma\in C^{1}(\mathbb{R})$ is positive. Consider the
process $Z^\epsilon(t) =(X^{\epsilon}(t),\Theta^\epsilon(t))$ with initial condition $(X^\epsilon(0),\Theta^\epsilon(0)) \sim \eta$ on $E$.  Suppose that the It\^{o} SDE 
\begin{equation}
\label{eq:limiting_sde}
d\xi(t)=-\frac{1}{2}\left(\frac{U'(\xi(t))}{\gamma(\xi(t))}\,+\frac{\gamma'(\xi(t))}{\gamma(\xi(t))^{2}}\right)\,dt+\sqrt{\frac{1}{\gamma(\xi(t))}}\,dW(t),
\end{equation}
where $\xi(0)$ is distributed according to the marginal distribution of $\eta$ with respect to $x$, and where $W(t)$ is a standard Brownian motion independent from $\xi(0),$ has a unique weak solution for $t\geq 0$.   Then as $\epsilon\rightarrow 0$, the process $X^\epsilon(t/\epsilon)$  converges weakly in $C([0,\infty),\mathbb{\mathbb{R}})$ to the solution $\xi(t)$ of~\eqref{eq:limiting_sde}.
\end{theorem}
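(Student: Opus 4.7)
The strategy is the classical corrector / perturbed test function method for diffusion approximation (Papanicolaou--Stroock--Varadhan; Ethier--Kurtz, Ch.~7). After the time change $t\mapsto t/\epsilon$, the rescaled process $\widetilde Z^\epsilon(t):=Z^\epsilon(t/\epsilon)$ has generator
\[ \widetilde L^\epsilon \;=\; \epsilon^{-1} A + \epsilon^{-2} B, \]
where $Af(x,\theta):=\theta\,\partial_x f(x,\theta) + \lambda^0(x,\theta)(f(x,-\theta)-f(x,\theta))$ is the drift/intrinsic-switch part and $Bf(x,\theta):=\gamma(x)(f(x,-\theta)-f(x,\theta))$ is the fast-switch part. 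Let $\mathcal{A}$ denote the generator of~\eqref{eq:limiting_sde}. The plan is, for each $f_0\in C_c^\infty(\R)$, to construct a corrected test function $f^\epsilon(x,\theta)=f_0(x)+\epsilon f_1(x,\theta)+\epsilon^2 f_2(x,\theta)$ on $E$ with $\widetilde L^\epsilon f^\epsilon \to \mathcal{A}f_0$ locally uniformly, and then to close the martingale characterisation to identify the limit.

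Matching powers of $\epsilon$ in $\widetilde L^\epsilon f^\epsilon$: the $\epsilon^{-2}$ term $Bf_0$ vanishes automatically since $f_0$ is $\theta$-independent. To eliminate the $\epsilon^{-1}$ contribution one requires $Bf_1 = -Af_0 = -\theta f_0'(x)$, solved by
\[ f_1(x,\theta)=\frac{\theta f_0'(x)}{2\gamma(x)}. \]
At order $\epsilon^0$, a short computation gives $Af_1(x,\theta) = \frac{1}{2}\partial_x(f_0'/\gamma)(x) - \theta\lambda^0(x,\theta)f_0'(x)/\gamma(x)$; averaging over $\theta\in\{\pm 1\}$ and using the identity $\lambda^0(x,+1) - \lambda^0(x,-1) = U'(x)$ delivers precisely
\[ \frac{1}{2\gamma(x)}f_0''(x) - \frac{1}{2}\left(\frac{U'(x)}{\gamma(x)} + \frac{\gamma'(x)}{\gamma(x)^2}\right)f_0'(x) = \mathcal{A}f_0(x), \]
while the remaining $\theta$-antisymmetric piece of $Af_1$ is cancelled by choosing $f_2$ from $Bf_2 = \mathcal{A}f_0 - Af_1$. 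This leaves $\widetilde L^\epsilon f^\epsilon - \mathcal{A}f_0 = \epsilon\,Af_2$, which is $O(\epsilon)$ uniformly on compact sets in $x$ thanks to $\gamma\in C^1$ and $\gamma>0$.

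With the corrector in hand, I would localise by $\tau_N^\epsilon := \inf\{t\geq 0 : |\widetilde X^\epsilon(t)| \geq N\}$, so that on $[0,\tau_N^\epsilon]$ the functions $1/\gamma,\gamma'/\gamma^2,U'/\gamma,f_1,f_2$ are all bounded. The process
\[ M^\epsilon(t) := f^\epsilon(\widetilde Z^\epsilon(t\wedge\tau_N^\epsilon)) - f^\epsilon(\widetilde Z^\epsilon(0)) - \int_0^{t\wedge\tau_N^\epsilon} \widetilde L^\epsilon f^\epsilon(\widetilde Z^\epsilon(s))\,ds \]
is then a martingale whose predictable quadratic variation admits an $\epsilon$-independent bound on $[0,T\wedge\tau_N^\epsilon]$ (the $\epsilon^{-2}$ in the switching rate cancels the $\epsilon^2$ from squaring the $\theta$-jump of $f_1$). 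Aldous' tightness criterion applied to $\widetilde X^\epsilon(\cdot\wedge\tau_N^\epsilon)$, together with the continuity of $\widetilde X^\epsilon$ itself (the jumps are only in $\Theta$), yields tightness in $C([0,T];\R)$. Any subsequential limit $X^*_N$, obtained by passing to the limit in $M^\epsilon$ using $\|f^\epsilon - f_0\|_\infty = O(\epsilon)$ and uniform-on-$\{|x|\leq N\}$ convergence $\widetilde L^\epsilon f^\epsilon \to \mathcal{A}f_0$, solves the stopped martingale problem for $\mathcal{A}$ on $[0,\tau_N^*]$. The hypothesis that~\eqref{eq:limiting_sde} has a unique weak solution makes this martingale problem well-posed and forces $X^*_N \stackrel{d}{=} \xi(\cdot\wedge\tau_N^*)$.

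The main obstacle is removing the localisation: one needs $\lim_{N\to\infty}\limsup_{\epsilon\to 0}\P(\tau_N^\epsilon \leq T) = 0$. The natural route is to construct an $\epsilon$-independent Lyapunov function $V$ on $E$ satisfying $\widetilde L^\epsilon V \leq C(1+V)$, again via a two-term corrector around a Lyapunov function for $\mathcal{A}$, and to check that the correctors do not destroy the Lyapunov inequality when $1/\gamma$ or $\gamma'/\gamma^2$ are unbounded. I expect this to be the technically delicate step; in the worst case it may require a mild strengthening of the hypotheses on $\gamma$ (e.g.\ at-most-polynomial growth of $\gamma$, $1/\gamma$ and $U'$). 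Granted this non-explosion estimate, the stopped weak convergence extends to convergence in $C([0,\infty);\R)$ and the proof is complete.
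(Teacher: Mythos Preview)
Your corrector calculation is correct and is the same expansion the paper uses, but applied to general $f_0\in C_c^\infty(\R)$ rather than to the single choice $f_0(x)=x$. The paper instead builds the explicit corrector
\[
f(x,\theta)=x+\frac{\epsilon}{2}\frac{\theta}{\gamma(x)}-\frac{\epsilon^2}{2}\frac{\theta\,\lambda^0(x,\theta)}{\gamma^2(x)}
\]
(which is exactly your $f_0+\epsilon f_1+\epsilon^2 f_2$ for $f_0(x)=x$), computes $L^\epsilon f$ and $L^\epsilon f^2$ to extract the drift $b(x)$ and diffusion coefficient $a(x)=1/\gamma(x)$, and feeds the resulting pair of local martingales for $Y^\epsilon=f(Z^\epsilon)$ and $(Y^\epsilon)^2$ directly into Ethier--Kurtz Theorem~VII.4.1. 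That theorem is stated with the localising stopping times $\tau_R^\epsilon$ already built into its hypotheses, and it outputs global weak convergence in $C([0,\infty);\R)$ from the assumed well-posedness of the limiting martingale problem. So the paper never has to argue tightness or remove the localisation by hand.

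Your route through the martingale problem for general test functions is equally valid, but your closing worry is the one place you go astray: you do \emph{not} need an $\epsilon$-uniform Lyapunov function for the pre-limit process, and no strengthening of the hypotheses on $\gamma$ is required. The assumed weak uniqueness of~\eqref{eq:limiting_sde} for all $t\ge 0$ already gives non-explosion of the limit $\xi$, so $\P(\tau_N^*\le T)\to 0$ as $N\to\infty$. Since the stopped processes $\widetilde X^\epsilon(\cdot\wedge\tau_N^\epsilon)$ converge weakly to $\xi(\cdot\wedge\tau_N^*)$ and the first exit time from an open ball is lower semicontinuous on $C([0,T];\R)$, the portmanteau theorem yields $\limsup_{\epsilon\to 0}\P(\tau_N^\epsilon\le T)\le \P(\tau_N^*\le T)$, which is exactly the estimate you said you were missing. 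This is precisely the mechanism packaged inside Ethier--Kurtz VII.4.1, which is why the paper's version is shorter.
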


\begin{remark} If the process $(\xi(t))_{t \geq 0}$ exists and is non-explosive, then it is ergodic with unique stationary  distribution $\pi(x)\propto\exp(-U(x))$.
\end{remark}
\noindent
To prove this result, we will follow an approach similar to that of \cite[Theorem 1.5]{fontbona2016long}.  The main distinction is that, in \cite[Theorem 1.5]{fontbona2016long} the authors introduce a random time-change for the PDMP which produces a limiting SDE with additive noise.  On the other hand, the limiting SDE~\eqref{eq:limiting_sde} is qualitatively different,  in particular it will have multiplicative noise dependent on the switching rate $\gamma$ and moreover is ergodic with respect to the unique stationary disitribution $\pi$.   The proof of Theorem \ref{thm:diffusive} will be deferred to Section \ref{sec:proof_diffusive}.

 \begin{example}
We demonstrate the conclusions of Theorem \ref{thm:diffusive} using a simple example.   Given $U(x) = x^2/(2 \sigma^2)$ consider the family of Zig-Zag processes $Z^\epsilon(t) = (X^\epsilon(t), \Theta^\epsilon(t))$ with switching rates
\begin{equation}
\label{eq:switching_example}
	\lambda^\epsilon(x, \theta) = \max(0, \theta U'(x)) + \frac{1}{\epsilon}\gamma(x),
\end{equation}
where we choose $\gamma(x) = (1 + x^2)$ for a positive parameter $\epsilon>0$.   The resulting process is ergodic, with unique invariant distribution $\pi \sim \mathcal{N}(0, \sigma^2)$.  Applying Theorem \ref{thm:diffusive}  we know that, in the limit $\epsilon\rightarrow 0$, the time-changed process $X^{\epsilon}(t/\epsilon)$ will converge weakly to an It\^{o} diffusion process $\xi(t)$ given by the unique solution of
\begin{equation}
\label{eq:langevin_example}
	d\xi(t) = -\left(\frac{1}{2\sigma^2}\frac{\xi(t)}{1+\xi^2(t)}\, + \frac{2\xi(t)}{(1+\xi^2(t))^2}\right)\,dt + \sqrt{\frac{1}{1+\xi^2(t)}}\,dW(t).
\end{equation}
It is straightforward to show that $(\xi(t))_{t \geq 0}$ is an ergodic process with unique invariant distribution $\pi$.  In Figure \ref{fig:zig_zag_diffusion} we demonstrate this result numerically.  Choosing $\sigma^2=1$ and for $\epsilon=10, 1, 0.1$ we plot a histogram  of  the values of $Z^\epsilon(t)$ at values $t/\epsilon=1$,$10$, $20$ and $50$ over $10^4$ independent realisations starting from $X^\epsilon(0)=2.0$.   We compare the result with the corresponding distribution of the diffusion process~\eqref{eq:langevin_example} denoted by the solid line.    While for larger values of $\epsilon$ there is a clear discrepancy between $X^\epsilon(t)$ and $\xi(t)$,  as the speed of the switching rate increases, the Zig-Zag process displays increasing random walk behaviour and shows very good agreement with the diffusion process.  

\begin{figure}[!ht]
\begin{subfigure}[b]{0.5 \textwidth}
 \includegraphics[width=\textwidth]{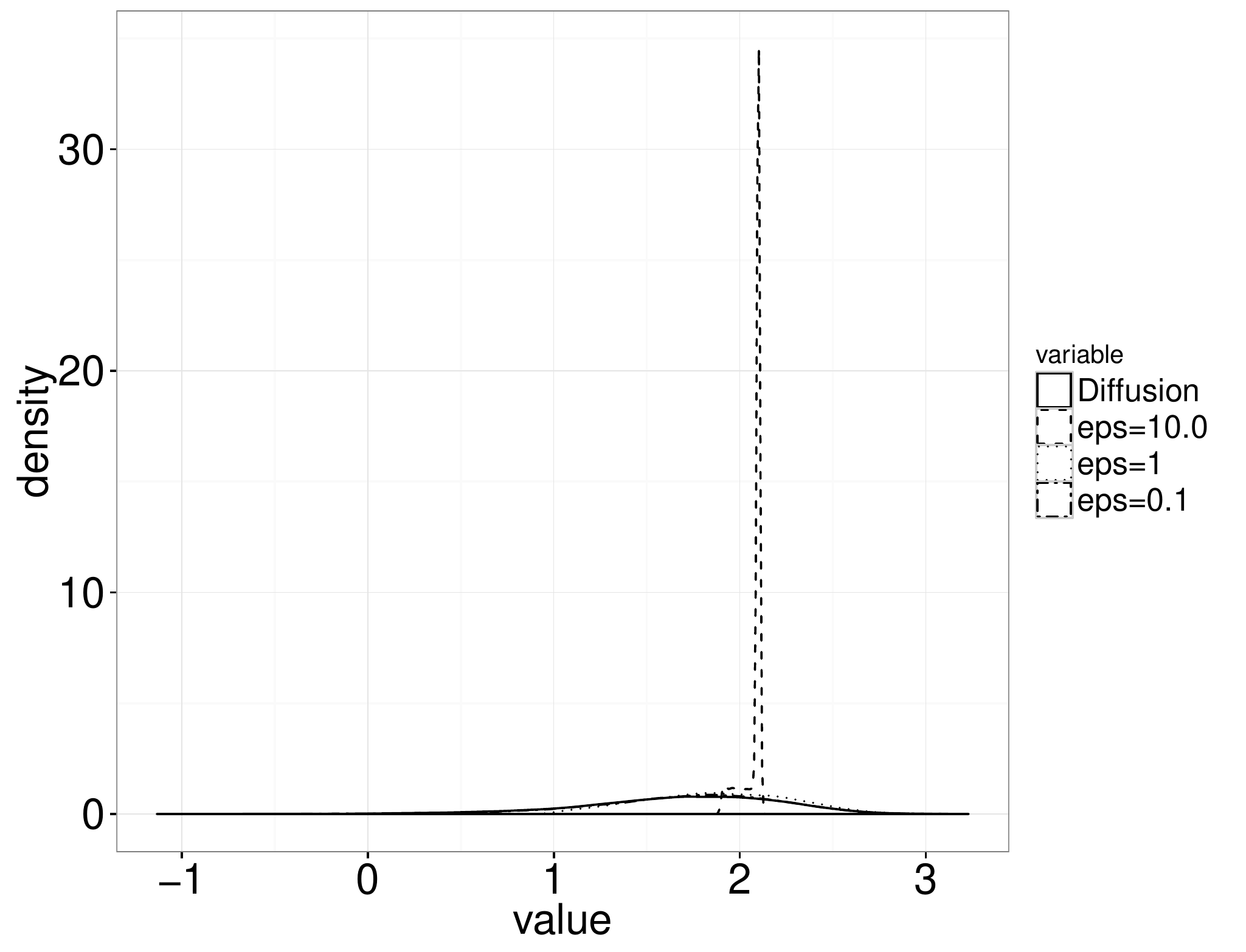}
 \caption{$T=1$}
\end{subfigure}
    \hfill
\begin{subfigure}[b]{0.5 \textwidth}
 \includegraphics[width=\textwidth]{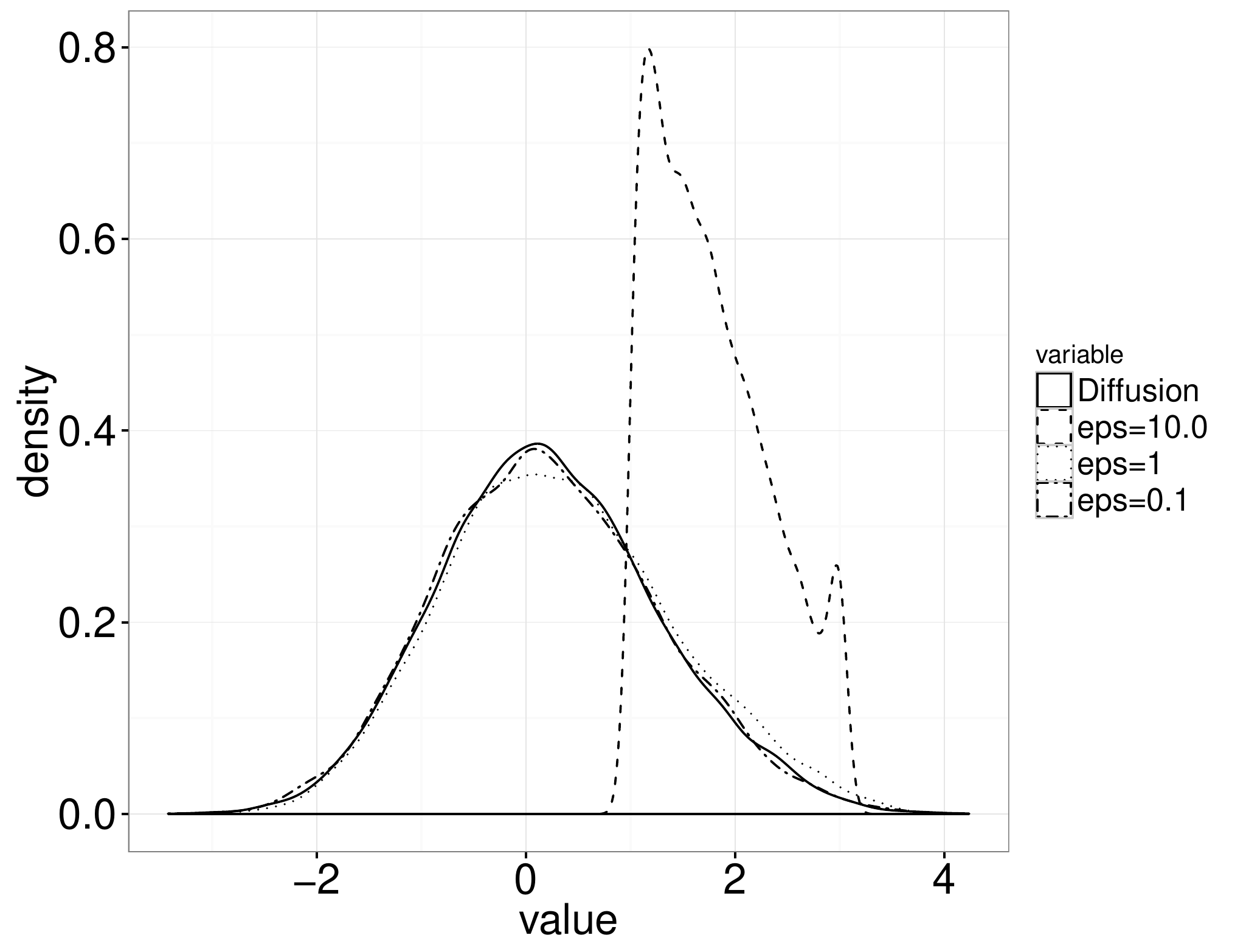}
 \caption{$T=10$}
\end{subfigure}
    \\
\begin{subfigure}[b]{0.5 \textwidth}
 \includegraphics[width=\textwidth]{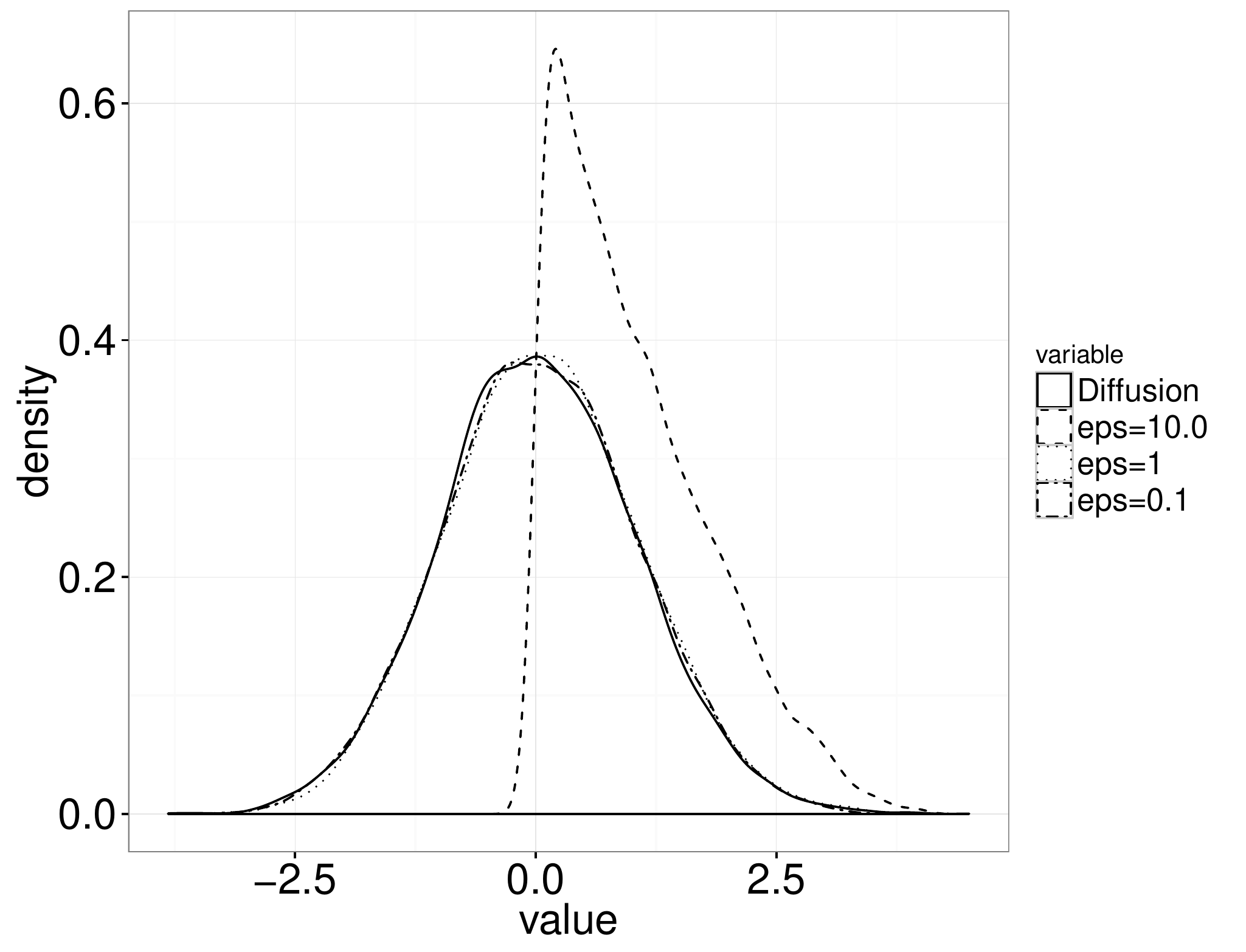}
 \caption{$T=20$}
\end{subfigure}  
\hfill  
\begin{subfigure}[b]{0.5 \textwidth}
 \includegraphics[width=\textwidth]{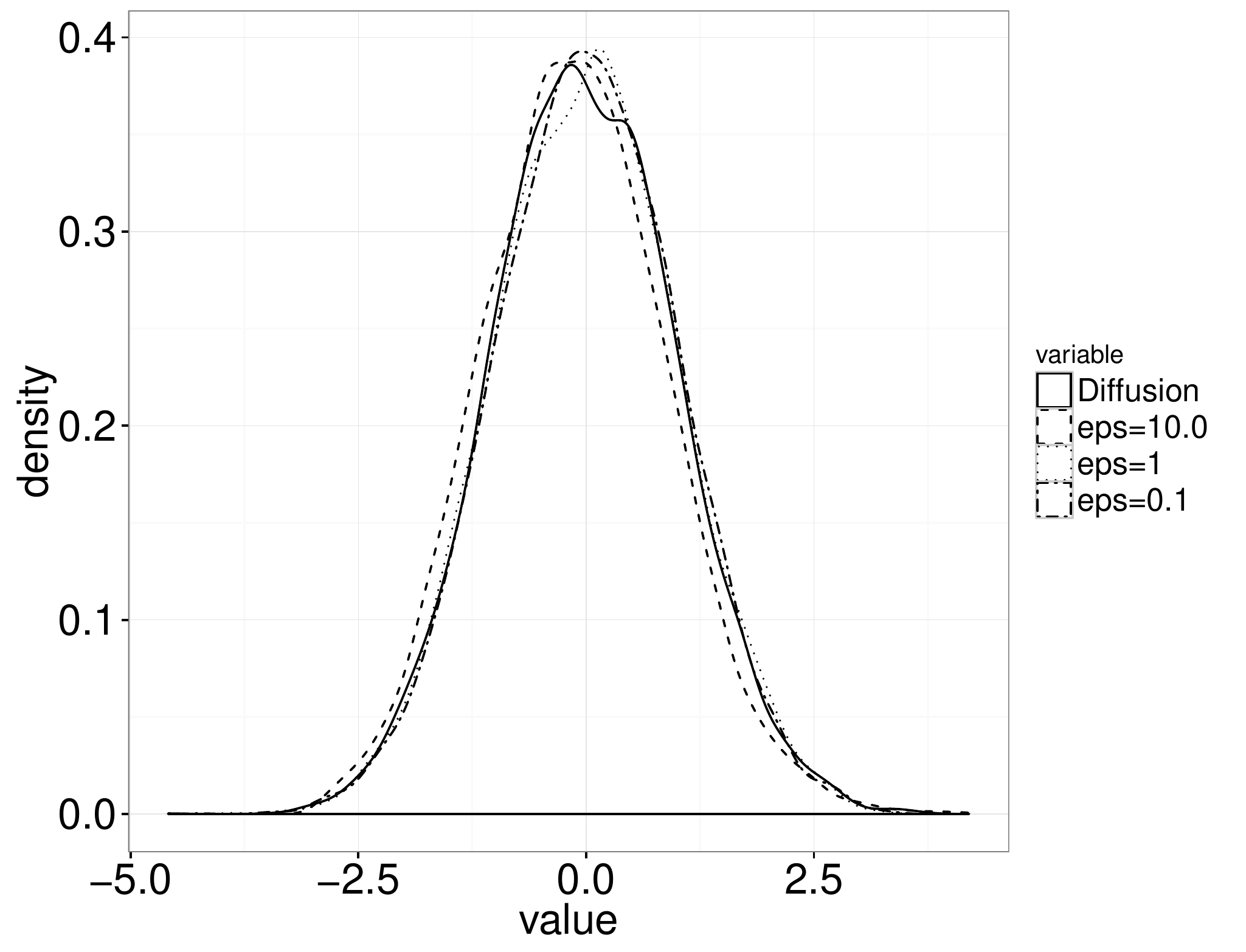}
 \caption{$T=50$}
\end{subfigure}
    \caption{Comparison of distribution of Zig-Zag versus It\^{o} diffusion processes.  The dashed lines denote Zig Zag process $X^\epsilon(t)$ with switching rate~\eqref{eq:switching_example} for different $\epsilon$ while the solid line gives the density of the diffusion process given by~\eqref{eq:langevin_example}.}
    \label{fig:zig_zag_diffusion}
  \end{figure}
 \end{example}


\section{Effective Sample Size for the Zig-Zag process}
\label{sec:ess}

Provided that a central limit theorem holds, for large $T$, the variance of the estimator $\pi_T(f)$ is given to leading order by $T^{-1}\sigma^2_f$, where $\sigma^2_f$ is the asymptotic variance for the observable $f$.  Suppose we wish to obtain an approximation of $\pi(f)$ within a given error tolerance $\epsilon^2$ (in the sense of mean-square error), one can obtain an estimate of the amount of time $T$ that the Zig-Zag process must be simulated, namely
\begin{equation}
\label{eq:running_time}
	T \approx \frac{\sigma^2_f}{\epsilon^2}.
\end{equation}
In general, \eqref{eq:running_time} does not reflect the true cost of simulating the Zig-Zag sampler.  Indeed, as with all continuous time processes, one can accelerate the mixing of a process simply by introducing a time change $Z^a(t) = Z(at)$, for $a > 0$.   In reality, introducing such a time change will increase the number of switches which occur per unit time,  thus increasing the computational effort required to simulate the process up to a given final time $T$.  

Assume that $Z(t)$ is simulated using the direct method (see Algorithm~\ref{alg:zigzag1} in Appendix \ref{sec:simulation_app}). The switching times are determined by a Poisson process with inhomogeneous rate $\int_0^t \lambda(X(s), \Theta(s))\,ds$.  Therefore, the average number of switches occurring in time $[0,T]$ is given by
$$
	N(T) := \mathbb{E}\left[\int_0^T \lambda(X(s), \Theta(s))\,ds\right]
$$
To quantify the average computational cost of simulating a Zig-Zag sampler we introduce the average switching rate $N_{S} = \lim_{t\rightarrow \infty} t^{-1}N(t)$, which measures the average number of switches occurring per unit time.  Since $Z(t)$ is ergodic, then we have that
\begin{equation}
\label{eq:NS}
\begin{aligned}
N_S &=  \lim_{t\rightarrow \infty} \frac{1}{t}\int_0^t \mathbb{E}\left[\lambda(X(s), \Theta(s))\right]\,ds\\ 
 	&=\frac{1}{2}\sum_{\theta=\pm 1}\int_{\mathbb{R}}\lambda(x, \theta)\pi(x)\,dx\\
	&= \frac{1}{2}\int_{\mathbb{R}} \left(|U'(x)| + 2 \gamma(x)\right)\pi(x)\,dx.
\end{aligned}
\end{equation}
where we used the explicit formula for $\lambda(x,\theta)$ given in~\eqref{eq:lambda-explicit}.  Thus,  assuming that $N_S$ is finite, after an initial transient period the number of switchings will increase linearly in time with rate $N_S$.   In terms of computational cost per simulated unit time interval, it is clear that  using canonical switching (i.e. $\gamma = 0$) is the cheapest option.  In this case, the average switching rate will be determined entirely by the target distribution.  

For the purpose of comparison with other sampling schemes, it would be ideal to obtain an expression for the variance of the estimator $\frac{1}{T}\int_0^T f(X_s)\,ds$ as a function of the number of switches required to simulate the Zig-Zag process up to time $T$.   For large $T$ the average number of switches that occurred over $[0,T]$ is approximately $T N_S$ where $N_S$ is given by \eqref{eq:NS}.  Over large time-scales the variance of the estimator $\pi_T(f)$ is thus given  (for the canonical switching rates, $\gamma = 0$), by
$$
	\mbox{Var}\left[\pi_T (f)\right] \approx \frac{\sigma^2_f N_S}{N(T)} = \frac{1}{N(T)}\int_{\mathbb{R}} |U'(x)|\psi^2(x)\pi(x)\,dx \int_{\mathbb{R}}|U'(x)|\pi(x)\,dx,
$$
where $N(T)$ is the number of switches that occured up to time $T$ and $\psi$ is given by \eqref{eq:expression-psi}. 

A useful measure of the effectiveness of a sampling scheme is the \emph{effective sample size} (ESS), which provides a measurement of the equivalent number of IID draws from $\pi$ which would be required to obtain an estimate for $\pi(f)$ with similar variance. For the Zig-Zag sampler, it is natural to define the ESS as follows
\begin{equation}
\label{eq:ESS}
	ESS := \frac{\mbox{Var}_{\pi}[f]}{\Var[\pi_T(f)]} = \frac{\mbox{Var}_{\pi}[f]}{\sigma^2_f N_S} N(T).
\end{equation}
This expression provides a far more natural measure of the effectiveness of the Zig-Zag sampler than e.g.~\eqref{eq:running_time}.  In particular, it is trivial to check that~$\mbox{Var}_{\pi}[f]/(\sigma^2_f N_S)$ is invariant under time rescaling $t \rightarrow a t$, for $a > 0$.  The use of the number of switches $N(T)$ as a measure of computational cost is also well-justified.  One can see from Algorithm \ref{alg:zigzag1} that this coincides with the number of evaluations of the gradient of the log target distribution $U(x)$, which in high dimensions, or in the large data regime for Bayesian inference problems (as considered in \cite{BierkensFearnheadRoberts2016}) would be the most expensive operation required to compute the next term in the event chain. The ESS is linearly increasing with $N(T)$ by a factor equal to $\Var_{\pi}[f]/( \sigma_f^2 N_S)$, which determines the efficiency of the Zig-Zag sampler. 

\begin{example}[Moments of Gaussian distribution]
Consider the problem of computing moments $x^k$ of the Gaussian distribution $\mathcal{N}(0, \nu^2)$, where $k$ is a natural number.   In this case, we can compute the effective switching rate to be $N_S = (2\pi \nu^2)^{-1/2}$, so that using the expression for the asymptotic variance obtain in Example \ref{ex:gaussian-moments} we have for $k$ odd
\begin{equation}
\label{eq:ess_gaussian1}
	\frac{ESS}{N(T)} = \frac{\nu^{2k}\sqrt{2\pi \nu^2}(2k-1)!!}{\frac{\nu^{2k+1}}{\sqrt{2\pi}}2^{k+2}\left(\frac{2k!}{k+1} - \frac{1}{2}((k-1)/2)!)^2\right)} = \frac{2\pi (2k-1)!!}{2^{k+2}\left(\frac{2k!}{k+1} - \frac{1}{2}((k-1)/2)!)^2\right)},
\end{equation}
which is independent of $\nu$.  A tedious calculation reveals that  $ESS > N(T)$, for all such $k$.  A similar computation gives, for $k$ even
\begin{equation}
\label{eq:ess_gaussian2}
	\frac{ESS}{N(T)} = 2\pi\frac{ (2k-1)!! - ((k-1)!!)^2}{\frac{8(k!)^2}{2^k((k/2)!)^2} + \frac{8k!(2^k-k-2)}{k+1}}.
\end{equation}
Evaluating numerically the first few moments using~\eqref{eq:ess_gaussian1} and \eqref{eq:ess_gaussian2} we obtain
\begin{center}
 \begin{tabular}{c|c|c|c|c|c|c}
 \label{tab:ess_table}
  $k$ & 1 & 2 & 3 & 4 & 5 & 6 \\
  \hline
  $ESS/N(T)$ & $1.5708$ & $1.5708$ & $1.1781$ & $1.32278$ & $1.22073$ & $1.33459$
 \end{tabular}
 \end{center}
we see that the relation $ESS > N(T)$ appears to hold for general $k$.  This demonstrates a non-intuitive phenomenon:  the effective sample size of the Zig-Zag process is higher than the number of IID samples.  Thus an ergodic average generated from a trajectory of the Zig-Zag process with $N$ switches will tend to have lower variance than a Monte Carlo average of $N$ IID samples of $\pi$. 
\\\\
To demonstrate the performance of the Zig-Zag sampler, we generate $10^5$ independent realisations of the process ergodic with respect to $\mathcal{N}(0,4)$, and in Figure \ref{fig:gaussian_moment} plot the variance for estimators of the first two moments, as a function of $N$ (the maximum number of switches).  We also plot the variance for a MC average generated from IID samples, as well as for a Random Walk Metropolis-Hastings (RWMH) scheme with manually tuned step-size.  We see that even after manually tuning the step-size of the RWMH chain, the asymptotic variance of the corresponding estimator is still an order of magnitude higher that that of the IID chain and Zig-Zag sampler. In both cases, the ratio of variances for the Zig-Zag sampler and IID average is constant, independent of $N$, as predicted by~\eqref{eq:ess_gaussian1} and~\eqref{eq:ess_gaussian2}. 
\begin{figure}[!ht]
\begin{subfigure}[b]{0.45 \textwidth}
 \includegraphics[width=\textwidth]{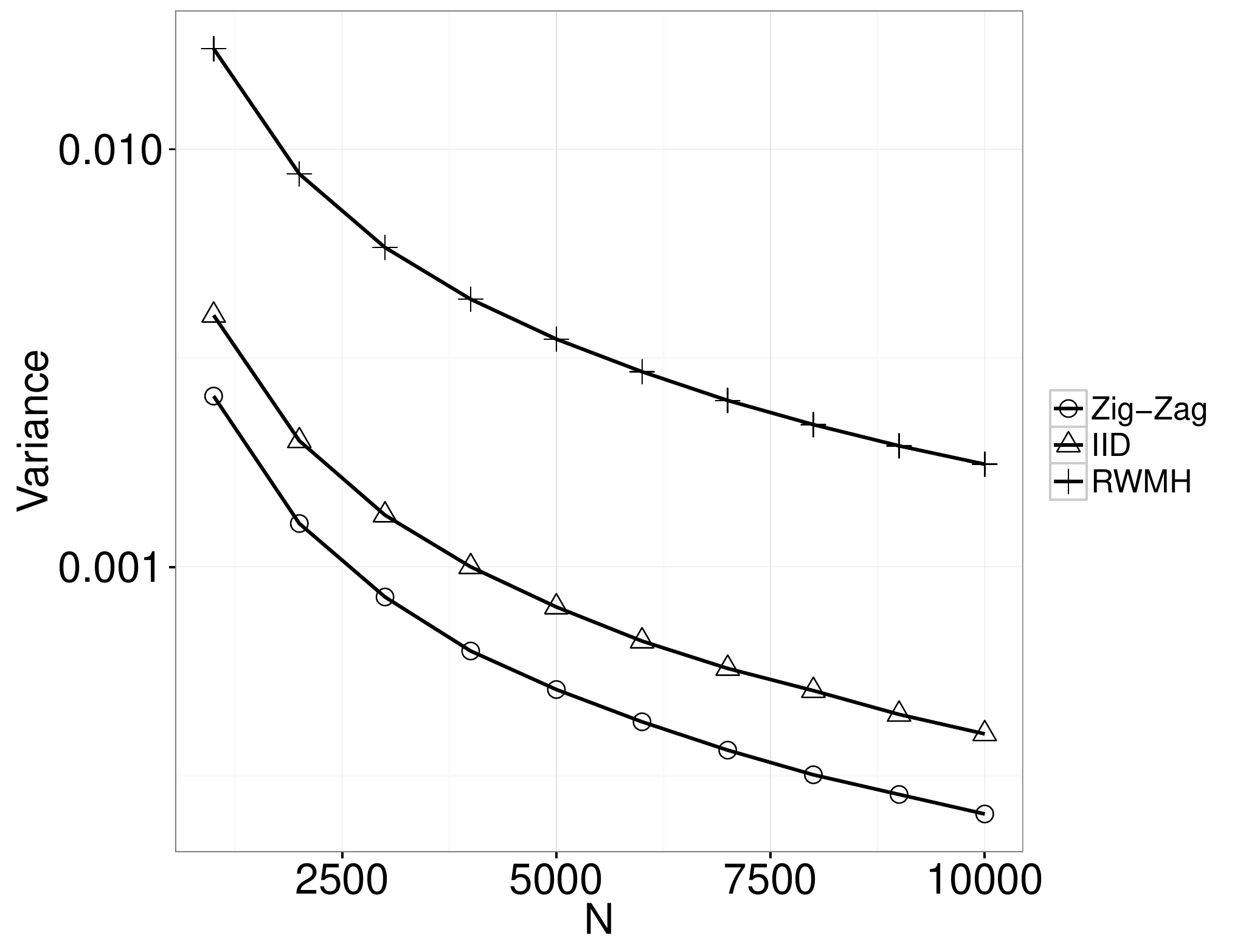}
 \caption{Variance for $f(x) = x$.}
\end{subfigure}
    \hfill
\begin{subfigure}[b]{0.45 \textwidth}
 \includegraphics[width=\textwidth]{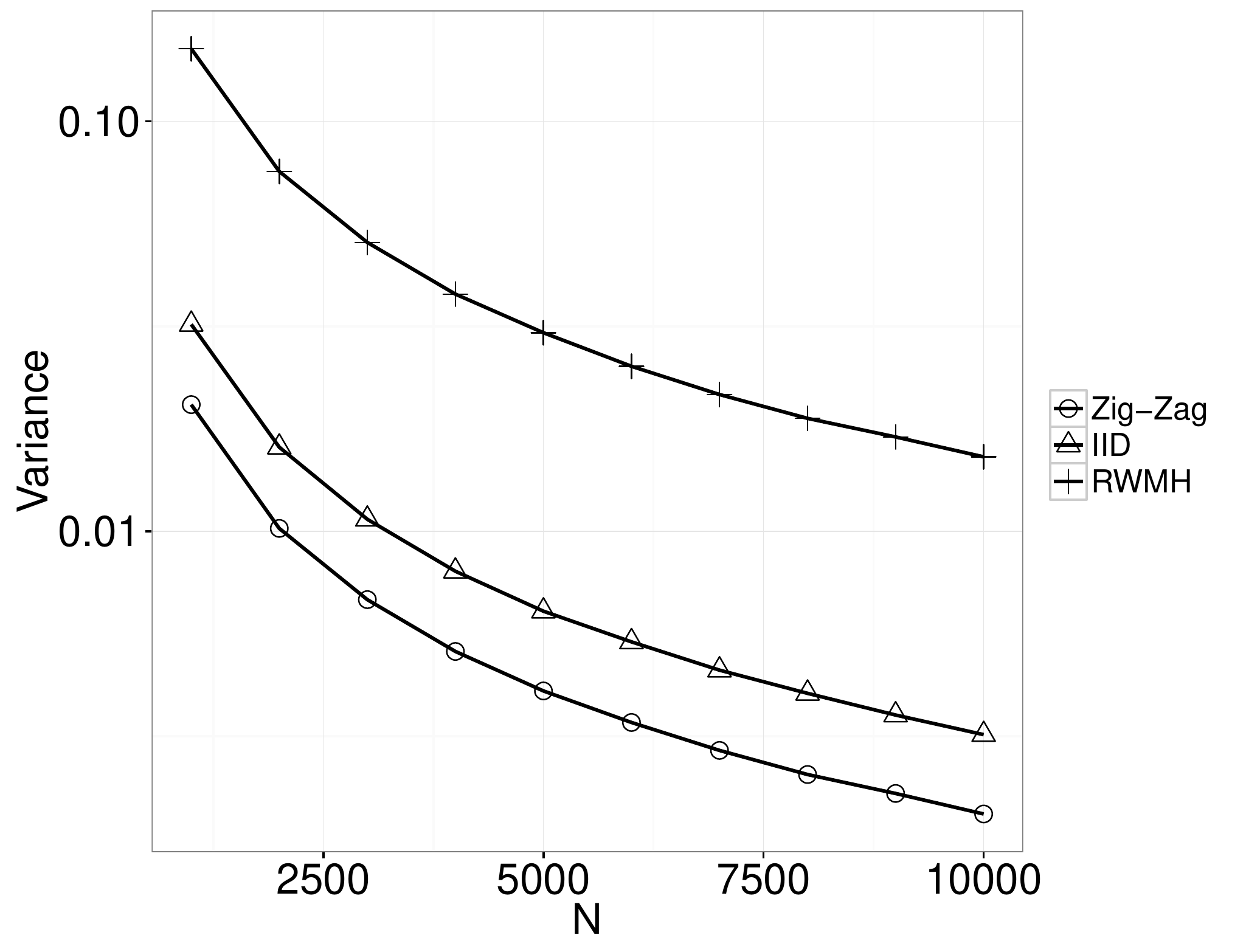}
 \caption{Corresponding plot for $f(x) = x^2$.}
\end{subfigure}
\caption{\label{fig:gaussian_moment}Variance of estimator $\pi_T(f)$ of $f(x)=x$ and $f(x)=x^2$ respectively, as a function of number of switches.  For comparison, the variance of Monte Carlo estimator using IID samples and a tuned Random-Walk-Metropolis-Hastings chain are also displayed.}
  \end{figure}
  \end{example}
\noindent
The fact that the Zig-Zag sampler is able to achieve effective sample sizes which beat IID is a property which is closely tied to the non-reversible nature of the Zig-Zag process. While we have demonstrated this property for the Gaussian case, one should not interpret this as a general result.  Indeed, in the following example we repeat the above experiment  for the Student t-distribution, and we show that although the Zig-Zag sampler  outperforms the corresponding RWMH chain, it will not have ESS higher than that of an IID chain.
\begin{example}[Moments of Student t-distribution]
Following Example~\ref{ex:student-t-moments}, we consider once again the problem of the first moment of the Student t-distribution with $\nu$ degrees of freedom.  In Figure~\ref{fig:student_moment} we plot the variance of estimates for the first moment obtained from the Zig-Zag process using canonical switching rate~\eqref{eq:switching_studentt}, for $\nu =4$, $6$ and $8$.  Each point is generated from $M = 10^5$ independent realisations of the process.  Note that for the observable $f(x) = x$, Assumption~\ref{ass:unimodal} holds for each value of $\nu$.  As in the previous example, we also plot the variance of a Monte-Carlo estimator generated from $M$ IID samples, as well a from a manually tuned RWMH chain.  
\begin{figure}[!ht]
\begin{subfigure}[b]{0.32 \textwidth}
 \includegraphics[width=\textwidth]{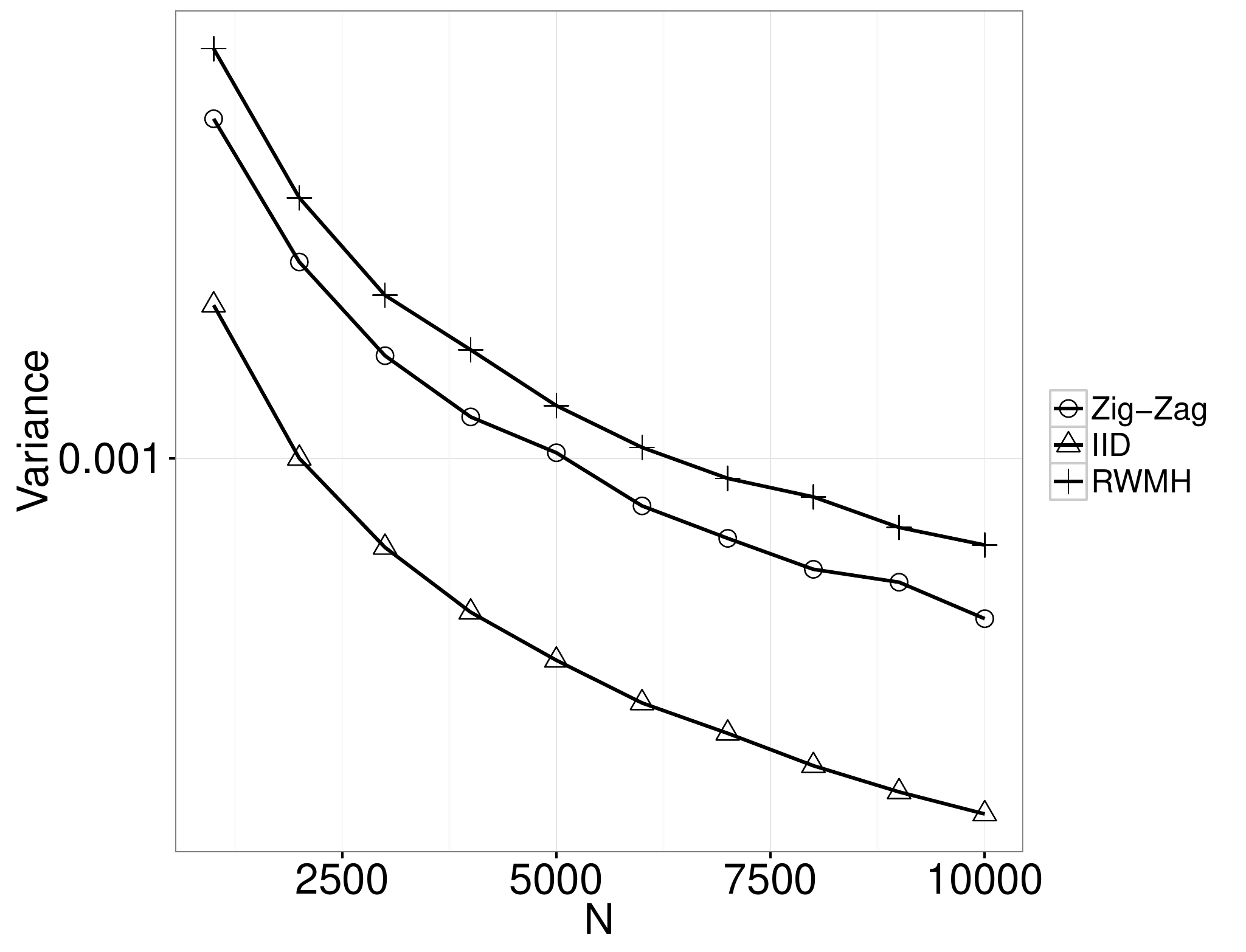}
 \caption{$\nu=4$.  }
\end{subfigure}
\hfill
\begin{subfigure}[b]{0.32 \textwidth}
 \includegraphics[width=\textwidth]{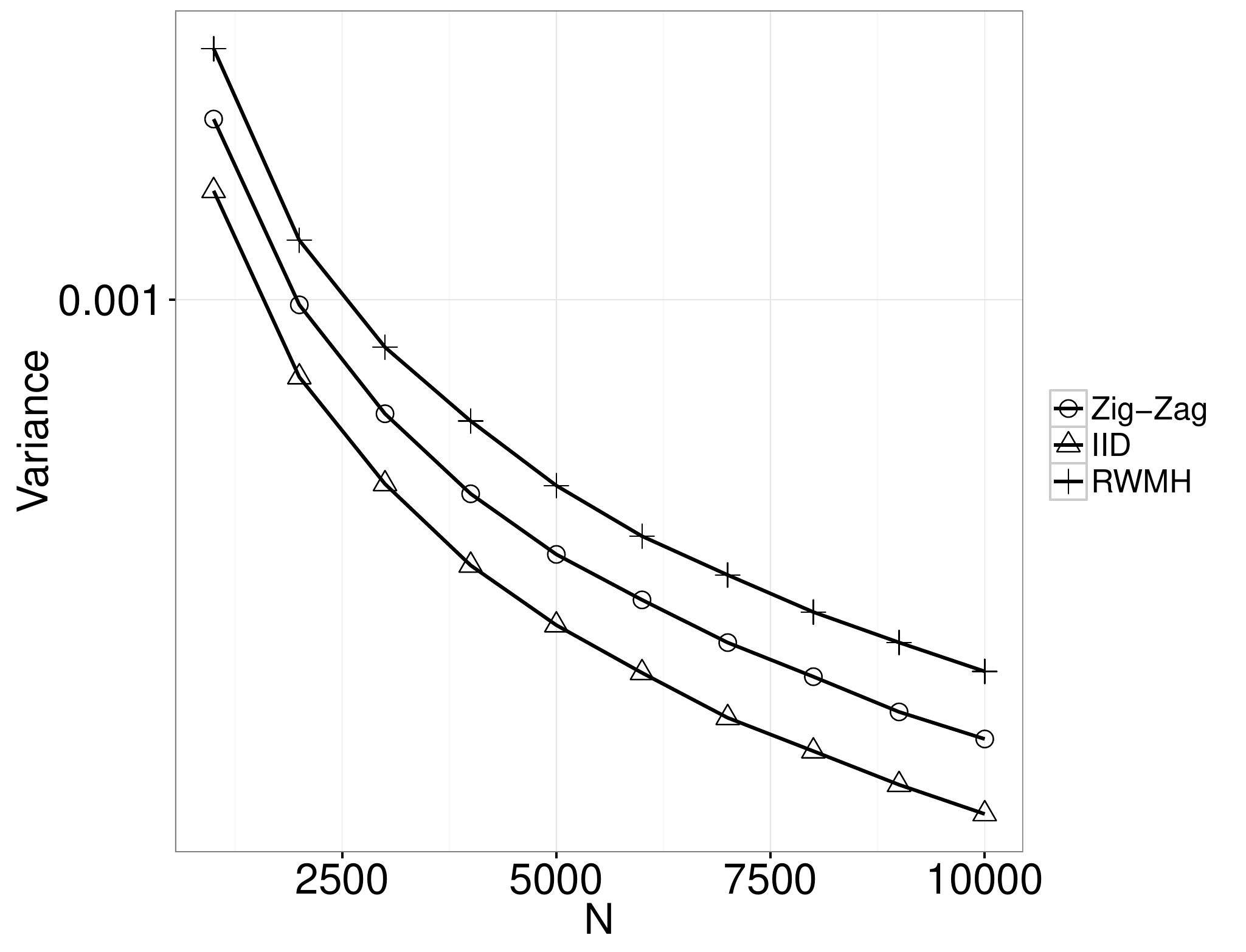}
 \caption{$\nu=6$.  }
\end{subfigure}
\hfill
\begin{subfigure}[b]{0.32 \textwidth}
 \includegraphics[width=\textwidth]{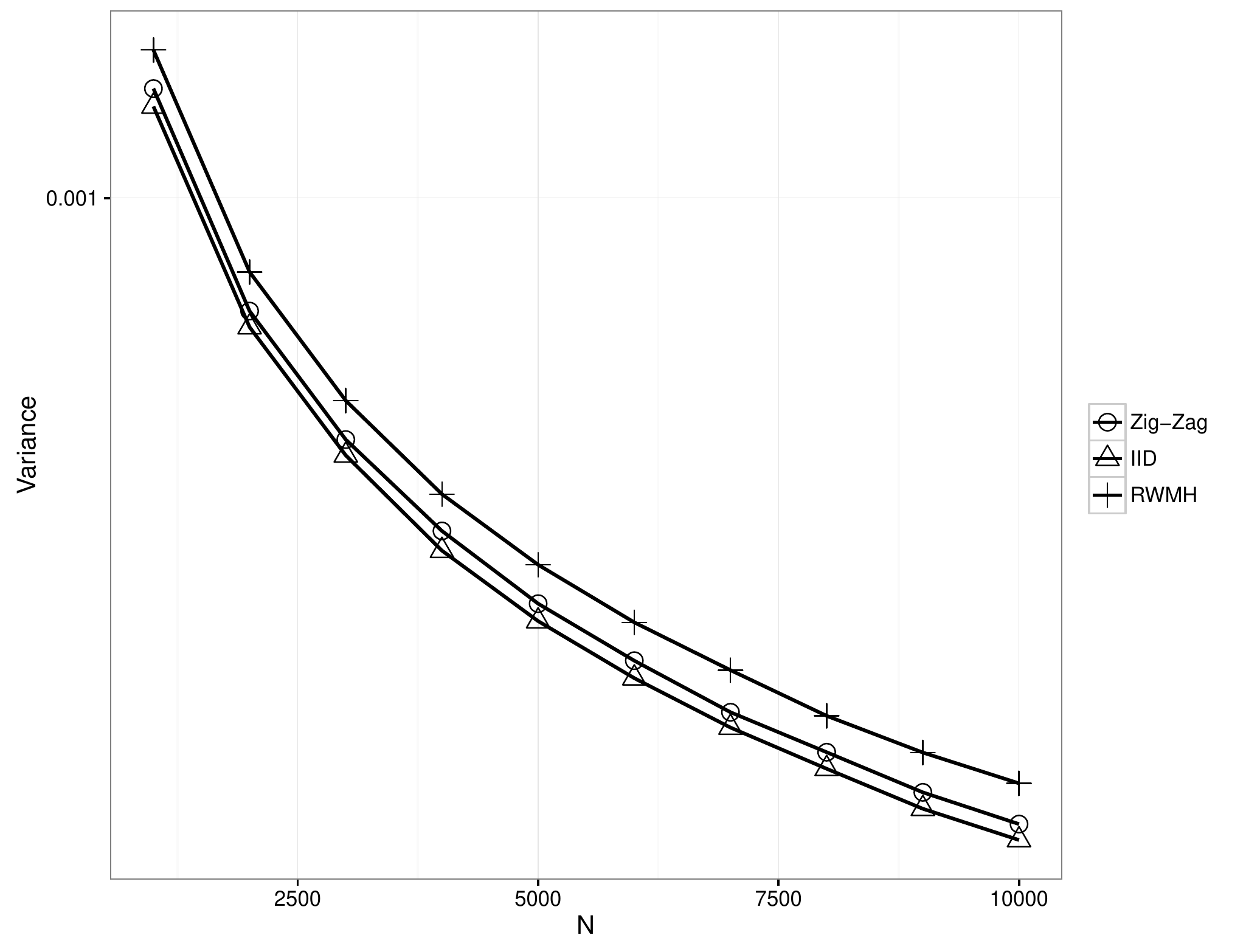}
 \caption{$\nu=8$.  }
\end{subfigure}
\caption{\label{fig:student_moment}Variance of estimator $\pi_T(f)$ of $f(x)=x$ respectively, as a function of number of switches for the student t-distribution.  For comparison, the variance of Monte Carlo estimator using IID samples and a tuned Random-Walk-Metropolis-Hastings chain are also displayed.}
 \end{figure}
In this case the effective sample size of the Zig-Zag sampler will not be  higher than that of the IID estimator, in general.  However, as the degrees of freedom $\nu$ goes to infinity, the target distribution becomes increasingly Gaussian, and for sufficiently large $\nu$, the Zig-Zag sampler will exhibit lower variance than the corresponding IID scheme.  
\end{example}

\appendix

\section{}

\subsection{Proof of Proposition~\ref{prop:well-posedness-stationary-zigzag}}
\label{app:well-posedness}

Because $\lambda$ is locally bounded, \cite[Assumption 3.1]{Davis1984} is satisfied, and a piecewise deterministic Markov process can be constructed as described in \cite{Davis1984}. Then, by \cite[Theorem 5.5]{Davis1984}, $L$ is the extended generator. The Feller property is established by tracing the proof of \cite[Proposition 4]{bierkensroberts2015}, for which only continuity of $\lambda$ is required.
Since $\lambda$ is continuous and because $\lambda(x,\theta) > 0$ for $\theta x \geq x_0$, we have in fact that, for any $x_1 > x_0$,  there exists a $c > 0$ such that 
\[ \lambda(x,\theta) \geq c \quad \mbox{for all $(x,\theta)$ satisfying $x_0 \leq \theta x \leq x_1$.}\]
The proof that compact sets are petite is now a straightforward adaptation of the proof of \cite[Lemma 15]{bierkensroberts2015}, and a Markov process with this property is $\varphi$-irreducible; in particular there exists at most a single invariant measure. The stationarity of $\mu$ is established in \cite[Proposition 5]{bierkensroberts2015}.

\subsection{Technical results towards the CLT}
\label{app:technical}
The following lemma is a continuous time variant of \cite[Exercise 2.4.6]{Durrett1996}.

\begin{lemma}
\label{lem:CLT-randomtime}
 Let $(Y_i)$ be sequence of i.i.d. mean zero random variables with $\E[Y_i^2] = \sigma^2 < \infty$. Suppose $a : [0,\infty) \rightarrow \N$ such that $\lim_{t \rightarrow \infty} a(t) = \infty$ and  $(N(t))_{t \geq 0}$ is a random process such that $\lim_{t \rightarrow \infty} \frac {N(t)}{a(t)} = 1$ in probability. Then
 \[ \frac 1 {\sqrt{a(t)}} \sum_{i=1}^{N(t)} Y_i \Rightarrow \mathcal N(0,\sigma^2) \ \mbox{as $t \rightarrow \infty$}.\]
\end{lemma}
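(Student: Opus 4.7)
The plan is to prove this as a continuous-time Anscombe-type theorem by comparing $\sum_{i=1}^{N(t)} Y_i$ to a deterministic partial sum $\sum_{i=1}^{m(t)} Y_i$ with $m(t)$ an integer close to $a(t)$, and applying the classical CLT plus Slutsky's theorem.

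\emph{Setup and decomposition.} Let $S_n = \sum_{i=1}^{n} Y_i$ (with $S_0 = 0$) and set $m(t) := \lfloor a(t) \rfloor$. Since $a(t) \to \infty$, also $m(t) \to \infty$. I write
\begin{equation*}
\frac{1}{\sqrt{a(t)}} \sum_{i=1}^{N(t)} Y_i \;=\; \frac{1}{\sqrt{a(t)}} S_{m(t)} \;+\; \frac{1}{\sqrt{a(t)}}\bigl(S_{N(t)} - S_{m(t)}\bigr).
\end{equation*}
The first step is to handle the main term via the classical CLT: since $m(t)/a(t) \to 1$, the standard CLT for i.i.d.\ sequences gives $S_{m(t)}/\sqrt{m(t)} \Rightarrow \mathcal{N}(0,\sigma^2)$, and multiplying by $\sqrt{m(t)/a(t)} \to 1$ combined with Slutsky yields $S_{m(t)}/\sqrt{a(t)} \Rightarrow \mathcal{N}(0,\sigma^2)$.

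\emph{Control of the error term.} The key step is to show
\begin{equation*}
R(t) \;:=\; \frac{1}{\sqrt{a(t)}}\bigl(S_{N(t)} - S_{m(t)}\bigr) \;\longrightarrow\; 0 \quad \text{in probability.}
\end{equation*}
Fix $\varepsilon > 0$ and $\delta > 0$. By the hypothesis $N(t)/a(t) \to 1$ in probability, for any $\eta \in (0,1)$ there exists $T_0$ such that for all $t \geq T_0$,
\begin{equation*}
\mathbb{P}\bigl(|N(t) - m(t)| > \eta\, a(t)\bigr) < \delta/2.
\end{equation*}
On the complementary event, $|R(t)|$ is dominated by $\max_{|k - m(t)| \leq \eta a(t)} |S_k - S_{m(t)}|/\sqrt{a(t)}$. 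By Kolmogorov's maximal inequality applied to the forward and backward partial sums starting from $m(t)$ (each a sum of at most $\lceil \eta a(t) \rceil$ i.i.d.\ mean-zero terms with variance $\sigma^2$),
\begin{equation*}
\mathbb{P}\!\left( \max_{|k-m(t)| \leq \eta a(t)} |S_k - S_{m(t)}| > \varepsilon \sqrt{a(t)} \right) \;\leq\; \frac{2 \lceil \eta a(t) \rceil \sigma^2}{\varepsilon^2 \, a(t)} \;\leq\; \frac{4 \eta \sigma^2}{\varepsilon^2}
\end{equation*}
for $a(t)$ large. Choosing $\eta = \varepsilon^2 \delta/(8\sigma^2)$ makes the right-hand side $\leq \delta/2$, so $\mathbb{P}(|R(t)| > \varepsilon) < \delta$ for $t$ large.

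\emph{Conclusion.} Applying Slutsky's theorem to the decomposition yields the desired convergence $\frac{1}{\sqrt{a(t)}} \sum_{i=1}^{N(t)} Y_i \Rightarrow \mathcal{N}(0,\sigma^2)$. The main technical obstacle is the error-term control, which requires Kolmogorov's inequality on \emph{both} sides of $m(t)$ to handle the fact that $N(t)$ may be above or below $m(t)$, and must be carried out uniformly without assuming any independence between $N(t)$ and $(Y_i)$.
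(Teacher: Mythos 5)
Your proof is correct and follows essentially the same route as the paper's: replace the random index $N(t)$ by the deterministic $a(t)$, control the difference by Kolmogorov's maximal inequality over the window $|k - a(t)| \lesssim \eta\, a(t)$, and finish with the classical CLT and Slutsky. One cosmetic remark: the hypothesis already states $a : [0,\infty)\to\N$, so the floor $m(t)=\lfloor a(t)\rfloor$ is redundant ($m(t)=a(t)$). Your write-up is in fact a touch more careful than the paper's on the maximal-inequality step: you explicitly invoke Kolmogorov on \emph{both} the forward and backward blocks around $m(t)$ (yielding the factor $2$), whereas the paper's displayed chain passes directly to a single $\sup_{m}\bigl|\sum_{i=1}^{m}Y_i\bigr|$ without spelling out the two-sided reduction. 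That factor is immaterial (it only rescales the choice of $\eta$, resp.\ $\beta$), but making it explicit, as you do, is the cleaner presentation.
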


\begin{proof}
Let $\varepsilon > 0$ and $\gamma > 0$. Let $\beta = \varepsilon \gamma^2/(2 \sigma^2)$. Pick $T > 0$ such that for all $t \geq T$,  $|N(t)/a(t) - 1| > \beta$ with probability at most $\varepsilon /2$. For fixed $t \geq T$,  let $\Omega(t)$ denote the event in which $|N(t)/a(t) - 1| \leq \beta$.
On $\Omega(t)$, $|N(t)- a(t)| \leq \lfloor \beta a(t) \rfloor \leq \beta a(t)$. By Kolmogorov's maximal inequality,
\begin{align*} &  \P \left(  \frac 1 {\sqrt{a(t)}}\left|\sum_{i=1}^{N(t)} Y_i  - \sum_{i=1}^{a(t)} Y_i\right| \geq \gamma \right) \leq \frac {\varepsilon}{2} + \P\left( \1_{\Omega(t)} \frac 1 {\sqrt{a(t)}}\left|\sum_{i=1}^{N(t)} Y_i  - \sum_{i=1}^{a(t)} Y_i\right| \geq \gamma \right) \\
& \leq \frac {\varepsilon}{2} +  \P \left( \sup_{m \in \{1, \dots, \lfloor \beta a(t)\rfloor \}} \frac 1 {\sqrt{a(t)}} \left| \sum_{i=1}^m Y_i \right| \geq \gamma\right)  \leq \frac {\varepsilon}{2} + \frac {\beta \sigma^2 a(t) }{\gamma^2 a(t)}  \leq \varepsilon.
\end{align*}
This establishes that $\frac 1 {\sqrt{a(t)}} \left(\sum_{i=1}^N(t) Y_i - \sum_{i=1}^{a(t)} Y_i \right)$ converges in probability to 0. The stated result now follows from the classical central limit theorem applied to $\frac 1 {\sqrt{a(t)}} \sum_{i=1}^{a(t)} Y_i$.
\end{proof}


\begin{proof}[Proof of Lemma~\ref{lem:martingale-expressions}]
Since $\phi \in \mathcal D(L)$ it follows that $M$ is a local martingale. 
Due to stationarity we have
\[ \E_{\mu} |\phi(Z(t))| = \E_{\mu} |\phi(Z(0)| = \mu(|\varphi|) < \infty \]
and 
\[ \E_{\mu} \left| \int_0^t g(Z(s)) \ d s \right| \leq \int_0^t \E_{\mu} \left| g(Z(s)) \right| \ d s = t \mu(|g|) < \infty,\]
where we used that $|g| \leq f$ and $\mu(f) <\infty$ by Proposition~\ref{prop:solution-Poisson-equation}. It follows that $M$ is a martingale.
We have
\begin{align*}
 M(t) & = \phi(Z(t)) - \phi(Z(0)) - \int_0^t L\phi(Z(s)) \ d s \\
 & = \int_{0}^{t} \Theta(s) \phi'(Z(s)) \ d s + \sum_{i=1}^{N(t)} \{ \phi(Z(T_i))) - \phi(Z(T_i-)) \} \\
 & \quad \quad - \int_0^t \left\{  \Theta(s) \phi'(Z(s)) + \lambda(Z(s)) \left(\phi(X(s), \Theta(s)) - \phi(X(s), -\Theta(s))\right) \right\} \ d s \\
 & = \sum_{i=1}^{N(t)}  \{ \phi(X(T_i), \Theta(T_i)) - \phi(X(T_i), -\Theta(T_i) \} \\
 & \quad \quad - \int_0^t \lambda(Z(s)) \left(\phi(X(s), \Theta(s)) - \phi(X(s), -\Theta(s))\right) \ d s  \\
 & = - 2 \sum_{i=1}^{N(t)} \psi(Z(T_i))+ 2 \int_0^t \lambda(Z(s)) \psi(Z(s)) \ d s,
\end{align*}
where $\psi(x) = \half(\phi(x,+1) - \phi(x, -1))$.
Using \cite[Theorem 26.6 (vii), (viii)]{Kallenberg2002} the quadratic variation of $M$ and predictable quadratic variation are given by the stated expressions.
\end{proof}

In Lemma~\ref{lem:martingale-expressions} we introduced the function $\psi : \R \rightarrow \R$. In the following lemma we collect some useful properties of this function.

\begin{proof}[Proof of Lemma~\ref{lem:expression-psi}]
Assume without loss of generality that $\mu(g) = 0$.
Writing out the relation $L \phi(x,\theta)  = - g(x,\theta)$ for $\theta = \pm 1$ and adding the two equations gives
\[ \frac{d \phi(x,+1)}{d x} - \frac{d \phi(x,-1)}{d x} - (\lambda(x,+1) - \lambda(x,-1)) (\phi(x,+1) - \phi(x,-1)) = - (g(x,+1) + g(x,-1))/2,\]
i.e.
\[ \psi'(x) - U'(x) \psi(x) = -(g(x,+1) + g(x,-1))/2.\]
This equation may be solved to give 
\begin{equation} \label{eq:solution-psi} \psi(x) = \frac{c}{\pi(x)} + \frac 1 {2 \pi(x)} \int_x^{\infty} \left\{ g(\xi,+1) + g(\xi,-1) \right\}\pi(\xi) \ d \xi, \quad x \in \R.\end{equation}
It remains to verify that the constant $c$ vanishes. By Proposition~\ref{prop:solution-Poisson-equation}, we have $|\phi| \leq c_0(V+1)$ and hence
\[ |\psi(x)| = |\phi(x,+1) -\phi(x,-1)| \leq c_0(V(x,+1) + V(x,-1) + 2).\]
By the assumption that $\pi(x) V(x,\pm 1) \rightarrow 0$, it therefore follows that $\pi(x) \psi(x) \rightarrow 0$ as $|x| \rightarrow \infty$. 
Multiplying~\eqref{eq:solution-psi} by $\pi$, we have that
\[ \pi(x) \psi(x) = c + \half \int_x^{\infty} \{ g(\xi, +1) + g(\xi,-1)\} \pi(\xi) \ d \xi \rightarrow c \quad \mbox{as $|x| \rightarrow \infty$},\]
so that necessarily $c = 0$.

Now suppose for some $\delta \in \R$, that $|x|^{\delta} \pi(x) \rightarrow 0$ as $|x| \rightarrow \infty$ and~\eqref{eq:condition-psi-asymptotics} holds.
Then since $h(x) := \int_{-\infty}^x \{  g(\xi,+1) +  g(\xi,-1)\} \pi(\xi) \ d \xi \rightarrow 0$ as $x \rightarrow \pm \infty$, using l'H\^opital's rule gives
\[ \frac{\psi(x)}{x^{\delta}} = \frac {h(x)}{x^{\delta} \pi(x)} \sim \frac{h'(x)}{\delta x^{\delta - 1} \pi(x) + x^{\delta} \pi'(x)} = \frac{(g(x,+1) + g(x,-1))\pi(x)}{\underbrace{\delta x^{\delta -1}\pi(x)}_{\rightarrow 0} + x^{\delta} \pi'(x)} \rightarrow 0 \quad \mbox{as $|x| \rightarrow \infty$}.\]
\end{proof}

\subsection{Equivalence of expressions for asymptotic variance}
\label{sec:equiv_var}
A natural question to ask is whether the two expressions for asymptotic variance, given by (\ref{eq:asvar-alternative}) and (\ref{eq:zigzag-asymptotic-variance}) are equivalent in cases where both expressions are valid.  Suppose for an observable $g$ such that $\pi(g) = 0$,
\begin{equation}
\label{ass:equiv_var1}
\lim_{x\rightarrow \pm\infty}e^{-U(x)}\left(\int_{0}^{x}g(y)\,dy\right)^2 = 0,
\end{equation}
and
\begin{equation}
\label{ass:equiv_var2}
\lim_{x\rightarrow \pm\infty} e^{U(x)}\left(\int_{-\infty}^{x}g(y)e^{-U(y)}\,dy\right)^2 = 0.
\end{equation}
Assuming that (\ref{ass:equiv_var1}) and (\ref{ass:equiv_var2}) hold, and that the potential $U$ satisfies $U(0) = 0$, then we can show that both expressions are equal.   
Considering the term 
\begin{align*}
	\int_{0}^{\infty}& U'(x)e^{U(x)}\left(\int_{-\infty}^{x} g(y)e^{-U(y)}\,dy\right)^2\,dx \\
	&= -2\int_{0}^{\infty} g(x)\left(\int_{-\infty}^{x} g(y)e^{-U(y)}\,dy\right)\,dx -  \left(\int_{-\infty}^{0} e^{-U(y)}g(y)\,dy\right)^2,
\end{align*}
where we use (\ref{ass:equiv_var2}) to eliminate the contribution due to the upper integration limit.  Similarly, we have
\begin{align*}
\int_0^\infty & U'(x)e^{-U(x)}\left(\int_{0}^x g(y)\,dy\right)^2\,dx\\
&= 2\int_{0}^\infty e^{-U(x)}g(x)\left(\int_{0}^{x} g(y)\,dy\right)\, dx - \lim_{x\rightarrow \infty} e^{-U(x)}\left(\int_0^x g(y)\,dy\right)^2,
\end{align*}
for which the second term is zero, by (\ref{ass:equiv_var1}).  Exchanging the integrals we obtain
$$
	2\int_{0}^\infty e^{-U(x)}g(x)\int_{0}^{x} g(y)\,dy\, dx = 2\int_0^\infty g(x) \int_{x}^{\infty} e^{-U(y)}g(y) \,dy\,dx
$$
Since $\pi(g) = 0$, it follows that
$$
	\int_x^\infty e^{-U(y)}g(y)\,dy = -\int_{-\infty}^x e^{-U(y)}g(y)\,dy,
$$
and so
$$
2\int_{0}^\infty e^{-U(x)}g(x)\int_{0}^{x} g(y)\,dy\, dx = -2\int_0^\infty g(x) \int_{-\infty}^{x} e^{-U(y)}g(y) \,dy\,dx
$$
so that
\begin{equation}
\label{eq:equiv_var1}
\begin{aligned}
\int_{0}^{\infty} & U'(x)e^{U(x)}\left(\int_{-\infty}^{x} g(y)e^{-U(y)}\,dy\right)^2\,dx \\ = &\int_0^\infty  U'(x)e^{-U(x)}\left(\int_{0}^x g(y)\,dy\right)^2\,dx -  \left(\int_{-\infty}^{0} e^{-U(y)}g(y)\,dy\right)^2.
\end{aligned}
\end{equation}
Arguing similarly, one has that
\begin{equation}
\label{eq:equiv_var2}
\begin{aligned}
\int_{-\infty}^{0} & U'(x)e^{U(x)}\left(\int_{-\infty}^{x} g(y)e^{-U(y)}\,dy\right)^2\,dx \\ = &\int_{-\infty}^0  U'(x)e^{-U(x)}\left(\int_{0}^x g(y)\,dy\right)^2\,dx +  \left(\int_{-\infty}^{0} e^{-U(y)}g(y)\,dy\right)^2.
\end{aligned}
\end{equation}
Combining~\eqref{eq:equiv_var1} and~\eqref{eq:equiv_var2} it follows immediately that the expressions for asymptotic variance respectively given by \eqref{eq:asvar-alternative} and \eqref{eq:zigzag-asymptotic-variance} are equal.

\subsection{Proof of Proposition~\ref{prop:langevin-asvar}}
\label{app:langevin-asvar}

Write $P^s$ for the Markov semigroup corresponding to the Langevin diffusion, with generator $A$. By \cite[Corollary 1.9]{KipnisVaradhan1986}, a CLT is satisfied if there exists a constant $c > 0$ such that 
\[ \langle g, f \rangle_{L^2(\pi)} \leq c \langle - A f, f \rangle^{1/2}_{L^2(\pi)} \]
for all $f \in \mathcal D(A)$, where the domain of $A$ is interpreted as corresponding to the domain of the semigroup generator in $L^2(\pi)$. It is sufficient to check this condition for $f$ in the space $C_c^{\infty}(\R)$ of infinitely differentiable functions with compact support, as this is a core for $A$. 
By partial integration on both sides, the above condition then becomes
\[ \langle -\psi, f' \rangle_{L^2(\pi)} \leq c \| f' \|_{L^2(\pi)} \quad \mbox{for all $f \in C_c^{\infty}(\R)$}.\]
which is satisfied for $c = \| \psi\|_{L^2(\pi)}$.
In this case, by \cite[Corollary 1.9]{KipnisVaradhan1986}, the asymptotic variance admits the expression 
\begin{align*}
\widetilde \sigma_g^2 & = 2 \langle \varphi, g \rangle_{L^2(\pi)} = -2 \int_{-\infty}^{\infty} \varphi(x) \left( \frac 1 {\pi(x)} \frac{d}{dx} (\pi(x) \varphi'(x)) \right) \pi(x) \ d x  = 2 \int_{-\infty}^{\infty} (\varphi'(x))^2 \pi(x) \ d x.
\end{align*}
where $\varphi$ satisfies the Poisson equation $A \varphi = -g$. By the Poisson equation for $\varphi$,
\[ \pi(x) \varphi'(x) = \int_{x}^{\infty} \pi(\xi) g(\xi) \ d \xi + c,\]
By a similar argument as in the proof of Lemma~\ref{lem:expression-psi}, using that $\varphi \in \mathcal D(A)$ and hence $\varphi' \in  L^2(\pi)$, it follows that $c = 0$ and hence  $\varphi'(x) = - \psi(x)$.

We now prove the converse.  To this end, suppose that 
\begin{equation}
\label{eq:kipvara_assumption}
	V:= \limsup_{t\rightarrow \infty}\frac{1}{t}\mbox{Var}_{\pi}\left(\int_0^t g(X_s)\,ds\right) = \int_0^\infty \int_{\R} (P^s g(x))^2 \pi(x) \, dx \,ds < \infty,
\end{equation}
where the equality holds due to \cite[Lemma 2.3]{Cattiaux2011a}.
For any $t > 0$ define 
$$
	g_t := -\int_0^t P^s g \,ds.
$$
Note that $g_t \in \mathcal{D}(A)$ and satisfies 
\begin{equation}
\label{eq:new_resolvent}
A g_t  = (I-P^t)g.
\end{equation}  
We follow the approach of \cite[Theorem 3.3]{Cattiaux2011a}. Below, let $f'$ denote $\frac{d}{dx} f$. Given $s \leq t$,
\begin{align*}
\int \left( g_t' -  g_s' \right)^2 \pi \, dx &=  \int_{\R} -A(g_t - g_s)(g_t - g_s)\,\pi(x) \, dx \\
	&=\int_s^t \int_{\R} (P^s g - P^t g)(P^r g) \pi \, dx\,dr\\
	&=\int_s^t \int_{\R} \left\{ (P^{(r+s)/2} g)^2 - (P^{(r+t)/2} g)^2\right\}  \,  \pi \, dx\,dr\\
	&\leq 2\int_{s}^{\infty}\int_{\R} (P^{r} g)^2 \pi \, dx \, dr. 
\end{align*}
It follows that the family $(g'_t)_{t>0}$ is Cauchy in $L^2(\pi)$, so that it strongly converges to a limit $-\eta \in L^2(\pi)$.  The weak formulation of~\eqref{eq:new_resolvent} is given by
\begin{equation}
\label{eq:weak}
	-\left\langle  g_t', v'\right\rangle_{L^2(\pi)} = \langle g, v\rangle_{L^2(\pi)} - \langle P^t g, v\rangle_{L^2(\pi)},\quad v \in C^\infty_{c}(\mathbb{R}).
\end{equation}
We have $\lim_{t \rightarrow \infty} P^t g  = \pi(g) = 0$, so that by dominated convergence $\langle P^t g, v \rangle_{L^2(\pi)} \rightarrow 0$ as $t \rightarrow \infty$, and thus taking the $t\rightarrow \infty$ limit in~\eqref{eq:weak} gives
$$
	\langle \eta, v'\rangle_{L^2(\pi)} = \langle g, v \rangle_{L^2(\pi)},\quad v\in C^\infty_c(\mathbb{R}).
$$
By the definition of $\psi$, we also have for all $v \in C_c^{\infty}(\R)$ that  $ \langle \psi, v' \rangle_{L^2(\pi)} = \langle g, v \rangle_{L^2(\pi)}$, so that
$\langle (\psi - \eta), v' \rangle_{L^2(\pi)} = 0$.
Hence in the sense of distributions, $(\psi - \eta)' = 0$, from which it follows (see e.g.  \cite[Section 21.4]{KolmogorovFomin1975}) that $\eta = \psi + \mathrm{const}$.
In order for $\eta$ to belong to $L^2(\pi)$, by a similar argument as in the proof of  Lemma~\ref{lem:expression-psi}, the constant should be equal to zero and hence $\psi = \eta \in L^2(\pi)$.

\subsection{Proof of Theorem \ref{thm:diffusive}}
\label{sec:proof_diffusive}
In this section we prove Theorem \ref{thm:diffusive}, following the approach of \cite{fontbona2012quantitative}.  To this end, consider the function
\[
f(x,\theta) :=x+\frac{\epsilon}{2}\frac{\theta}{\gamma(x)}-\frac{\epsilon^{2}}{2}\frac{\theta}{\gamma^{2}(x)}\lambda^{0}(x,\theta), \quad (x,\theta) \in E.
\]
%
Using the fact that 
$$\half\theta(\lambda^{0}(x,+1)+\lambda^{0}(x,-1))=\theta\lambda^{0}(x,\theta)-\half(\lambda^{0}(x,1)-\lambda^{0}(x,-1)) = \theta \lambda^0(x,\theta) - \half U'(x),$$ we obtain
\begin{align*}
& L^{\epsilon}f(x,\theta) \\
& =  \theta-\frac{\epsilon}{2}\frac{\gamma'(x)}{\gamma^{2}(x)}-\epsilon\frac{\lambda^{0}(x,\theta)}{\gamma(x)}\theta-\theta+\frac{\epsilon}{2}\frac{1}{\gamma(x)}\theta[\lambda^{0}(x,-1)+\lambda^{0}(x,+1)]+\epsilon^{2}R_{1}(x,\theta),\\
 & = -\frac{\epsilon}{2}\frac{\gamma'(x)}{\gamma^{2}(x)}-\epsilon\frac{\lambda^{0}(x,\theta)}{\gamma(x)}\theta+\epsilon\frac{1}{\gamma(x)}\left[\theta\lambda^{0}(x,\theta)-\frac{1}{2}(\lambda^{0}(x,+1)-\lambda^{0}(x,-1))\right]+\epsilon^{2}R_{1}(x,\theta)\\
 & =  \epsilon b(x)+\epsilon^{2}R_{1}(x,\theta)
\end{align*}
where 
\[ b(x) =  -\half \left( \frac{\gamma'(x)}{\gamma^{2}(x)} + \frac{U'(x)}{\gamma(x)} \right) \]
and where
$$
	R_{1}(x,\theta) = -\frac{\lambda^{0}(x,\theta)}{2\gamma(x)^2}U'(x) + \frac{\lambda^{0}(x,\theta)\gamma'(x)}{\gamma(x)^3} - \frac{\partial_x\lambda^{0}(x,\theta)}{2\gamma(x)^2},
$$
is a remainder term which is measurable and independent of $\epsilon$.  Defining
\begin{align*}
Y^{\epsilon}(t)  := f(X^{\epsilon}(t), \Theta^{\epsilon}(t)) \quad \mbox{and}  \quad 
j^{\epsilon}(t)  :=  \epsilon b(X^{\epsilon}(t)) + \epsilon^2 R_1(X^{\epsilon}(t), \Theta^{\epsilon}(t)),
\end{align*}
it follows (using that $f$ is in the domain of the extended generator, see \cite[Theorem 5.5]{Davis1984}), that 
$$
	M^{\epsilon}(t):=Y^{\epsilon}(t)-\int_{0}^{t}j^{\epsilon}(s) \, ds,
$$
is a local martingale with respect to the filtration $\mathcal{F}^{\epsilon}_t$
generated by $\lbrace Z^{\epsilon}(t) :  t\in[0,T]\rbrace.$ Similarly,
applying the generator to $g(x,\theta) :=f^2(x,\theta)$, we obtain
\begin{align*}
& L^{\epsilon}g(x,\theta)\\
& =2\theta x-2\frac{\gamma(x)}{\gamma(x)}\theta x+2 \epsilon\left(\half \partial_{x} \left(\frac {x}{\gamma(x)} \right)-\frac{\theta x}{\gamma(x)} \left(  \lambda^{0}(x,\theta) - \half (\lambda^{0}(x,+1)+\lambda^{0}(x,-1)) \right)\right) \\
& \quad \quad \quad \quad \quad \quad +\epsilon^{2}R_{2}(x,\theta) \\
& = \epsilon \left( a(x) + 2 x b(x)\right) +\epsilon^{2}R_{2}(x,\theta),
\end{align*}
where $b(x)$ is as above, $a(x) = \frac {1} {\gamma(x)}$, and $R_{2}(x,\theta)$ can be written as $R_{2} = R^{(1)}_{2} + \epsilon R^{(2)}_{2} + \epsilon^2 R^{(3)}_{2}$, where the terms
\begin{align*}
	R^{(1)}_{2}(x, \theta) & = -\frac{|U'(x)|}{2\gamma(x)^2} - \frac{\theta\gamma'(x)}{2\gamma(x)^3}  \\
	& \quad + x\left(\frac{2\theta\lambda^{0}(x,\theta)^2}{\gamma(x)^2} - \frac{U'(x)\lambda^{0}(x,\theta)}{\gamma(x)^2} + 2\frac{\lambda^{0}(x,\theta)\gamma'(x)}{\gamma(x)^3} - \frac{\partial_x\lambda^{0}(x,\theta)}{\gamma(x)^2}\right), \\
	R^{(2)}_{2}(x, \theta) &   = \frac{3}{2}\frac{\lambda^{0}(x, \theta)\gamma'(x)}{\gamma(x)^4} + \frac{|U'(x)|^2}{4\gamma(x)^3} - \frac{\theta\partial_x\lambda^{0}(x,\theta)}{2\gamma(x)^4}, \quad \mbox{and}  \\
	R^{(3)}_{2}(x, \theta) & = \frac{\lambda^{0}(x,-\theta)^2\lambda^{0}(x,\theta)}{4\gamma(x)^4} - \frac{\lambda^0(x,\theta)^3}{4\gamma(x)^4} - \frac{\theta\lambda^0(x,\theta)^2\gamma'(x)}{\gamma(x)^5} + \frac{\theta\lambda^0(x,\theta)\partial_x \lambda^{0}(x,\theta)}{2\gamma(x)^4},
\end{align*}
are measurable and independent of $\epsilon$.  We thus obtain that
\[
N^{\epsilon}(t):=(Y^{\epsilon}(t))^{2}-\epsilon\int_{0}^{t} \left\{ a(X^{\epsilon}(s)) + 2 X^{\epsilon}(s) b(X^{\epsilon}(s))
-\epsilon R_{2}(X^{\epsilon}(s), \Theta^{\epsilon}(t)) \right\} ds,
\]
is a local martingale with respect to the filtration $\mathcal{F}^{\epsilon}_t$. 
We now decompose the square local martingale $(M^{\epsilon}(t))^{2}$ into
a local martingale term and a remainder. To this end, defining $J^{\epsilon}(t)=\int_{0}^{t}j^{\epsilon}(s)\,ds,$
use integration by parts to obtain
\begin{align*}
(M^{\epsilon}(t))^{2}  = & (Y^{\epsilon}(t))^{2}-2J^{\epsilon}(t)Y^{\epsilon}(t)+(J^{\epsilon}(t))^{2}\\
  = & (Y^{\epsilon}(t))^{2}-2\int_{0}^{t}Y^{\epsilon}(s)j^{\epsilon}(s)\,ds-2\int_{0}^{t}J^{\epsilon}(s)\, dY^{\epsilon}(s)+2\int_{0}^{t}J^{\epsilon}(s)j^{\epsilon}(s)\,ds\\
  = & (Y^{\epsilon}(t))^{2}-2\int_{0}^{t}Y^{\epsilon}(s)j^{\epsilon}(s)\,ds-2\int_{0}^{t}J^{\epsilon}(s)\, dM^{\epsilon}(s) \\
  = & N^{\epsilon}(t)-2\int_{0}^{t}J^{\epsilon}(s) \, dM^{\epsilon}(s)+
  \epsilon \int_0^t \{ a(X^{\epsilon}(s)) + 2 X^{\epsilon}(s) b(X^{\epsilon}(s)) - \epsilon R_2(X^{\epsilon}(s), \Theta^{\epsilon}(s)) \} \, d s \\
  &  \quad - 2 \int_0^t \left(X^\epsilon(s) + \frac{\epsilon}{2}\frac{\Theta^{\epsilon}(s)}{\gamma(X^\epsilon(s))}  -\epsilon^2\frac{\Theta^{\epsilon}(s)}{2\gamma(X^\epsilon(s))^2}\lambda^{0}(X^\epsilon(s), \Theta^\epsilon(s))\right) \\
  & \quad \quad \quad \quad \quad \quad  \times \left( \epsilon b(X^{\epsilon}(s)) + \epsilon^2 R_1(X^{\epsilon}(s), \Theta^{\epsilon}(s)) \right)    \ d s \\
  & = N^{\epsilon}(t) - 2 \int_0^t J^{\epsilon}(s) \ d M^{\epsilon}(s) + \epsilon \int_0^t \left\{  a(X^{\epsilon}(s))  + \epsilon R_3(X^{\epsilon}(s), \Theta^{\epsilon}(s)) \right\} \ d s,
\end{align*}
%
where the terms of order $\epsilon^2$ or higher are collected in the remainder term $R_{3}(x,\theta)$. It follows that 
$$
	H^{\epsilon}(t) := (M^{\epsilon}(t))^{2}-\epsilon\int_{0}^{t} \left\{ a (X^{\epsilon}(s))+\epsilon R_{3}(X^{\epsilon}(s),\theta^{\epsilon}(s)) \right\} \,ds
$$
is a local martingale with respect to $\mathcal{F}^{\epsilon}_t.$  Applying the time change $t\rightarrow t/\epsilon$ we see that 
\[ M^{\epsilon}(t/\epsilon) = f(X^{\epsilon}(t/\epsilon), \Theta^{\epsilon}(t/\epsilon)) - \int_0^t  \left\{ b(X^{\epsilon}(s/\epsilon) ) + \epsilon R_1(X^{\epsilon}(s/\epsilon), \Theta^{\epsilon}(s/\epsilon)) \right\} \ d s \]
and
\[ H^{\epsilon}(t/\epsilon) = (M^{\epsilon}(t/\epsilon))^2 - \int_0^t \left\{ a(X^{\epsilon}(s/\epsilon)) + \epsilon R_3(X^{\epsilon}(s/\epsilon), \Theta^{\epsilon}(s/\epsilon)) \right\} \ d s 
\] 
%
are local martingales with respect to the filtration $\mathcal{F}^{\epsilon}_t :=\mathcal{F}_{t/\epsilon}$, $t \geq 0$.
We now verify the conditions of \cite[Theorem VII.4.1]{EthierKurtz2005} to derive the diffusive limit.  To this end, define
\begin{align*}
A^{\epsilon}(t)  & := \int_{0}^{t} a(X^{\epsilon}(s/\epsilon)) +\epsilon R_{3}(X^{\epsilon}(s/\epsilon), \Theta^{\epsilon}(s/\epsilon))\,ds,\\
B^{\epsilon}(t) & :=  -\frac{\epsilon}{2}\frac{\Theta^{\epsilon}(t/\epsilon)}{\gamma(X^{\epsilon}(t/\epsilon))} \\
& \quad \quad 
+ \frac{\epsilon^{2}}{2}\frac{\Theta^{\epsilon}(t/\epsilon)}{\gamma^{2}(X^{\epsilon}(t/\epsilon))}\lambda^{0}(Z^{\epsilon}(t/\epsilon)) 
+ \epsilon  \int_{0}^{t} \left\{ b(X^{\epsilon}(s/\epsilon) ) + \epsilon R_{1}(X^{\epsilon}(s/\epsilon), \Theta^{\epsilon}(s/\epsilon)) \right\} \,ds,
\end{align*}
as well as the stopping time $\tau_{R}^{\epsilon}:=\inf\left\{ t\geq0\,:\,\left|X^{\epsilon}(t/\epsilon)\right|\geq R\mbox{ or }\left|X^{\epsilon}(t/\epsilon-)\right|\geq R\right\}$.  From our assumptions we have that, for each $R\geq0:$
\[
\sup_{t\leq T\wedge\tau_{R}^{\epsilon}}|A^{\epsilon}(t)-\int_{0}^{t} a(X^{\epsilon}(s))\,ds|\leq\epsilon\sup_{t\leq T\wedge\tau_{R}^{\epsilon}}\left|\int_{0}^{t}R_{3}(X^{\epsilon}(s/\epsilon), \Theta^{\epsilon}(s/\epsilon))\,ds\right|\leq\epsilon TK_{R},
\]
and thus converges to $0$ almost surely as $\epsilon\rightarrow 0$.
Similarly
\[
\sup_{t\leq T\wedge\tau_{R}^{\epsilon}}|B^{\epsilon}(t)-\int_{0}^{t}b(X^{\epsilon}(s))\,ds|\rightarrow0,
\]
almost surely as $\epsilon\rightarrow0$. Finally, noting that $X^{\epsilon}(t)$
is continuous, we have that
\[
\lim_{\epsilon\rightarrow0}\mathbb{E}\left[\sup_{t\leq T\wedge\tau_{R}^{\epsilon}}|X^{\epsilon}(t/\epsilon)-X^{\epsilon}(t/\epsilon-)|^{2}\right]=0,
\]
and similarly for every $R \geq 0$,
\[
\lim_{\epsilon\rightarrow0}\mathbb{E}\left[\sup_{t\leq T\wedge\tau_{R}^{\epsilon}}|A^{\epsilon}(t)-A^{\epsilon}(t-)|^{2}\right] = \lim_{\epsilon\rightarrow0}\mathbb{E}\left[\sup_{t\leq T\wedge\tau_{R}^{\epsilon}}|B^{\epsilon}(t)-B^{\epsilon}(t-)|^{2}\right]=0.
\]

The conditions of \cite[Theorem VII.4.1]{EthierKurtz2005} are satisfied and thus it follows that $(X^{\epsilon}(t/\epsilon))_{t \geq 0}$ converges in distribution to the solution of the martingale problem for the operator $(G, \mathcal{D}(G))$, where $\mathcal{D}(G)= C^2_c(\mathbb{R})$ and for $h \in \mathcal D(G)$, 
\begin{align*}
G h(x)  & =  b(x)\partial_{x}h(x) +\frac{1}{2}a(x)\partial_{x}^2 h(x) =  -\frac{1}{2}\left(\frac{\gamma'(x)}{\gamma^{2}(x)}+\frac{U'(x)}{\gamma(x)}\right)\partial_{x}h(x) + \frac{1}{2 \gamma(x)}\partial_{x}^2 h(x).
\end{align*}
 Since the well-posedness of this martingale problem is equivalent to the existence and uniqueness of a weak solution $(\xi(t))_{t \geq 0}$ for~\eqref{eq:limiting_sde}, the proof is complete.

\section{Simulation of the Zig-Zag process}
\label{sec:simulation_app}

In this section we describe some computational methods for simulating the process $Z(t) = (X(t), \Theta(t))$ and use results from previous sections in analyzing these methods. As with the rest of this paper, we shall focus in particular on the one-dimensional case, referring the reader to \cite{BierkensFearnheadRoberts2016} for specifics of the general case.

\subsection{Direct simulation of the switching times}
 Clearly, it is sufficient to be able to simulate the random switching times $(T_i)_{i\in \mathbb{N}}$.  Indeed, given initial conditions $(x, \theta) \in E$ and switching times $(T_i)_{i \in \mathbb{N}}$, the process $Z(t) = (X(t), \Theta(t))$ is defined for all $t \geq 0$ as follows:
$$
\Theta(t) = (-1)^k \theta,\quad t \in [T_k, T_{k+1}),
$$
and 
$$ 
	X(t) = X(T_k) + (t-T_k)\Theta(T_k),\quad t  \in [T_k, T_{k+1}].
$$
Given the state $(X(T_0), \Theta(T_0)) = (x_0, \theta_0)$ at switching time $T_0$,  the next random switching time is given by $T_1 = T_0 + \tau$ where $\tau$ satisfies 
\begin{equation}
\label{eq:switching_cdf}
	\mathbb{P}[\tau > t] = \exp\left(-\int_0^t \lambda(x_0 + s\theta_0, \theta_0)\,ds\right).
\end{equation}
 In the case where $G(t) = \int_0^t \lambda(x_0 + s\theta_0, \theta_0)\,ds$ has an explictly computable generalised inverse 
 $$
 	H(y) = \inf\left\lbrace t \geq 0  :  G(t) \geq y \right\rbrace, \quad y \in [0,1],
 $$
then applying an inverse transformation, the random variable $\tau = H(-\log u)$, $u \sim U[0,1]$ satisfies~\eqref{eq:switching_cdf}. An algorithm for simulating $Z(t)$ based on this approach is detailed in Algorithm \ref{alg:zigzag1}.   

\begin{algorithm}[H]
\renewcommand{\algorithmicrequire}{\textbf{Input:}}
\renewcommand{\algorithmicensure}{\textbf{Output:}}

\caption{Direct Zig-Zag Sampling}
\label{alg:zigzag1}
\begin{algorithmic}[1]
\REQUIRE Initial condition $(x, \theta) \in E$.\\
\ENSURE The event chain $(T_k, X(T_k), \Theta(T_k))_{k=0}^{\infty}$.
\STATE Set $(T_0, X(T_0), \Theta(T_0))=(0, x, \theta)$.
\FOR{$k=0,1,2,\ldots$}
	\STATE{Draw $u \sim U[0,1]$.}
	\STATE{Set $\tau = H(-\log u)$.}
	\STATE{Set \begin{align*}T_{k+1} &= T_{k} + \tau,\\ 
							 X(T_{k+1}) &= X(T_{k}) + \tau \Theta(T_{k}),\\
							 \Theta(T_{k+1})&= -\Theta(T_k).\end{align*}}
\ENDFOR
\end{algorithmic}
\end{algorithm} 

The computational cost of Algorithm \ref{alg:zigzag1} clearly depends on the switching intensity, i.e. a Zig-Zag sampler with a higher switching intensity will require more computational cost to be simulated up to a fixed time $T$.   Indeed, while the Zig-Zag sampler does not reject samples like a Metropolis-Hastings scheme,  frequent switching will cause the process $Z(t)$ to mix slowly.


\begin{example}[Sampling from a Gaussian Distribution $\mathcal{N}(0,\nu^2)$]
A straightforward calculation shows that, given $(x, \theta)\in E$ the generalised inverse of $G(t) = \nu^{-2}\int_0^t \max(0, x_0 + s\theta)\,ds$ can be written explicitly as 
$$
	H(z) = \begin{cases}
				-\theta x + \sqrt{2\nu^2 z}\quad &\mbox{ if } \theta x \leq 0	\\
				-\theta x + \sqrt{x^2 + 2 z \nu^2} &\mbox{ otherwise}.
		   \end{cases},
$$
for $z > 0$.   In this case, the average switching rate is then given by $N_S = (2\pi \nu^2)^{-1/2}$.  


\end{example} 

\begin{example}[Sampling from a Student t-distribution] 
It is also possible to sample from a Student t-distribution with $\nu$ degrees of freedom, i.e.
\begin{equation} \label{eq:student-t-density}	\pi(x) \propto ( 1 + \frac{x^2}{\nu})^{-\frac{\nu+1}{2}},
\end{equation}
using the direct Zig-Zag sampling approach.  For this distribution, the canonical switching function is given by
\begin{equation}
\label{eq:switching_studentt}
	\lambda(x, \theta) =\begin{cases}
							\frac{(\nu+1)\theta x }{\nu + x^2}\quad &\mbox{ if } \theta x \geq 0,\\
							0	\quad &\mbox{ otherwise}.
						\end{cases}
\end{equation}
Given $(x, \theta) \in E$, the generalised inverse of  $G(t)=\int_0^{t} \lambda(x + \theta s, \theta)\,ds$ can be written as
$$
H(z) = \begin{cases}
				-\theta x + \sqrt{\left(-1 + \exp\left(\frac{2z}{1+\nu}\right)\right)\nu}\quad &\mbox{ if } \theta x \leq 0	\\
				-\theta x + e^{\frac{z}{1+\nu}}\left(x^2 + \nu - \nu\exp\left(-\frac{2z}{1+\nu}\right)\right)^{1/2} &\mbox{ otherwise}.
		   \end{cases}.
$$
The average switching rate is equal to the normalization constant for~\eqref{eq:student-t-density}, i.e. $N_S = \frac{\Gamma((\nu+1)/2)}{\sqrt{\nu \pi} \Gamma (\nu/2)}$.  The resulting process will be ergodic with respect to the target distribution $\pi$, for all $\nu > 0$. Conditions under which a central limit theorem holds will be studied in  Section \ref{sec:clt_examples}.  
\end{example}
\subsection{Sampling with Poisson Thinning}
In general we will not be able to compute the generalized inverse of $G$ explictly. In many cases however, it is possible to obtain an upper bound $\Lambda(t ; x, \theta_0)$ such that  $m(t) := \lambda(x_0 + \theta_0 t, \theta_0) \leq \Lambda(t; x_0, \theta_0)$, for all $t \geq 0$, $(x_0,\theta_0) \in E$, and where $\Lambda(t)$ has an explicitly computable inverse $\widetilde{H}$.  In this case, one can simulate the random switching times using a standard Poisson thinning approach \cite{lewis1979simulation}.   Using the upper bound $\Lambda(t; x_0, \theta_0)$ a candidate switching time $T_1 = T_0 + \tau$ is generated, such that 
$$
	\mathbb{P}[\tau > t] = \exp\left(-\int_0^t \Lambda(s; x_0, \theta_0)\,ds\right).
$$
A switch (i.e. $\Theta_1 = -\Theta_0)$ will occur at $T_1$ with probability $m(t_1)/\Lambda(t_1; x_0, \theta_0)$.  An algorithm for sampling $Z(t)$ based on this approach is detailed in Algorithm \ref{alg:zigzag2}.

\begin{algorithm}[H]
\renewcommand{\algorithmicrequire}{\textbf{Input:}}
\renewcommand{\algorithmicensure}{\textbf{Output:}}
\caption{Zig-Zag Sampling with thinning\label{alg:zigzag2}}
\begin{algorithmic}[1]
\REQUIRE Initial condition $(x, \theta) \in E$.\\
\ENSURE The event chain $(T_k, X(T_k), \Theta(T_k))_{k=0}^{\infty}$.
\STATE Set $(T_0, X(T_0), \Theta(T_0))=(0, x, \theta)$.
\FOR{$k=0,1,2,\ldots$}
	\STATE{Draw $u \sim U[0,1]$.}
	\STATE{Set $\tau = \widetilde{H}(-\log u)$.}
	\STATE{Set \begin{align*}T_{k+1} &= T_{k} + \tau,\\ 
							 X(T_{k+1}) &= X(T_{k}) + \tau \Theta_{k}.\end{align*}}
	\STATE{With probability $\frac{\lambda(X(T_{k+1}), \Theta_k)}{\Lambda(T_{k+1}; X(T_k), \Theta(T_k))},$   set $\Theta_{k+1} = -\Theta_{k}$ otherwise $\Theta_{k+1} = \Theta_k$.}
\ENDFOR
\end{algorithmic}
\end{algorithm} 

Identifying such a computable upper bound is highly problem specific,  however we can highlight two frequently arising scenarios where upper bounds can be easily constructed.  

\begin{enumerate}
\item Suppose that the log density is globally bounded, i.e. $|U'(x)| \leq K$, for all $x \in \mathbb{R}$.  In this case, we can simply choose $\Lambda(t; x, \theta) = K$.  This case arises in particular for heavy tailed distributions, for example the Cauchy distribution with  $\pi \propto (1 + x^2)^{-1}$.
\item Suppose instead that the second derivative of the log density is absolutely bounded,  i.e. $|U''(x)| \leq L$, for all $x \in \mathbb{R}$.  In this case we have
$$
	\theta U'(x + \theta t) = \theta U'(x) + \int_0^t U''(x + \theta s)\,ds,
$$
so that
$$
	\lambda(x + \theta t, \theta) \leq \max\left(0, \theta U'(x) + L t\right):= \Lambda(t; x, \theta).
$$
For fixed $(x,\theta) \in E$, the integrated intensity function $G(t) = \int_0^t \Lambda(s; x, \theta)\,ds$ has generalised inverse 
$$
\widetilde{H}(z) = \begin{cases}
				\frac{- \theta U'(x) + {\sqrt{2L z}}}{L}\quad &\mbox{ if } \theta U'(x) \leq 0	\\
				\frac{-\theta U'(x) + \sqrt{2L z + U'(x)^2}}{L} &\mbox{ otherwise}.
		   \end{cases}.$$ 
This case arises naturally in various Bayesian inference problems,  in particular logistic regression, see \cite[Section 6.5]{BierkensFearnheadRoberts2016}.
\end{enumerate}

The number of switches that occur in a given time interval will  depend on the intensity function $\Lambda(t; x, \theta)$, and clearly, a poor choice  of this upper bound will cause Algorithm \ref{alg:zigzag2} to undergo many potential switch events which are rejected.   In particular, if the process $Z(t)$ is in stationarity, then the average switching rate will always be higher or equal to that of the direct scheme described in Algorithm \ref{alg:zigzag1}.

\subsection{Computing ergodic averages}
While the event chain $(X(T_k), \Theta(T_k))_{k=0}^\infty$ defines a Markov chain,  it will not be ergodic with respect to the target distribution $\pi$.   To compute an ergodic average for a given observable $f$, the entire continuous time realisation must be used as follows
$$
	\pi_T(f) = \frac{1}{T}\int_0^T f(X_s)\,ds.
$$
Since the Zig-Zag process moves linearly between switches, this can be decomposed into a sum of integrals over straight lines.  Indeed, for $T = T_K$, for some $K$ we have
\begin{equation}
\label{eq:ergodic_average}
	\pi_{T_K}(f) = \frac{1}{T_K}\sum_{k=0}^{K-1} \int_{X({T_k})}^{X({T_{k+1}})}f(x)\,dx = \frac{1}{\sum_{k=0}^{K-1} \tau_k}\sum_{k=0}^{K-1}\int_0^{\tau_k}f(X(T_k) + \Theta(T_k)s)\,ds,
\end{equation}
where $\tau_k = T_{k+1} - T_{k}$.  In many cases,  the integral in~\eqref{eq:ergodic_average} can be computed exactly.  For example, first and $p^{th}$ moment can be computed ergodically via the expressions
$$
	\pi_{T_K}(x) =  \frac{1}{\sum_{k=0}^{K-1} \tau_k}\sum_{k=0}^{K-1} \tau_k X(T_k) + \frac{1}{2}\Theta(T_k)\tau_k^2,
$$
and 
$$
	\pi_{T_K}(x^p) = \frac{1}{\sum_{k=0}^{K-1} \tau_k}\sum_{k=0}^{K-1} \Theta(T_k)\frac{-X(T_k)^{1+p} + (\tau_k \Theta(T_k) + X(T_k))^{1+p}}{(1 + p)}
$$
respectively.  For more complicated observables it will not be possible to evaluate~\eqref{eq:ergodic_average} analytically, and one must resort to some form of quadrature scheme, for example Euler or other higher order methods.


\ack

The authors acknowledge the EPSRC for support under grants EP/D002060/1,  EP/K014463/1 (Joris Bierkens), EP/J009636/1, EP/L020564/1 (Andrew Duncan) as well as EP/K009788/2 (both authors). Furthermore we acknowledge support of the Lloyds Registry Foundation through the Alan Turing Institute. 
We are grateful to the referee and associate editor for useful suggestions with regards to the mathematical exposition, which have certainly helped to improve this paper.

%
%
%
%


\end{document}